\newcolumntype{C}[1]{>{\centering\arraybackslash}p{#1}}
\def\red{\textcolor{red}}
\def\green{\textcolor{green}}
\def\abs#1{\ensuremath{\lvert #1\rvert}}
\DeclareRobustCommand\sfrac[1]{\@ifnextchar/{\@sfrac{#1}}%
                                            {\@sfrac{#1}/}}
\def\@sfrac#1/#2{\leavevmode\scalebox{.9}{\kern.1em\raise.5ex
         \hbox{$\m@th\mbox{\fontsize\sf@size\z@
                           \selectfont#1}$}\kern-.1em
         /\kern-.15em\lower.25ex
          \hbox{$\m@th\mbox{\fontsize\sf@size\z@
                            \selectfont#2}$}}}
\DeclareRobustCommand\numfrac[1]{\@ifnextchar/{\@numfrac{#1}}%
                                            {\@numfrac{#1}}}
\def\@numfrac#1{\leavevmode \hbox{$\m@th\mbox{\fontsize\sf@size\z@
                           \selectfont#1}$}}
\newcommand{\Description}[1]{}      
\newcommand{\nat}{\mathbb N}
\newcommand{\tuple}[1]{\langle #1 \rangle}
\newcommand{\dist}{{\mathcal D}}
\let\oldupharpoonright\upharpoonright
\renewcommand{\upharpoonright}{\!\oldupharpoonright\!}
\renewcommand{\u}{{\sf u}}    
\renewcommand{\k}{{\sf k}}    
\renewcommand{\r}{{\sf r}}    
\newcommand{\G}{{\mathcal G}}      
\renewcommand{\H}{{\mathcal H}}      
\newcommand{\M}{{\mathcal M}}      
\newcommand{\C}{{\mathcal C}}      
\renewcommand{\P}{{\mathcal P}}      
\newcommand{\R}{{\sf R}}      
\newcommand{\U}{{\mathcal U}}      
\newcommand{\opnu}{\nu}      
\newcommand{\Supp}{{\sf Supp}}
\newcommand{\Pref}{{\sf Pref}}
\newcommand{\Play}{{\sf Play}}
\newcommand{\straa}{\sigma} \newcommand{\Straa}{\Sigma}
\newcommand{\strab}{\tau} \newcommand{\Strab}{\Theta}
\newcommand{\Prb}{\mathrm{Pr}}
\newcommand{\Cyl}{\mathrm{Cyl}}
\newcommand{\as}{\mathrm{as}}
\newcommand{\Last}{{\sf Last}}
\newcommand{\sink}{{\sf sink}}
\newcommand{\win}[2]{\langle \! \langle 1 \rangle \! \rangle_{\mathit{#2}}^{\mathit{#1}}}
\newcommand{\winsure}[1]{\langle \! \langle 1 \rangle \! \rangle_{\mathit{sure}}^{\mathit{#1}}}
\newcommand{\winas}[1]{\langle \! \langle 1 \rangle \! \rangle_{\mathit{almost}}^{\mathit{#1}}}
\newcommand{\Act}{{\sf A}}
\newcommand{\CPre}{{\sf CPre}}
\newcommand{\APre}{{\sf APre}}
\newcommand{\PosPre}{{\sf PosPre}}
\newcommand{\Attr}{{\sf Attr}}
\newcommand{\PosAttr}{{\sf PosAttr}}
\newcommand{\Reject}{{\sf Reject}}
\newcommand{\expand}{{\rm expand}}
\let\epsilon\varepsilon
\let\emptyset\varnothing
\DeclareMathOperator*{\argmax}{argmax}
\definecolor{gray50}{gray}{0.5}
\def\gray50#1{\textcolor{gray50}{#1}}
\newtheorem{theorem}{Theorem}
\newtheorem{lemma}{Lemma}
\newenvironment{linenomath*}{}{}
\begin{document}

\title{{\bf Stochastic Games with Synchronizing Objectives}}
\date{}
\author{Laurent Doyen\\CNRS \& LMF, ENS Paris-Saclay}

\maketitle
\begin{abstract}
We consider two-player stochastic games played on a finite graph
for infinitely many rounds. Stochastic games generalize both Markov 
decision processes (MDP) by adding an adversary player, and two-player 
deterministic games by adding stochasticity. 
The outcome of the game
is a sequence of distributions over the states of the game graph.
We consider synchronizing objectives, which require the probability
mass to accumulate in a set of target states, either always, once, 
infinitely often, or always after some point in the outcome sequence; 
and the winning modes
of sure winning (if the accumulated probability is equal to~$1$)
and almost-sure winning (if the accumulated probability is arbitrarily close to~$1$).

We present algorithms to compute the set of winning distributions 
for each of these synchronizing modes, 
showing that the corresponding decision problem is PSPACE-complete
for synchronizing once and infinitely often, and PTIME-complete
for synchronizing always and always after some point. 
These bounds are remarkably in line with the special case of MDPs, 
while the algorithmic solution and proof technique are considerably 
more involved, even for deterministic games.
This is because those games have a flavour of imperfect information,
in particular they are not determined and randomized strategies need
to be considered, even if there is no stochastic choice in the game graph. 
Moreover, in combination with stochasticity in the game graph, 
finite-memory strategies are not sufficient in general (for synchronizing 
infinitely often). 


\end{abstract}


\section{Introduction}\label{sec:intro}

Stochastic games are a central model to solve synthesis problems
for reactive systems~\cite{Buc62,Chu63}, which consist of a nonterminating
finite-state program receiving input from an arbitrary, possibly stochastic, 
environment.
The goal of synthesis is to construct a program that satisfies 
with the largest possible probability a given logical specification 
regardless of the behaviour of the environment. 

Synthesis naturally reduces to solving a two-player stochastic game
on a graph, where the logical specification defines the objective 
of the game, as a language of infinite words, representing the set 
of infinite paths through the graph that are winning for one player. 
A wealth of results are known for stochastic games with \emph{perfect information},
where the players are fully informed about the state of the game graph~\cite{CH12},
such as Martin's determinacy result and the existence of pure (non-randomized) 
$\epsilon$-optimal strategies for Borel objectives~\cite{Mar98}, as well as decidability for 
$\omega$-regular objectives, see Chatterjee and Henzinger's survey~\cite{CH12} 
for details and references.

The assumption of perfect information is not realistic in systems consisting
of several components where each component has no access to the internal
state of the other components. Models of games with  \emph{imperfect information}
are notoriously more complicated to solve~\cite{Reif84}, and, combined with the probabilistic and adversarial 
aspects of stochastic games in general lead to undecidability, 
even for the simple class of reachability objectives.
For instance, distributed games are undecidable, even without
stochasticity~\cite{PR90,Schewe14}, and partial-observation games 
are undecidable, even without adversary, for quantitative analysis 
of finitary objectives~\cite{PAZBook,MHC03} and 
for qualitative analysis of infinitary objectives~\cite{BGB12}; 
randomized strategies are more powerful than pure strategies~\cite{CDHR07}, 
and determinacy no longer holds~\cite{BGG17}.


Recent works proposed new decidable models with a flavour of imperfect 
information, for the control of a large population 
of identical processes, modeled as a finite-state machine.
The global state of the game
is a distribution over the local states of the processes,
and the specification describes which sequences of distributions
are winning. The distributions can be discrete~\cite{AAGT12,CFO20} or 
continuous~\cite{KVAK10,AGV18}. The control may be applied uniformly, 
independently of the local state of each process, as in non-deterministic~\cite{BDGG17}, 
and probabilistic automata~\cite{CFO20}, or it may depend on the local history
of states, as in Markov decision processes (MDPs)~\cite{AGV18,DMS19}. 
In both cases imperfect information arises: either
because the control is global, thus not aware of the local state
of individual processes,
or because the control is local, thus not aware of the global states on which 
the specification is defined.

In this paper, we consider the control problem for a continuous population
of processes modeled as a stochastic game with local control,
and objective defined by finitary and infinitary synchronization properties~\cite{BDGG17,CFO20,DMS19}.
Informally, synchronization happens in a sequence of distributions
when (almost) all processes are synchronously in a set of designated 
target states, 
that is when, either for $\epsilon = 0$, or for all $\epsilon > 0$,
there is a distribution in the sequence where 
the probability mass in the target states is at least $1-\epsilon$.
We consider finitary synchronization objectives where synchronization 
should happen once or forever along the sequence of distributions,
called respectively \emph{eventually} and \emph{always} synchronizing;
and infinitary objectives where synchronization 
should happen infinitely often or eventually forever,
called respectively \emph{weakly} and \emph{strongly} synchronizing~\cite{DMS19}.
We distinguish the \emph{sure} winning mode for $\epsilon = 0$, 
and the \emph{almost-sure} winning mode for $\epsilon \to 0$
(where the synchronization objective must be satisfied for all $\epsilon > 0$).

The most interesting and challenging objectives are eventually 
and weakly synchronizing, analogous to reachability and B\"uchi objectives.
For those objectives, it is known that finite memory is not sufficient for almost-sure winning, 
already in MDPs~\cite{DMS19}, and determinacy does not hold. Therefore, 
both the construction of a winning strategy (to show that player~$1$ is almost-sure winning), 
and the construction of a spoiling strategy for the adversary 
(to show that player~$1$ is not almost-sure winning) are non trivial. 
In particular, the traditional approach of constructing a winning strategy
for player~$2$ for the complement of the objective to obtain a spoiling
strategy cannot work. The construction of a spoiling strategy 
must be carried out after fixing an arbitrary infinite-memory 
strategy for player~$1$, which is a substantial complication.
This is the main technical challenge to prove the
correctness of our algorithm. We show that the control problem for
eventually and weakly synchronizing is PSPACE-complete. For always and 
strongly synchronizing, a simple reduction to traditional safety and 
coB\"uchi stochastic games induces a polynomial-time solution.

\smallskip\noindent{\em Applications and Related Works.}
The main interest of this contribution lies in the combination of 
adversarial, stochastic, and infinitary aspects with a flavour of 
imperfect information in a decidable model. 
The works on (discrete) parameterized control considered finitary 
synchronization objectives (reachability of a synchronized distribution), 
either with an adversary~\cite{BDGG17}, or with stochasticity~\cite{CFO20}, 
but not with both. 
With continuous distributions, the central model that has been studied is MDPs,
thus with stochasticity but no adversary, either for finitary~\cite{AGV18} or infinitary
objectives~\cite{AAGT12,DMS19}.
The solution of the control problem studied in this paper is known for MDPs~\cite{DMS19}.

Like in all the above previous works, the main limitation of this population model
is the absence of communication between the processes. While communication plays
a central role in distributed programming applications~\cite{Esp14}, self-organization and 
coordinated behaviour can emerge from large crowds of individuals with limited sensing 
ability, without signaling, and without centralized control~\cite{Cou09,CKFL05}. 
The highly developed local control necessary to achieve a complex collective 
behaviour may emerge naturally~\cite{Isa12} or be engineered~\cite{Mye16}.

The line of work followed in 
this paper can also be viewed as an attempt to propose decidable models that are
still rich enough to describe interesting natural phenomena.
Many systems in natural computing exhibit several instances of the same
anonymous process (without pre-defined identity or hierarchy), 
from particle physics to flock of birds.
Examples of biological systems such as yeast~\cite{BDGG17,AGV18}, 
and simple chemical systems~\cite{KVAK10} illustrate the synthesis applications
of this model. The same principle underlies synthetic biology where
a local control program is executed in every instance of the process~\cite{EKVY00,NDS16}.  
In more complex systems, the computational mechanisms 
behind local decision-making towards global behaviours have multiple origins 
that require more sophisticated computational models~\cite{CKFL05}. 

\section{Definitions}\label{sec:def}


A \emph{probability distribution} on a finite set~$S$ is a
function $d : S \to [0, 1]$ such that $\sum_{s \in S} d(s)= 1$. 
The \emph{support} of~$d$ is the set $\Supp(d) = \{s \in S \mid d(s) > 0\}$. 
We denote by $\dist(S)$ 
the set of all probability distributions 
on~$S$. 

Given a set $T\subseteq S$, let $d(T) = \sum_{s \in T} d(s)$.
For $T \neq \emptyset$, the \emph{uniform distribution} on $T$ assigns probability 
$\frac{1}{\abs{T}}$ to every state in $T$.
Given $s \in S$, we denote by $1_s$ the \emph{Dirac distribution} on~$s$ that 
assigns probability~$1$ to~$s$ (which we often identify with $s$). 

\smallskip\noindent{\em Stochastic games.}
A \emph{two-player stochastic game} (or simply, a game) 
$\G = \tuple{Q, \Act, \delta}$ consists 
of a finite set $Q$ of states, 
a finite nonempty set $\Act$ of actions, 
and a probabilistic transition function $\delta: Q \times \Act \times \Act \to \dist(Q)$.   
We typically denote by $n = \abs{Q}$ the size of the state space, 
and by $\eta$ the smallest positive probability in the transitions of $\G$.

From an initial state $q_0 \in Q$, the game is played in (infinitely many) rounds 
as follows. Each round starts in a state $q_i \in Q$, the first round
starts in the  initial state $q_0$.
In each round, player~$1$ chooses an action $a \in \Act$, 
and then given $a$, player~$2$ chooses an action $b \in \Act$. 
Given the state $q_i$ in which the round started, 
the next round starts in $q_{i+1}$ with probability $\delta(q_i,a,b)(q_{i+1})$.
Note that the game is turn-based as player~$2$ sees 
the action chosen by player~$1$ before playing.

A state $q$ is a \emph{player-$1$ state} if $\delta(q,a,b) = \delta(q,a,b')$ for
all $a,b,b' \in \Act$, and it is a \emph{player-$2$ state} if $\delta(q,a,b) = \delta(q,a',b)$ 
for all $a,a',b \in \Act$. We write $\delta(q,a,-)$ or $\delta(q,-,b)$
to emphasize and recall that $q$ is a player-$1$ or player-$2$ state.
In figures, player-$1$ states are shown as circles,
player-$2$ states as 
boxes (except in \figurename~\ref{fig:pre} where
boxes emphasize the action choices of player~$2$ within a round).
The value of the transition probabilities are not shown on figures,
but diamonds represent the probabilistic choices (the main results of 
this paper are independent of the exact value of transition probabilities).

Classical special cases of stochastic games include Markov decision processes (MDPs),
also called one-player stochastic games, where all states are player-$1$ states;
adversarial MDPs where all states are player-$2$ states; and deterministic games
where $\delta(q,a,b)$ is a Dirac distribution for all $q \in Q$ and all $a,b \in \Act$.
%
Note that it is not important that the action set $\Act$ is the same for both players.
For example, given two nonempty action sets $\Act_1$ and $\Act_2$, we can simulate a transition 
function $\delta_{12}: Q \times \Act_1 \times \Act_2 \to \dist(Q)$, by
defining $\Act = \Act_1 \cup \Act_2$ and, considering fixed actions $a_0 \in \Act_1$, and 
$b_0 \in \Act_2$, by defining $\delta(q,a,b) = \delta_{12}(q,a',b')$ 
where $a' = a$ if $a \in \Act_1$, and $a' = a_0$ otherwise, 
and $b' = b$ if $b \in \Act_2$, and $b' = b_0$ otherwise. 
A \emph{play} in $\G$ is an infinite sequence 
$\pi = q_0 \, a_0b_0 \, q_1 a_1b_1 \, q_2 \ldots \in (QAA)^{\omega}$ 
such that \mbox{$\delta(q_i,a_i,b_i)(q_{i+1}) > 0$} for all $i \geq 0$.
The prefix $q_0 \, a_0b_0 \, q_1 \ldots q_k$ of the play $\pi$ is denoted by $\pi(k)$, 
its length is $\abs{\pi(k)} = k$ and its last element is $\Last(\pi(k)) = q_k$.
The set of all plays in $\G$ is denoted by $\Play(\G)$,
and the set of corresponding finite prefixes (or histories) is denoted by $\Pref(\G)$.

\smallskip\noindent{\em Strategies.}
A \emph{strategy} for player~$1$ in $\G$ is a function 
$\straa: \Pref(\G) \to \dist(\Act)$, and for player~$2$ it is a function
$\strab: \Pref(\G) \times \Act \to \dist(\Act)$.
We denote by $\Straa$, and $\Strab$, the sets of all player-$1$ strategies,
and all player-$2$ strategies, respectively.
A strategy $\straa$ for player~$1$ is \emph{pure} if $\straa(\rho)$
is a Dirac distribution for all $\rho \in \Pref(\G)$; 
it is \emph{counting} if $\abs{\rho} = \abs{\rho'}$ and $\Last(\rho) = \Last(\rho')$ 
implies $\straa(\rho) = \straa(\rho')$ for all $\rho,\rho' \in \Pref(\G)$;
and it is \emph{memoryless} if $\Last(\rho) = \Last(\rho')$ 
implies $\straa(\rho) = \straa(\rho')$ for all $\rho,\rho' \in \Pref(\G)$.
We view deterministic strategies for player~$1$ as functions $\straa: \Pref(\G) \to \Act$,
and counting strategies as functions $\straa: \nat \times Q \to \dist(\Act)$.



A strategy $\straa$ (for player~$1$) uses \emph{finite memory}
if there exists a right congruence $\approx$ of finite index (i.e., that 
can be generated by a finite-state transducer)
over $\Pref(\G)$ 
such that $\rho \approx \rho'$ implies $\straa(\rho) = \straa(\rho')$. 
We omit analogous definitions of pure, counting, memoryless, 
and finite-memory strategies for player~$2$.

\smallskip\noindent{\em State-based objectives.}
The traditional view is to consider the semantics of probabilistic systems
as a probability distribution over sequences (of interleaved states and actions),
i.e., over plays. 



We denote by $\Prb_{d_0}^{\straa,\strab}$ the standard probability measure 
on the sigma-algebra over the set of (infinite) plays, 
generated by the cylinder sets spanned by the (finite) prefixes of plays~\cite{BK08}.
Given a prefix $\rho = q_0 \, a_0b_0 \, q_1 \ldots q_k$, the cylinder set
$\Cyl(\rho) = \{\pi \in \Play(\G) \mid \pi(k) = \rho \}$ has probability:
\begin{linenomath*}
$$\Prb_{d_0}^{\straa,\strab}(\Cyl(\rho)) = d_0(q_0) \cdot  
\prod_{i=0}^{k-1} \straa(\rho(i))(a_i) \cdot \strab(\rho(i),a_i)(b_i) \cdot \delta(q_i,a_i,b_i)(q_{i+1}).$$
\end{linenomath*}
We say that $\rho$ is \emph{compatible} with $\straa$ (from $d_0$) if 
$\Prb_{d_0}^{\straa,\strab}(\Cyl(\rho)) > 0$ for some player-$2$ strategy $\strab$.

State-based objectives, in this traditional semantics, are sets of plays. 
We consider the following state-based objectives, expressed by LTL formulas~\cite{BK08}
where $T \subseteq Q$ is a set of target states:
the reachability and safety objectives $\Diamond T$ and $\Box T$, 
their bounded variants $\Diamond^{=k}\, T$,  $\Diamond^{\leq k}\, T$, and $\Box^{\leq k}\, T$ (where $k \in \nat$), 
and the coB\"uchi objective $\Diamond \Box T$.
As each of the above objectives $\varphi$ is a measurable set,
the probability $\Prb_{d_0}^{\straa,\strab}(\varphi)$ that $\varphi$
is satisfied along a play with initial distribution $d_0$ and strategies $\straa$
for player~$1$ and $\strab$ for player~$2$ is well defined~\cite{Vardi-focs85}.
In particular, we say that player~$1$ is \emph{almost-sure} winning
from an initial distribution $d_0$
for a state-based objective $\varphi$ if he has a strategy to win with 
probability $1$, that is $\exists \straa \in \Straa \cdot \forall \strab \in \Strab:$
$\Prb_{d_0}^{\straa,\strab}(\varphi) = 1$.

\smallskip\noindent{\em Distribution-based objectives.}
An alternative view is to consider probabilistic systems as generators 
of sequences of probability distributions (over states)~\cite{KVAK10}. 
We denote by $\G_{d_0}^{\straa,\strab}$ the \emph{outcome sequence} $d_0, d_1, \ldots$ 
where $d_i \in \dist(Q)$ is, intuitively, the probability distribution 
over states after $i$ rounds defined, for all $q \in Q$, by:
\begin{linenomath*}
$$d_i(q) = \Prb_{d_0}^{\straa,\strab}(\Diamond^{=i} \{q\}) = \sum_{\begin{array}{c} {\scriptstyle \rho\,\in \Pref(\G)}\\[-3pt] {\scriptstyle \abs{\rho} = i} \\[-3pt] {\scriptstyle \Last(\rho) = q}\end{array}} \Prb_{d_0}^{\straa,\strab}(\Cyl(\rho)).$$  
\end{linenomath*}

For a Dirac distribution $d_0 = 1_q$, we often write $\G_{q}^{\straa,\strab}$
instead of $\G_{1_q}^{\straa,\strab}$. We also sometimes omit the subscript 
$d_0$ when the initial distribution is clear from the context.
We denote by $\G_{d_0}^{\straa,\strab}(T)$ the sequence of numbers 
$d_0(T), d_1(T), \ldots$.

%


Distribution-based objectives, in this alternative semantics, are sets of infinite sequences
of distributions over states. In particular, given a set $T \subseteq Q$ of target states, 
\emph{synchronizing objectives} informally require that the probability mass in $T$ 
tends to $1$ (or is equal to $1$) in a sequence $(d_k)_{k \in \nat}$, 
in either all, some, infinitely many, or all but finitely many positions~\cite{DMS19}.
%
%
For $0 \leq \epsilon \leq 1$, we say that a sequence $\bar{d} = d_0 d_1 \dots$ of probability distributions is,
always, eventually, weakly, or strongly $(1-\epsilon)$-synchronizing in $T$ if $d_i(T) \geq 1-\epsilon$,
respectively, for all $i \geq 0$, for some $i \geq 0$, for infinitely many $i$'s, or 
for all but finitely many $i$'s.

For each synchronizing mode $\lambda \in \{always, \penalty-5000 event, \penalty-5000 weakly, \penalty-1000 strongly\}$, 
we consider \emph{winning modes} that require either that $\epsilon$ equals $0$ (sure winning mode), or that 
$\epsilon$ tends to~$0$ (almost-sure winning mode).

We say that player~$1$ is:
\begin{itemize}
\item \emph{sure} winning
for a synchronizing mode $\lambda$ in $T$ from an initial distribution $d_0$  if he has a strategy
to ensure $1$-synchronizing in $T$, or $\exists \straa \in \Straa \cdot \forall \strab \in \Strab:$
$\G_{d_0}^{\straa,\strab}$ is $1$-synchronizing in $T$ in mode $\lambda$.

\item \emph{almost-sure} winning
for a synchronizing mode $\lambda$ in $T$ from an initial distribution $d_0$ if he has a strategy
to ensure $(1-\epsilon)$-synchronizing in $T$ for all $\epsilon>0$, 
or $\exists \straa \in \Straa \cdot \forall \strab \in \Strab \cdot \forall \epsilon > 0:$
$\G_{d_0}^{\straa,\strab}$ is $(1-\epsilon)$-synchronizing in $T$ in mode $\lambda$.
\end{itemize}


We denote by $\winsure{\lambda}(\G,T)$ (or simply $\winsure{\lambda}(T)$ when 
the game $\G$ is clear from the context) the set of distributions $d$ from which 
player~$1$ is sure winning for synchronizing mode $\lambda$ in $T$;
we define analogously the set $\winas{\lambda}(\G,T)$, and
we say that player~$1$ is (sure or almost-sure) winning from $d$, or that
$d$ is (sure or almost-sure) winning. 
If $d \not\in \winas{\lambda}(\G,T)$, we say that player~$2$ can \emph{spoil}
player~$1$ from $d$ for almost-sure synchronizing in mode $\lambda$.

It immediately follows from the definitions that 
for all $\lambda \in \{always, event, weakly, strongly\}$, 
and for all $\mu \in \{sure, \penalty-1000 almost\}$:
\begin{itemize}
\item $\win{always}{\mu}(T) \subseteq \win{strongly}{\mu}(T) \subseteq \win{weakly}{\mu}(T) \subseteq \win{event}{\mu}(T)$, and 
\item $\winsure{\lambda}(T) \subseteq \winas{\lambda}(T)$.
\end{itemize}

In general, these inclusions cannot be strengthened to equality
even for MDPs~\cite{DMS19}, except for always synchronizing 
where we show that $\winsure{always}(T) = \winas{always}(T)$ 
holds in stochastic 
games (Lemma~\ref{lem:always-sure-as} in Section~\ref{sec:other}). 

We are interested in computing the sets $\winsure{\lambda}(\G,T)$ and 
$\winas{\lambda}(\G,T)$ for the four synchronizing modes 
$\lambda \in \{always, event, weakly, strongly\}$,
which we generically call \emph{winning regions}. 
It is sufficient to have an algorithm that computes the set of states $q$ such 
that $1_q$ is winning: to know if a distribution $d$ is winning, consider
a new state $q_d$ with stochastic transitions $\delta(q_d,a,b)(q) = d(q)$ 
for all $q \in Q$, and all $a,b \in \Act$.
We consider the \emph{membership problem}, which is to decide,
given a game $\G$, a set $T$, and a state $q$, whether
$1_q \in \winsure{\lambda}(\G,T)$ (resp., whether $1_q \in \winas{\lambda}(\G,T)$).

\smallskip\noindent{\em Superposition of strategies.}
We use the following property in MDPs: 
given two strategies $\straa_1$, $\straa_2$ in an MDP $\M$, 
for all $0 \leq \alpha \leq 1$,
there exists a strategy $\straa$ such that $\M^{\straa} = 
\alpha \cdot \M^{\straa_1} + (1-\alpha) \cdot \M^{\straa_2}$ (for MDPs, 
we omit the strategy of the second player in all notations).
We say that $\straa$ is obtained by superposition of 
$\straa_1$ and $\straa_2$, and we denote it by 
$\alpha \cdot \straa_1 + (1-\alpha) \cdot \straa_2$.
This notation is slightly misleading, in the sense
that in general $\straa(\rho)(a) \neq 
\alpha \cdot \straa_1(\rho)(a) + (1-\alpha) \cdot \straa_2(\rho)(a)$.
In fact, $\straa(\rho)(a) = \frac{\alpha \cdot \Prb^{\straa_1}(\Cyl(\rho)) \cdot \straa_1(\rho)(a) 
+ (1-\alpha) \cdot \Prb^{\straa_2}(\Cyl(\rho)) \cdot \straa_2(\rho)(a)}{\alpha \cdot \Prb^{\straa_1}(\Cyl(\rho)) 
+ (1-\alpha) \cdot \Prb^{\straa_2}(\Cyl(\rho))}$.
This property will be useful in the (infinite-state) MDPs obtained from 
games after fixing a strategy of player~$1$.

\begin{figure*}[!tb]%
\begin{center}
\hrule
\hspace{10mm}
\subfloat[$q \in \CPre(s)$\qquad{\large \strut}]{
   \begin{picture}(35,35)(0,0)


\gasset{Nh=6, Nw=6, Nmr=3}

\node[Nmarks=n, Nframe=n, Nh=4, Nw=4, Nmr=2](t1)(30,34){}
\node[Nmarks=n, Nframe=n, Nh=4, Nw=4, Nmr=2](t2)(30,27){}
\node[Nmarks=n, Nframe=n, Nh=4, Nw=4, Nmr=2](t3)(30,20){}
\node[Nmarks=n, Nframe=n, Nh=4, Nw=4, Nmr=2](t4)(30,13){}
\node[Nmarks=n, Nframe=n, Nh=4, Nw=4, Nmr=2](t5)(30,6){}
\node[Nmarks=n, ExtNL=y, NLangle=270, NLdist=1, Nh=32, Nw=6, Nmr=3](t3frame)(28.5,20){$s$}

\node[Nmarks=n](s3)(5,20){{\small $q$}}

\node[Nframe=n, Nmarks=n, Nh=2,Nw=2,Nmr=0](a1)(14,26){}
\node[Nmarks=n, Nh=2,Nw=2,Nmr=0](a2)(14.5,20){}
\node[Nframe=n, Nmarks=n, Nh=2,Nw=2,Nmr=0](a3)(14,14){}
\rpnode[Nmarks=n](p1)(22,27)(4,1.4){}
\rpnode[Nmarks=n](p2)(22,20)(4,1.4){}
\rpnode[Nmarks=n](p3)(22,13)(4,1.4){}
\drawarc(22,27,2.5,318.8,41.2)
\drawarc(22,20,2.5,318.8,0)
\drawarc(22,13,2.5,318.8,0)

\drawedge[ELside=r,ELpos=50, ELdist=1, linegray=.6](s3,a1){}
\drawedge[ELside=r,ELpos=50, ELdist=1](s3,a2){}
\drawedge[ELside=r,ELpos=50, ELdist=1, linegray=.6](s3,a3){}
\drawedge[ELside=r,ELpos=50, ELdist=1](a2,p1){}
\drawedge[ELside=r,ELpos=50, ELdist=1](a2,p2){}
\drawedge[ELside=r,ELpos=50, ELdist=1](a2,p3){}
\drawedge[ELside=r,ELpos=50, ELdist=1](p1,t1){}
\drawedge[ELside=r,ELpos=50, ELdist=1](p1,t2){}
\drawedge[ELside=r,ELpos=50, ELdist=1](p1,t3){}
\drawedge[ELside=r,ELpos=50, ELdist=1](p2,t3){}
\drawedge[ELside=r,ELpos=50, ELdist=1](p2,t4){}
\drawedge[ELside=r,ELpos=50, ELdist=1](p3,t4){}
\drawedge[ELside=r,ELpos=50, ELdist=1](p3,t5){}


\end{picture}
   \label{fig:cpre}  
}
\hfill
\subfloat[$q \in \PosPre_1(s)$\quad{\large \strut}]{
   \begin{picture}(35,35)(0,0)


\gasset{Nh=6, Nw=6, Nmr=3}

\node[Nmarks=n, Nframe=n, Nh=4, Nw=4, Nmr=2](t1)(30,34){}
\node[Nmarks=n, Nframe=n, Nh=4, Nw=4, Nmr=2](t2)(30,27){}
\node[Nmarks=n, Nframe=n, Nh=4, Nw=4, Nmr=2](t3)(30,20){}
\node[Nmarks=n, Nframe=n, Nh=4, Nw=4, Nmr=2](t4)(30,13){}
\node[Nmarks=n, Nframe=n, Nh=4, Nw=4, Nmr=2](t5)(30,6){}
\node[Nmarks=n, Nframe=n, Nh=4, Nw=4, Nmr=2](t6)(30,0){}
\node[Nmarks=n, ExtNL=y, NLangle=285, NLdist=1, Nh=20, Nw=6, Nmr=3](t3frame)(28.5,20){$s$}

\node[Nmarks=n](s3)(5,20){{\small $q$}}

\node[Nframe=n, Nmarks=n, Nh=2,Nw=2,Nmr=0](a1)(14,26){}
\node[Nmarks=n, Nh=2,Nw=2,Nmr=0](a2)(14.5,20){}
\node[Nframe=n, Nmarks=n, Nh=2,Nw=2,Nmr=0](a3)(14,14){}
\rpnode[Nmarks=n](p1)(22,27)(4,1.4){}
\rpnode[Nmarks=n](p2)(22,20)(4,1.4){}
\rpnode[Nmarks=n](p3)(22,13)(4,1.4){}
\drawarc(22,27,2.5,318.8,41.2)
\drawarc(22,20,2.5,318.8,0)
\drawarc(22,13,2.5,301.6,0)

\drawedge[ELside=r,ELpos=50, ELdist=1, linegray=.6](s3,a1){}
\drawedge[ELside=r,ELpos=50, ELdist=1](s3,a2){}
\drawedge[ELside=r,ELpos=50, ELdist=1, linegray=.6](s3,a3){}
\drawedge[ELside=r,ELpos=50, ELdist=1](a2,p1){}
\drawedge[ELside=r,ELpos=50, ELdist=1](a2,p2){}
\drawedge[ELside=r,ELpos=50, ELdist=1](a2,p3){}
\drawedge[ELside=r,ELpos=50, ELdist=1](p1,t1){}
\drawedge[ELside=r,ELpos=50, ELdist=1](p1,t2){}
\drawedge[ELside=r,ELpos=50, ELdist=1](p1,t3){}
\drawedge[ELside=r,ELpos=50, ELdist=1](p2,t3){}
\drawedge[ELside=r,ELpos=50, ELdist=1](p2,t4){}
\drawedge[ELside=r,ELpos=50, ELdist=1](p3,t4){}
\drawedge[ELside=r,ELpos=50, ELdist=1](p3,t5){}
\drawedge[ELside=r,ELpos=50, ELdist=1](p3,t6){}


\end{picture}
    \label{fig:pospre1}
}
\hfill
\subfloat[$q \in \PosPre_2(s)$\quad{\large \strut}]{
   \begin{picture}(35,35)(0,0)


\gasset{Nh=6, Nw=6, Nmr=3}

\node[Nmarks=n, Nframe=n, Nh=4, Nw=4, Nmr=2](t1)(30,34){}
\node[Nmarks=n, Nframe=n, Nh=4, Nw=4, Nmr=2](t2)(30,27){}
\node[Nmarks=n, Nframe=n, Nh=4, Nw=4, Nmr=2](t3)(30,20){}
\node[Nmarks=n, Nframe=n, Nh=4, Nw=4, Nmr=2](t4)(30,13){}
\node[Nmarks=n, Nframe=n, Nh=4, Nw=4, Nmr=2](t5)(30,6){}
\node[Nmarks=n, Nframe=n, Nh=4, Nw=4, Nmr=2](t6)(30,0){}
\node[Nmarks=n, ExtNL=y, NLangle=285, NLdist=1, Nh=20, Nw=6, Nmr=3](t3frame)(28.5,20){$s$}

\node[Nmarks=n](s3)(5,20){{\small $q$}}

\node[Nframe=y, Nmarks=n, Nh=2,Nw=2,Nmr=0](a1)(14.5,27){}
\node[Nmarks=n, Nh=2,Nw=2,Nmr=0](a2)(14.5,20){}
\node[Nframe=y, Nmarks=n, Nh=2,Nw=2,Nmr=0](a3)(14.5,13){}
\rpnode[Nmarks=n](p1)(22,27)(4,1.4){}
\rpnode[Nmarks=n](p2)(22,20)(4,1.4){}
\rpnode[Nmarks=n](p3)(22,13)(4,1.4){}
\drawarc(22,27,2.5,318.8,41.2)
\drawarc(22,20,2.5,318.8,0)
\drawarc(22,13,2.5,301.6,0)

\drawedge[ELside=r,ELpos=50, ELdist=1](s3,a1){}
\drawedge[ELside=r,ELpos=50, ELdist=1](s3,a2){}
\drawedge[ELside=r,ELpos=50, ELdist=1](s3,a3){}
\drawedge[ELside=r,ELpos=50, ELdist=1](a1,p1){}
\drawedge[ELside=r,ELpos=50, ELdist=1](a2,p2){}
\drawedge[ELside=r,ELpos=50, ELdist=1](a3,p3){}
\drawedge[ELside=r,ELpos=50, ELdist=1](p1,t1){}
\drawedge[ELside=r,ELpos=50, ELdist=1](p1,t2){}
\drawedge[ELside=r,ELpos=50, ELdist=1](p1,t3){}
\drawedge[ELside=r,ELpos=50, ELdist=1](p2,t3){}
\drawedge[ELside=r,ELpos=50, ELdist=1](p2,t4){}
\drawedge[ELside=r,ELpos=50, ELdist=1](p3,t4){}
\drawedge[ELside=r,ELpos=50, ELdist=1](p3,t5){}
\drawedge[ELside=r,ELpos=50, ELdist=1](p3,t6){}

\rpnode[Nframe=n, Nmarks=n](pghost)(20,30.5)(4,1.4){}
\drawedge[ELside=r,ELpos=50, ELdist=1, linegray=.6](a1,pghost){}
\rpnode[Nframe=n, Nmarks=n](pghost)(20,23.5)(4,1.4){}
\drawedge[ELside=r,ELpos=50, ELdist=1, linegray=.6](a1,pghost){}

\rpnode[Nframe=n, Nmarks=n](pghost)(20,23.5)(4,1.4){}
\drawedge[ELside=r,ELpos=50, ELdist=1, linegray=.6](a2,pghost){}
\rpnode[Nframe=n, Nmarks=n](pghost)(20,16.5)(4,1.4){}
\drawedge[ELside=r,ELpos=50, ELdist=1, linegray=.6](a2,pghost){}

\rpnode[Nframe=n, Nmarks=n](pghost)(20,16.5)(4,1.4){}
\drawedge[ELside=r,ELpos=50, ELdist=1, linegray=.6](a3,pghost){}
\rpnode[Nframe=n, Nmarks=n](pghost)(20,9.5)(4,1.4){}
\drawedge[ELside=r,ELpos=50, ELdist=1, linegray=.6](a3,pghost){}


\end{picture}
    \label{fig:pospre2}
}
\hspace{10mm}
\hrule
\caption{The predecessor operators. \label{fig:pre}}%
\Description{Three game graphs illustrating the definition of the predecessor operators.}
\end{center}
\end{figure*}
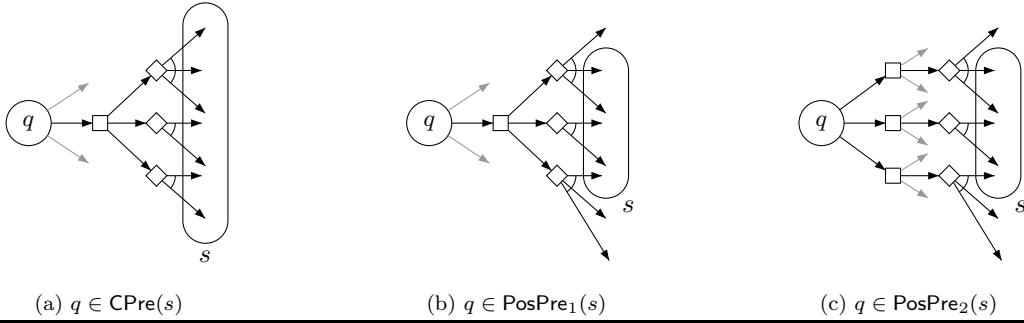

\smallskip\noindent{\em Attractors and subgames.}
Let $\CPre: 2^Q \to 2^Q$ be the \emph{controllable predecessor operator}
defined for all $s \subseteq Q$ by $\CPre(s) = \{q \in Q \mid \exists a \in \Act 
\cdot \forall b \in \Act: \Supp(\delta(q,a,b)) \subseteq s \}$. 
Intuitively, $\CPre(s)$ is the set of states from which player~$1$ can ensure
to be in $s$ after one round, regardless of the action chosen by player~$2$
and of the outcome of the probabilistic  
transition (\figurename~\ref{fig:cpre}).

For a set $T \subseteq Q$, the \emph{attractor} $\Attr(T,\G)$ is the least fixed
point of the operator $x \mapsto \CPre(x) \cup T$,    
that is $\Attr(T,\G) = \bigcup_{i\geq 0} \CPre^i(T)$ (where $\CPre^0(T) = T$). 
It is the set of states in $\G$ from which player~$1$ has a (pure memoryless) strategy 
to ensure eventually reaching $T$~\cite{Thomas97}. We refer to such a 
memoryless strategy as an attractor strategy.

Let $\PosPre_1: 2^Q \to 2^Q$ be the \emph{positive predecessor operator}
for player~$1$
defined for all $s \subseteq Q$ by $\PosPre_1(s) = \{q \in Q \mid \exists a \in \Act 
\cdot \forall b \in \Act: \Supp(\delta(q,a,b)) \cap s \neq \emptyset \}$,
and let $\PosPre_2: 2^Q \to 2^Q$ be defined symmetrically by 
$\PosPre_2(s) = \{q \in Q \mid \forall a \in \Act
\cdot \exists b \in \Act: \Supp(\delta(q,a,b)) \cap s \neq \emptyset \}$
(\figurename~\ref{fig:pospre1} and \figurename~\ref{fig:pospre2}).

For a set $T \subseteq Q$, the \emph{positive attractor} $\PosAttr_i(T,\G)$ 
for player~$i$ ($i=1,2$) is the least fixed point of 
the operator $x \mapsto \PosPre_i(x) \cup T$.   
There exists a pure memoryless strategy for player~$i$ (referred to as \emph{positive-attractor} 
strategy) to ensure, regardless of the strategy for player~\mbox{$3\!-\!i$}
that from all states in $\PosAttr_i(T,\G)$, the set $T$ 
is reached within $n = \abs{Q}$ steps with positive probability (in fact, bounded probability,
at least $\eta^{n}$ where $\eta$ is the smallest positive probability in the transitions of $\G$). 

A set $S \subseteq Q$ induces a \emph{subgame} of $\G$
if for all $q \in S$, there exist $a_q,b_q \in \Act$ such that $\delta(q,a_q,b_q)(S) = 1$.
We denote by $\G \upharpoonright [S] = \tuple{S, \Act, \delta_S}$ the subgame 
induced by $S$, where for all $q \in S$ and $a \in \Act$, 
if $\delta(q,a,b_a)(S) = 1$ for some $b_a \in \Act$ (note that this condition
holds for $a = a_q$), then for all $b \in \Act$
we define $\delta_S(q,a,b) = \delta(q,a,b)$ if $\delta(q,a,b)(S) = 1$,
and $\delta_S(q,a,b) = \delta(q,a,b_a)$ if $\delta(q,a,b)(S) < 1$;
otherwise $\delta(q,a,b)(S) < 1$ for all $b \in \Act$, 
and then we define $\delta_S(q,a,b) = \delta_S(q,a_q,b)$ for all $b \in \Act$.
We use this definition of subgame to keep the same
alphabet of actions in every state.
For instance, the set $S = Q \setminus \PosAttr_i(T,\G)$ (for $i=1,2$) 
induces a subgame of $\G$.

A set $S \subseteq Q$ is a \emph{trap} for player~$1$ in $\G$ 
if for all states $q \in S$ and all actions $a \in \Act$, there exists $b \in \Act$ 
such that $\delta(q,a,b)(S) = 1$. Intuitively, player~$2$ has a strategy to keep
player~$1$ trapped in $S$.
Dually, the set $S$ is a \emph{trap} for player~$2$ in $\G$ 
if for all states $q \in S$, there exists $a \in \Act$ such that for all $b \in \Act$ we have
$\delta(q,a,b)(S) = 1$. Note that $S$ is a trap for player~$2$ if and only 
if $S \subseteq \CPre(S)$. A key property of traps is that if player~$i$ has
no winning strategy for some state-based objective from a state $q$ in a trap $S$ for player~$i$, 
in the subgame $\G \upharpoonright [S]$, then for the same objective in $\G$ 
player~$i$ has no winning strategy from $q$.   

For deterministic games, the operators $\CPre$ and $\PosPre_1$ coincide,
as well as the attractor $\Attr(T,\G)$ and positive attractor $\PosAttr_1(T,\G)$,
and therefore the set $U = Q \setminus \Attr(T,\G)$ induces a subgame of $\G$.

We recall basic properties derived from the definitions of the positive attractor,
and the analysis of stochastic games with almost-sure reachability objective~\cite{AHK07,Chatterjee07}.

\begin{lemma}\label{lem:almost-sure-reach-game-pl1}
If a distribution $d_0$ is almost-sure winning for a reachability
objective $\Diamond T$ in a game $\G$, then there exists a memoryless player-$1$ 
strategy $\straa$ such that for all $\epsilon > 0$, there exists an integer 
$h_{\epsilon}$ such that for all player-$2$ strategies $\straa$, 
$\Prb_{d_0}^{\straa,\strab}(\Diamond^{\leq h_{\epsilon}} T) \geq 1-\epsilon$.
\end{lemma}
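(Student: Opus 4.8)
The plan is to reduce the statement to the structure of the almost-sure winning region for $\Diamond T$, and then to extract from it a \emph{uniform} time bound by a geometric-decay argument. Write $W$ for the set of states $q$ such that $1_q$ is almost-sure winning for $\Diamond T$. First I would observe that $\Supp(d_0) \subseteq W$: if some $q \in \Supp(d_0)$ were not almost-sure winning, then fixing any player-$1$ strategy $\straa$ that witnesses that $d_0$ is almost-sure winning, player~$2$ could, conditioned on the play starting in $q$ (an event of probability $d_0(q) > 0$), follow a strategy $\strab$ avoiding $T$ with positive probability, so that $\Prb_{d_0}^{\straa,\strab}(\Diamond T) < 1$, a contradiction. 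Next, from the analysis of almost-sure reachability in stochastic games~\cite{AHK07,Chatterjee07} I would record two structural facts about $W$: player~$1$ can keep the play inside $W$ (equivalently $W \subseteq \CPre(W)$, so $W$ is a trap for player~$2$), and any almost-sure winning strategy keeps the play in $W$ until $T$ is reached, since reaching a state outside $W$ with positive probability would let player~$2$ spoil from there.

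The heart of the argument is to build a single memoryless $\straa$ that makes \emph{uniformly bounded progress} towards $T$. Since $W \subseteq \CPre(W)$, at every $q \in W$ there is an action that keeps the support inside $W$ regardless of player~$2$; let $\G'$ be the game on $W$ in which player~$1$ is restricted to such \emph{safe} actions (player~$2$ keeps its full action set, and on safe actions the transitions of $\G'$ and $\G$ agree). I claim $\PosAttr_1(T,\G') = W$. Indeed, if $X = W \setminus \PosAttr_1(T,\G')$ were nonempty, then from $X$ every safe player-$1$ action admits a player-$2$ response whose support stays in $X$ (by definition of the positive-attractor complement, using $X \cap T = \emptyset$), while every non-safe action lets player~$2$ leave $W$ to a non-winning state; in either case player~$2$ could ensure positive probability of never reaching $T$, so $X$ would be disjoint from $W$ — a contradiction. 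Taking $\straa$ to be a positive-attractor strategy in $\G'$, it is pure memoryless, keeps the play in $W$, and by the positive-attractor property reaches $T$ within $n = \abs{Q}$ steps with probability at least $\eta^n$ from every state of $W$, regardless of player~$2$. This step — reconciling ``staying in $W$'' with ``progressing towards $T$'' in one memoryless strategy, i.e. establishing $\PosAttr_1(T,\G') = W$ — is the main obstacle.

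Finally I would assemble the uniform bound by iterating the $n$-step progress property. Fix any $\strab$ and let $A_j$ be the event that $T$ is not reached within the first $jn$ steps. Since the play stays in $W$, on $A_j$ the state reached after $jn$ steps lies in $W$, and as $\straa$ is memoryless the conditional probability of still not reaching $T$ after $n$ further steps is at most $1 - \eta^n$; hence $\Prb_{d_0}^{\straa,\strab}(A_k) \le (1-\eta^n)^k$ by induction on $k$ (using $\Supp(d_0) \subseteq W$ for the base case). Given $\epsilon > 0$, I choose $k$ with $(1-\eta^n)^k \le \epsilon$ and set $h_\epsilon = k \cdot n$, which yields $\Prb_{d_0}^{\straa,\strab}(\Diamond^{\le h_\epsilon} T) \ge 1 - (1-\eta^n)^k \ge 1-\epsilon$ for every player-$2$ strategy $\strab$, as required; crucially $\straa$ and $h_\epsilon$ do not depend on $\strab$.
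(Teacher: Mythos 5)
Your proposal is correct and follows essentially the same route as the paper: a memoryless positive-attractor strategy restricted to the almost-sure winning region $W$, giving probability at least $\eta^n$ of reaching $T$ every $n$ rounds, followed by the geometric-decay bound $(1-\eta^n)^k \leq \epsilon$ with $h_\epsilon = k \cdot n$. The only difference is that you prove in detail the facts the paper asserts or cites from the standard analysis of almost-sure reachability (namely $\Supp(d_0) \subseteq W$, that $W$ is a trap for player~$2$, and that the positive attractor of $T$ inside the safe-action subgame covers all of $W$), which is a welcome but inessential elaboration.
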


\begin{proof}
Consider the memoryless almost-sure winning strategy for player~$1$
that plays according to a positive-attractor strategy to reach $T$
(while staying in the almost-sure winning region). Then within $n$
rounds, the probability to reach $T$ is at least $\eta^n$. Therefore
the probability of not having reached $T$ is a most $1-x$ where $x = \eta^n > 0$, 
and within $k \cdot n$ rounds it is at most $(1-x)^k$.
Since $1-x < 1$, we get $(1-x)^{k_{\epsilon}} \leq \epsilon$ by 
taking $k_{\epsilon}$ sufficiently large. Taking 
$h_{\epsilon} = k_{\epsilon} \cdot n$ concludes the proof.
\end{proof}

\begin{lemma}\label{lem:almost-sure-reach-game}
If a distribution $d_0$ is not almost-sure winning for a reachability
objective $\Diamond T$ in a game $\G$, then there exists a memoryless player-$2$ 
strategy $\strab$ such that for all player-$1$ strategies $\straa$, 
for all $i \geq 0$ 
we have 
$\Prb_{d_0}^{\straa,\strab}(\Diamond^{=i} T) \leq 1 - \eta_0 \cdot \eta^{n}$
where 
$\eta_0 = \min \{d_0(q) \mid q \in \Supp(d_0)\}$ is the smallest 
positive probability in the initial distribution $d_0$.
\end{lemma}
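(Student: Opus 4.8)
The plan is to reduce the statement to the known structure of almost-sure reachability games~\cite{AHK07,Chatterjee07}. Let $W$ be the set of states $q$ from which $1_q$ is almost-sure winning for $\Diamond T$, and let $U = Q \setminus W$; note that $T \subseteq W$, so $U \cap T = \emptyset$. First I would show that $\Supp(d_0) \cap U \neq \emptyset$. Indeed, if $\Supp(d_0) \subseteq W$, then player~1 could play the \emph{uniform} memoryless almost-sure winning strategy $\straa$ on $W$ provided by~\cite{AHK07,Chatterjee07}; decomposing over the initial state, $\Prb_{d_0}^{\straa,\strab}(\Diamond T) = \sum_{q \in \Supp(d_0)} d_0(q)\cdot\Prb_{q}^{\straa,\strab}(\Diamond T) = 1$ for every $\strab$, contradicting that $d_0$ is not almost-sure winning. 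Fix $q^* \in \Supp(d_0) \cap U$, so that $d_0(q^*) \geq \eta_0$.

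Next I would identify the safe region. Let $S = Q \setminus \PosAttr_1(T,\G)$. As noted after the definition of positive attractors, $S$ induces a subgame and is a trap for player~1, and $S \cap T = \emptyset$ (since $T \subseteq \PosAttr_1(T)$); thus player~2 has a memoryless strategy keeping the play in $S \subseteq Q \setminus T$ forever. The structural fact I would invoke from~\cite{AHK07,Chatterjee07} is that $U$ coincides with player~2's positive attractor to $S$ that avoids $T$: there is a \emph{memoryless} player-2 strategy $\strab^*$ such that, from every $q \in U$ and against every player-1 strategy, with probability at least $\eta^{n}$ the play reaches $S$ within $n$ steps through states of $U$ (hence never through $T$, as $U \cap T = \emptyset$) and then remains in $S$ forever. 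I would obtain $\strab^*$ by combining a positive-attractor-to-$S$ strategy on $U \setminus S$ with the trap strategy on $S$ (and playing arbitrarily on $W$); both components are memoryless.

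With $\strab^*$ fixed, the bound follows. Fix any player-1 strategy $\straa$ and any $i \geq 0$, and consider the event $E$ that, started in $q^*$, the play reaches $S$ within $n$ steps through $U$ and stays in $S$ forever. On $E$ the play lies in $U \cup S \subseteq Q \setminus T$ at \emph{every} time step, in particular at time $i$, and $\Prb_{q^*}^{\straa,\strab^*}(E) \geq \eta^{n}$; hence $\Prb_{q^*}^{\straa,\strab^*}(\Diamond^{=i} T) \leq 1 - \eta^{n}$. Since this bound holds for every player-1 strategy from $q^*$ (in particular for the one induced by $\straa$) and $\strab^*$ is memoryless, decomposing over the initial distribution gives $\Prb_{d_0}^{\straa,\strab^*}(\Diamond^{=i} T) \leq (1 - d_0(q^*)) + d_0(q^*)(1 - \eta^{n}) = 1 - d_0(q^*)\,\eta^{n} \leq 1 - \eta_0\,\eta^{n}$, which is the claim.

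The main obstacle is the structural fact in the second step, namely that every state outside $W$ can be positively attracted by player~2 to the safe trap $S$ while avoiding $T$, within a bounded horizon $n$ and with a \emph{uniform} lower bound $\eta^{n}$. The delicate point is uniformity in $i$: a strategy merely guaranteeing positive probability of avoiding $T$ forever would only give a per-step survival bound of order $\eta^{i}$, which deteriorates as $i$ grows. Routing a fixed chunk of mass of size $\eta^{n}$ into $S$ — where the trap keeps it outside $T$ forever — is exactly what makes the bound independent of $i$. Establishing the inclusion $U \subseteq \PosAttr_2(S)$ (avoiding $T$) is the crux, and it is precisely the correctness of the almost-sure reachability algorithm of~\cite{AHK07,Chatterjee07}, which I would cite rather than reprove.
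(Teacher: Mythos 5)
Your overall scaffolding (reducing to a single state $q^*\in\Supp(d_0)$ of mass $\geq\eta_0$, routing a fixed chunk of mass into a permanent trap to get a bound uniform in $i$, and the final arithmetic) is fine, but the ``structural fact'' on which everything rests is false: it is \emph{not} true that $U = Q\setminus W$ is contained in player~2's positive attractor (avoiding $T$) to $S = Q\setminus\PosAttr_1(T,\G)$. Counterexample (an MDP, so player~2 has no choices at all): $Q=\{v_1,v_2,t,s\}$, $T=\{t\}$, with $t$ and $s$ absorbing, and from each $v_j$ action $a_1$ moves surely to $v_{3-j}$ while action $a_2$ moves to $t$ and to $s$ with probability $\frac12$ each. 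Here $\PosAttr_1(T,\G)=\{t,v_1,v_2\}$, so $S=\{s\}$, and since $a_1$ always avoids $s$ surely, $\PosPre_2(\{s\})=\emptyset$ on $\{v_1,v_2\}$; hence the positive attractor of $S$ for player~2 is just $\{s\}$. Yet $v_1$ is not almost-sure winning: any strategy satisfies $\Prb_{v_1}(\Diamond T)=\frac12\Prb(\text{eventually play }a_2)\leq\frac12$. So $v_1\in U$ but player~2 cannot drive any probability from $v_1$ into $S$ --- whether mass ever enters $s$ is entirely under player~1's control. The reason $v_1$ loses is a \emph{dilemma} (every transition that makes progress toward $T$ necessarily leaks mass toward $s$), not attractability to a globally safe set, and your proposed strategy $\strab^*$ does not exist.

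This is exactly why the paper does not argue in one shot: the complement of the almost-sure region is characterized by the \emph{nested} fixpoint $\nu Y.\mu X.\,T\cup\APre(Y,X)$ (equivalently, an iteration that alternates between recomputing positive reachability in successively smaller subgames and removing player-2 attractors), and the paper's proof extracts from it a rank decomposition $\Omega_i = Y_i\setminus Y_{i+1}$ together with a memoryless player-2 strategy guaranteeing, by induction on the rank, $\Prb(\Box(Q\setminus T))\geq\eta^{\mathrm{rank}}\geq\eta^n$. In the example above, $v_1,v_2$ have rank~$1$ and $s$ has rank~$0$: the spoiling guarantee comes from the fact that whenever player~1 plays an action whose support leaves $Y_1$, a mass $\geq\eta$ falls to a lower rank, not from attracting the play to $S$. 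The references \cite{AHK07,Chatterjee07} establish precisely this iterative/nested characterization, so citing them cannot repair the one-step claim; to fix your proof you would have to replace the single attractor step by the rank (or subgame-iteration) induction, at which point you have essentially reconstructed the paper's argument.
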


\begin{proof}
The set of almost-sure winning states for the reachability objective 
$\Diamond T$ can be computed by graph-theoretic algorithms~\cite{AHK07}, 
and can be expressed succinctly by the $\mu$-calculus fixpoint formula 
$\varphi_{{\sf AS}} = \opnu Y.\mu X.~\!T \cup \APre(Y,X)$~\cite{ConcOmRegGames} where,
given $X,Y \subseteq Q$:
\begin{linenomath*}
$$\APre(Y,X) = \{q \in Q \mid \exists a \in \Act\cdot \forall b \in \Act:  
\Supp(\delta(q,a,b)) \subseteq Y \land \Supp(\delta(q,a,b)) \cap X \neq \emptyset\}$$
\end{linenomath*}
is the set of states from which there is an action to ensure that all successor 
states are in~$Y$ and that with positive probability the successor state is in $X$.
Note that $\CPre(Y) = \APre(Y,Q)$.

We briefly recall the interpretation of  $\mu$-calculus formulas~\cite{BS07,BW18}
based on Knaster-Tarski theorem.
Given a monotonic function $\psi: 2^Q \to 2^Q$
(i.e., such that $X \subseteq Y$ implies $\psi(X) \subseteq \psi(Y)$),
the expression $\opnu Y. \psi(Y)$ is the (unique) greatest fixpoint of $\psi$,
which can be computed as the limit of the sequence $(Y_i)_{i \in \nat}$ defined by
$Y_0 = Q$, and $Y_{i} = \psi(Y_{i-1})$ for all $i \geq 1$.
Dually, the expression $\mu X.~\psi(X)$ is the (unique) least fixpoint of $\psi$,
and the limit of the sequence $(X_i)_{i \in \nat}$ defined by
$X_0 = \emptyset$, and $X_{i} = \psi(X_{i-1})$ for all $i \geq 1$.
If $\abs{Q} = n$, then it is not difficult to see that the 
limit of those sequences is reached after at most $n$ iterations, 
$X_{n} = X_{n+1}$ and $Y_{n} = Y_{n+1}$.

Intuitively, the formula $\varphi_{{\sf AS}}$ computes the largest set $S$ of states
such that every state $q \in S$ has a strategy to ensure reaching $T$ with 
positive probability, while at the same time staying in $S$ with probability~$1$.
It follows that $S$ is the set of all states from which there exists a strategy 
to reach $T$ with probability~$1$.

Now consider the states $q_0 \not\in \varphi_{{\sf AS}}$ that are not almost-sure 
winning, and let the rank of $q_0$ be the least integer $i$ such that
$q_0 \in Y_{i}$ and $q_0 \not\in Y_{i+1}$.
Let $\Omega_{i} = Y_i \setminus Y_{i+1}$ for all $i \geq 0$ 
be the set of states of rank~$i$, and for $\mbox{$\sim$}\!\in\!\{<,\leq\}$
let $\Omega_{\sim i} = \bigcup_{j\sim i} \Omega_{j}$. 
It is easy to show by induction that $Q \setminus Y_i = \Omega_{<i}$ 
since $\Omega_{<i} = (Y_0 \setminus Y_{1}) \cup (Y_1 \setminus Y_{2}) \cup \dots 
 \cup (Y_{i-1} \setminus Y_{i})$ and $Y_0 = Q$. Moreover since $T \subseteq Y_i$
we have $\Omega_{i} \cap T = \emptyset$.

For $q \in \Omega_{i}$, we show that there exists a player-$2$ strategy $\strab$
such that for all player-$1$ strategies $\straa$ we have 
$\Pr^{\straa,\strab}_{q}(\Box \Omega_i \cup \Diamond \Omega_{<i}) \geq \eta$.
Since $q \not\in Y_{i+1} = \mu X.~T \cup \APre(Y_i,X)$, we have 
$q \not\in \APre(Y_{i},Y_{i+1})$.
Then for all actions $a \in \Act$, there exists $b \in \Act$ such that:
\begin{linenomath*}
$$\text{either } \Supp(\delta(q,a,b)) \not\subseteq Y_{i}, 
\text{ or } \Supp(\delta(q,a,b)) \cap Y_{i+1} = \emptyset,$$
\noindent that is 
$$\text{either } \Supp(\delta(q,a,b)) \cap \Omega_{<i} \neq \emptyset,
\text{ or } \Supp(\delta(q,a,b)) \subseteq \Omega_{\leq i}.$$ 
\end{linenomath*}

Consider the memoryless strategy $\strab$ for player-$2$ that,
given a state $q \in \Omega_{i}$ and action $a$ of player~$1$,
plays such an action $b$.

It follows in both cases
that if not all successors of a state $q_0  \in \Omega_{i}$ on action $a$ are in $\Omega_{i}$,
then at least one of them is in $\Omega_{<i}$, which entails that 
$\Pr^{\straa,\strab}_{q_0}(\Box \Omega_i \cup \Diamond \Omega_{<i}) \geq \eta$
for all player~$1$ strategies $\straa$.
For $i=0$, since $\Omega_{<0} = \emptyset$, we have 
$\Pr^{\straa,\strab}_{q_0}(\Box \Omega_0) = 1$ and thus 
$\Pr^{\straa,\strab}_{q_0}(\Box (Q \setminus T)) = 1$ 
for all $q_0 \in \Omega_{0}$ and all strategies $\straa \in \Straa$.
Inductively, if $\Pr^{\straa,\strab}_{q_0}(\Box (Q \setminus T)) \geq p_i$ for all 
$q_0 \in \Omega_{< i}$ and all $\straa \in \Straa$, then 
$\Pr^{\straa,\strab}_{q_0}(\Box (Q \setminus T)) \geq \eta \cdot p_i$ for all 
$q_0 \in \Omega_{i}$ and all $\straa \in \Straa$.

It follows that if $q_0 \not\in \varphi_{{\sf AS}}$ is not almost-sure winning
for the reachability objective $\Diamond T$, then 
$q_0 \in \Omega_{<n}$ and $\Pr^{\straa,\strab}_{q_0}(\Box (Q \setminus T)) \geq \eta^n$
for all $\straa \in \Straa$. The result of the lemma follows.
\end{proof}

\smallskip\noindent{\em Strongly connected component.}
In a directed graph $\tuple{V,E}$, a \emph{strongly connected component (SCC)}
is a nonempty set $s \subseteq V$ such that for all $v,v' \in s$,
there exists a nonempty path from $v$ to $v'$. Our definition excludes 
singletons $\{v\}$ to be an SCC if there is no self-loop $(v,v)$ in $E$.
The \emph{period} of an SCC is the greatest common divisor of the lengths
of all its cycles. 

\smallskip\noindent{\em Overview of the results.}
We present the solution of the membership problem for sure winning in 
Section~\ref{sec:sure} and for almost-sure winning in Section~\ref{sec:as}.
The most challenging case is almost-sure weakly synchronizing, 
which we solve first for deterministic games (Section~\ref{sec:det-games}), 
and then in the general case of stochastic games (Section~\ref{sec:stoch-games}).
The other synchronizing modes for almost-sure winning are considered in 
Section~\ref{sec:other}. We also present tight complexity bounds and memory 
requirement for all winning modes.

\section{Sure Synchronizing}\label{sec:sure}
In the rest of this paper, when the initial distribution $d_0$
is irrelevant or clear from the context, we denote the $i$-th element $d_i$
in the sequence $\G_{d_0}^{\straa,\strab} = d_0, d_1, \ldots$
by $\G_{i}^{\straa,\strab}$.

For sure winning, only the support of distributions is important (not the
exact value of probabilities). Intuitively for player~$1$, the worst case that can happen
is that player~$2$ uses the \emph{uniform strategy} $\strab_{\u}$ that plays all 
actions uniformly at random, in order to scatter the probability mass
in as many states as possible, where $\strab_{\u}(\rho)(a) = \frac{1}{\abs{\Act}}$
for all $\rho \in \Pref(\G)$ and $a \in \Act$.
Formally, given a set $T \subseteq Q$ and an arbitrary player-$1$ strategy 
$\straa \in \Straa$ it is easy to show that 
$\Supp(\G_i^{\straa,\strab}(T)) \subseteq \Supp(\G_i^{\straa,\strab_{\u}}(T))$,
for all player-$2$ strategies $\strab \in \Strab$ and all $i \geq 0$. 

Therefore, in all synchronizing modes, player~$1$ is sure winning if and only if 
player~$1$ is sure winning against the uniform strategy $\strab_{\u}$
for player~$2$, and computing the sure winning distributions
in stochastic games reduces to the same problem in MDPs (obtained by fixing 
$\strab_{\u}$ in $\G$), which is 
known~\cite{DMS19}. We immediately derive the following results.

\begin{theorem}\label{theo:sure-winning}
The membership problem for sure always and strongly synchronizing can be solved
in polynomial time, and pure memoryless strategies are sufficient for player~$1$.

The membership problem for sure eventually and weakly synchronizing is PSPACE-complete, 
and pure strategies with exponential memory are sufficient (and may be necessary)
for player~$1$.
\end{theorem}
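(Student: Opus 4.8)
The plan is to lean on the reduction sketched just above the theorem: player~1 is sure winning in $\G$ if and only if player~1 is sure winning in the MDP $\M$ obtained by fixing player~2's uniform strategy $\strab_{\u}$, after which the four modes are resolved by the corresponding MDP results of \cite{DMS19}.

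First I would make the support argument precise in both directions. For a fixed $\straa$, any play prefix with positive probability under $(\straa,\strab)$ also has positive probability under $(\straa,\strab_{\u})$, because $\strab_{\u}$ plays every action with positive probability; hence $\Supp(\G_i^{\straa,\strab}) \subseteq \Supp(\G_i^{\straa,\strab_{\u}})$ for all $\strab$ and all $i$. Since a sequence is $1$-synchronizing in $T$ exactly when the relevant supports lie inside $T$, a strategy that wins against $\strab_{\u}$ wins against every $\strab$ (soundness), and any sure-winning strategy wins in particular against $\strab_{\u}$ (completeness). I would then note that the translation preserves the strategy class: to keep supports small, player~1 gains nothing from playing different actions at two prefixes ending in the same state at the same round, so one may restrict to pure strategies determined by the current support and state, which are exactly the strategies the subset construction on $2^Q$ of $\M$ exposes.

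With the reduction established, the upper bounds and strategy sufficiency follow directly from \cite{DMS19}: for sure always and strongly synchronizing the MDP problem is in polynomial time with pure memoryless strategies, and for sure eventually and weakly synchronizing it is in PSPACE with pure exponential-memory strategies; monotonicity lifts the synthesized MDP strategy back to a sure-winning strategy of the same kind in $\G$. For the matching lower bounds I would invoke that MDPs are exactly the stochastic games in which every state is a player-1 state: the PSPACE-hardness and the exponential-memory lower bound for sure eventually and weakly synchronizing are already established for MDPs in \cite{DMS19}, and transfer verbatim to $\G$ since each MDP instance is a stochastic-game instance with the same answer. This yields PSPACE-completeness together with the ``may be necessary'' clause.

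The main obstacle is the completeness direction of the reduction, namely arguing that player~1's ability to observe player~2's past actions in $\G$ (information absent in the MDP $\M$) is useless for sure synchronizing. The point is that exploiting this information can only \emph{enlarge} the support of the outcome distribution --- it amounts to spreading the probability mass at a state across several continuation actions --- whereas sure winning demands the support stay within $T$; hence a single support-based action choice, which is all $\M$ offers, is never worse for player~1.
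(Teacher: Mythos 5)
Your proposal is correct and follows essentially the same route as the paper: reduce sure winning (in all four modes) to the MDP obtained by fixing player~2's uniform strategy $\strab_{\u}$, using the support-inclusion and support-monotonicity observations, then invoke the MDP results of~\cite{DMS19} for the upper bounds and strategy classes, with the lower bounds transferring because MDPs are exactly the stochastic games whose states are all player-$1$ states. The paper compresses this into the support-inclusion remark and ``which is known''; your explicit handling of the completeness direction (why observing player~2's past actions cannot help, since playing several actions at a state only enlarges the support) fills in a detail the paper leaves implicit.
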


Note that when the uniform strategy $\strab_{\u}$ is fixed, the 
controllable predecessor operator $\CPre$ coincides with the
predecessor operator used to solve the membership problem for 
MDPs~\cite{DMS19}.

Also note that even in the case of deterministic games, the winning regions
for sure and almost-sure winning do not coincide, for eventually and weakly 
synchronizing, as illustrated by the game $\G_{{\sf win}}$ (\figurename~\ref{fig:gwin}).
The state $q_1$ is almost-sure winning (as we show in the beginning of 
Section~\ref{sec:det-games}), but not sure winning
for weakly synchronizing in $T = \{q_1,q_3\}$ (e.g., against the uniform strategy
for player~2).
We show in Section~\ref{sec:other} that in deterministic games, the winning regions
for sure and almost-sure winning do coincide for always and strongly
synchronizing (even for all stochastic games in the case of always synchronizing).

\section{Almost-Sure Synchronizing}\label{sec:as}

We first consider almost-sure weakly synchronizing, which is the most 
interesting and challenging case. 
We present an algorithm to compute the set $\winas{weakly}(T)$ and
we show that pure counting strategies are sufficient for player~$1$.

\subsection{Weakly synchronizing in deterministic games}\label{sec:det-games}

The key ideas of the algorithm are easier to present in the special
case of deterministic games, and with the assumption that pure 
counting strategies are sufficient for player~$1$ (but player~$2$ is 
allowed to use an arbitrary strategy). 
We show in Section~\ref{sec:stoch-games} how to compute the winning region for 
almost-sure weakly synchronizing in general stochastic games, and without any 
assumption on the strategies of player~$1$. It will follow from our results
that pure counting strategies are in fact sufficient for player~$1$.

Given a deterministic game $\G = \tuple{Q, \Act, \delta}$, a \emph{selector} is a function
$\alpha: Q \to \Act$, and for a set $s \subseteq Q$, let $\delta_{\alpha}(s) = 
\{\delta(q,\alpha(q),b) \mid \mbox{$q \in s$} \land \mbox{$b \in \Act$} \} \subseteq Q$.
A pure counting strategy can be viewed as an infinite sequence of selectors.
The \emph{subset construction} for $\G$
is the graph $\P(\G) = \tuple{V,E}$ where $V = 2^Q \setminus \{\emptyset\}$ 
and $E = \{(s,\delta_{\alpha}(s)) \mid s \in V \land \alpha \text{ is a selector} \}$.
Given a set $T\subseteq Q$ of target states, and a set $s \in V$,     
we say that $s$ is \emph{accepting} if $s \subseteq T$ (for singletons $\{q\}$ we 
simply say that $q$ is accepting).

The \emph{central property} of the subset construction $\P(\G)$ is that 
for every sequence of selectors $\alpha_1, \alpha_2, \dots, \alpha_k$,
the sequence $s_1,s_2, \dots, s_{k+1}$ such that $s_{i+1} = \delta_{\alpha_i}(s_i)$
for all $1 \leq i \leq k$, is a path in $\P(\G)$ and that for every
state $q \in s_{k+1}$, there exists a play prefix $q_1 \, a_1b_1 \, q_2 \ldots q_{k+1}$ in $\G$ 
such that $q_{k+1} = q$ and $q_i \in s_i$ for all $1 \leq i \leq k$, 
that is compatible with the given sequence of selectors, $a_i = \alpha_i(q_i)$.
The central property is easily proved by induction on $k$~\cite{CDHR07}.
Using K\"{o}nig's Lemma, the central property holds for infinite sequences,
namely for every infinite path $s_1,s_2, \dots$ in $\P(\G)$, there 
exists an infinite play $q_1 \, a_1b_1 \, q_2 \ldots$ in $\G$ 
such that $q_i \in s_i$ and $a_i = \alpha_i(q_i)$ for all $i \geq 1$.
We also mention a simple monotonicity property: if $s \subseteq s'$, then 
$\delta_{\alpha}(s) \subseteq \delta_{\alpha}(s')$ for all selectors~$\alpha$.
\smallskip

\begin{figure}[!tb]%
\begin{center}
\hrule
\subfloat[$\G_{{\sf win}}${\large \strut}]{
   \begin{picture}(40,42)(0,10)

\node[Nmarks=r, Nmr=0](q1)(10,40){$q_1$}
\node[Nmarks=n](q2)(10,20){$q_2$}
\node[Nmarks=r](q3)(30,20){$q_3$}

\drawedge[ELside=r,ELpos=50, ELdist=1](q1,q2){$b_2$}
\drawedge[ELside=r,ELpos=50, ELdist=1, curvedepth=-6](q2,q3){$a_2$}
\drawedge[ELside=l,ELpos=50, curvedepth=-6](q3,q2){}


\drawloop[ELside=l,loopCW=y, loopangle=90, loopdiam=5](q1){$b_1$}
\drawloop[ELside=l,loopCW=y, loopangle=225, loopdiam=5, ELdist=0.5](q2){$a_1$}

\end{picture}
   \label{fig:gwin}  
}
\quad
\subfloat[Subset construction{\large \strut}]{
   \begin{picture}(60,55)(0,5)

\node[Nmarks=r](q13)(10,30){$q_{13}$}

\node[Nmarks=r](q1)(30,50){$q_{1}$}
\node[Nmarks=n](q12)(30,30){$q_{12}$}
\node[Nmarks=n](q123)(30,10){$q_{123}$}

\node[Nmarks=n](q23)(50,50){$q_{23}$}
\node[Nmarks=n](q2)(50,30){$q_{2}$}
\node[Nmarks=r](q3)(50,10){$q_{3}$}

\drawedge[ELside=r,ELpos=50, ELdist=1](q13,q12){}
\drawedge[ELside=l,ELpos=50, curvedepth=0](q1,q12){}
\drawedge[ELside=l,ELpos=50, curvedepth=5](q12,q123){$a_2$}
\drawedge[ELside=l,ELpos=50, curvedepth=5](q123,q12){$a_1$}
\drawedge[ELside=l,ELpos=50, curvedepth=0](q23,q2){$a_1$}
\drawedge[ELside=l,ELpos=50, curvedepth=5](q2,q3){$a_2$}
\drawedge[ELside=l,ELpos=50, curvedepth=5](q3,q2){}


\drawloop[ELside=l,loopCW=y, loopangle=0, loopdiam=5, ELdist=.5](q2){$a_1$}
\drawloop[ELside=l,loopCW=y, loopangle=0, loopdiam=5, ELdist=.5](q23){$a_2$}
\drawloop[ELside=l,loopCW=y, loopangle=180, loopdiam=5](q123){$a_2$}
\drawloop[ELside=l,loopCW=y, loopangle=45, loopdiam=5, ELdist=0](q12){$a_1$}
\end{picture}
    \label{fig:gwin-subset}
}

\hrule
\caption{The deterministic game $\G_{{\sf win}}$ where player~$1$ is almost-sure
 weakly synchronizing (from all states), and its subset construction. \label{fig:gamewin}}%
\Description{A deterministic game graph and its subset construction.}
\end{center}
\end{figure}
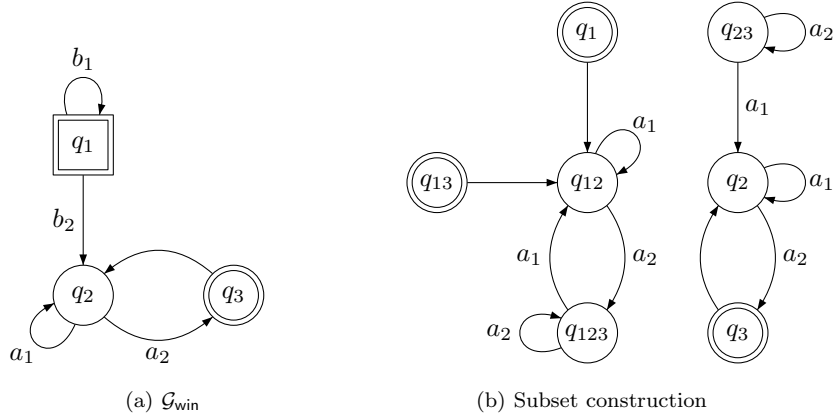

We illustrate the key technical insights with two examples.
First consider the deterministic game $\G_{{\sf win}}$ in \figurename~\ref{fig:gwin}, where the target set is $T = \{q_1,q_3\}$. 
Only state $q_2$ has a relevant choice for player~$1$, and only $q_1$ has a relevant choice for
player~$2$ (for the sake of clarity, we denote by $a_1,a_2$ the actions of player~$1$, and by 
$b_1,b_2$  the actions of player~$2$).

Player~$1$ is almost-sure weakly synchronizing in $T$ from every state. A winning strategy
plays $a_1$ at even rounds, and $a_2$ at odd rounds. Note that without the self-loop on $q_2$,
player~$1$ is no longer almost-sure weakly synchronizing in $T$ from $q_1$ 
(but still from $q_2$ and from $q_3$). 

Consider the subset construction in \figurename~\ref{fig:gwin-subset}, 
obtained by considering all subsets 
$q_{12} = \{q_1, q_2\}$, $q_{13} = \{q_1, q_3\}$, etc. of $Q$,
and with an edge from $s$ to $s'$ if there exists a selector $\alpha: Q \to \Act$
such that $s' = \delta_{\alpha}(s)$. 
Intuitively, the selector describes the actions played by a pure counting strategy of 
player~$1$ at a given round. \figurename~\ref{fig:gwin-subset} labels the 
edges $(s,s')$ with the action played in $q_2$ by the corresponding selector in $s$
(if relevant, i.e., if $q_2 \in s$). Accepting sets $s \subseteq T$
are marked by a double line.

An accepting strongly connected component is an SCC containing an accepting set, 
such as $\C = \{\{q_2\},\{q_3\}\}$ in our example. This is a witness that player~$1$
is almost-sure weakly synchronizing in $T$ from all states $q$ such that $\C$
is reachable from $\{q\}$ in $\P(\G)$. This sufficient condition for almost-sure
winning is not necessary, as player~$1$ is almost-sure
winning from $q_1$ as well, but the set $\{q_1\}$ cannot reach an accepting SCC 
in $\P(\G)$.    
However, we will show that if there is no accepting SCC in the subset construction,
then there is no state from which player~$1$ is almost-sure weakly synchronizing 
in $T$.

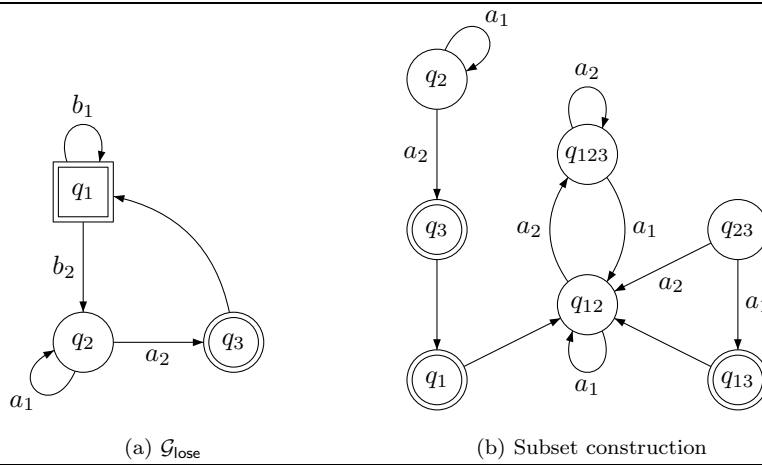
\begin{figure}[!tb]%
\begin{center}
\hrule
\subfloat[$\G_{{\sf lose}}${\large \strut}]{
   \begin{picture}(40,42)(0,10)

\node[Nmarks=r, Nmr=0](q1)(10,40){$q_1$}
\node[Nmarks=n](q2)(10,20){$q_2$}
\node[Nmarks=r](q3)(30,20){$q_3$}

\drawedge[ELside=r,ELpos=50, ELdist=1](q1,q2){$b_2$}
\drawedge[ELside=r,ELpos=50, ELdist=1](q2,q3){$a_2$}
\drawedge[ELside=l,ELpos=50, curvedepth=-6](q3,q1){}


\drawloop[ELside=l,loopCW=y, loopangle=90, loopdiam=5](q1){$b_1$}
\drawloop[ELside=l,loopCW=y, loopangle=225, loopdiam=5, ELdist=.5](q2){$a_1$}

\end{picture}
   \label{fig:glose}
}
\quad
\subfloat[Subset construction{\large \strut}]{
   \begin{picture}(60,55)(0,5)

\node[Nmarks=n](q2)(10,50){$q_{2}$}
\node[Nmarks=r](q3)(10,30){$q_{3}$}
\node[Nmarks=r](q1)(10,10){$q_{1}$}
\node[Nmarks=n](q123)(30,40){$q_{123}$}
\node[Nmarks=n](q12)(30,20){$q_{12}$}
\node[Nmarks=n](q23)(50,30){$q_{23}$}
\node[Nmarks=r](q13)(50,10){$q_{13}$}

\drawedge[ELside=r,ELpos=50, ELdist=1](q2,q3){$a_2$}
\drawedge[ELside=r,ELpos=50, ELdist=1](q3,q1){}
\drawedge[ELside=l,ELpos=50, curvedepth=0](q1,q12){}
\drawedge[ELside=l,ELpos=50, curvedepth=5](q12,q123){$a_2$}
\drawedge[ELside=l,ELpos=50, curvedepth=5](q123,q12){$a_1$}
\drawedge[ELside=l,ELpos=50, curvedepth=0](q23,q12){$a_2$}
\drawedge[ELside=l,ELpos=50, curvedepth=0](q23,q13){$a_1$}
\drawedge[ELside=l,ELpos=50, curvedepth=0](q13,q12){}


\drawloop[ELside=l,loopCW=y, loopangle=45, loopdiam=5, ELdist=.5](q2){$a_1$}
\drawloop[ELside=l,loopCW=y, loopangle=90, loopdiam=5](q123){$a_2$}
\drawloop[ELside=l,loopCW=y, loopangle=270, loopdiam=5](q12){$a_1$}

\end{picture}
   \label{fig:glose-subset}
}

\hrule
\caption{The deterministic game $\G_{{\sf lose}}$ where player~$1$ is not almost-sure
 weakly synchronizing (no matter the initial state), and its subset construction. \label{fig:gamelose}}%
\Description{A deterministic game graph and its subset construction.}
\end{center}
\end{figure}

This situation is illustrated in \figurename~\ref{fig:gamelose} where the
subset construction for the game $\G_{{\sf lose}}$ contains no accepting SCC,
and player~$1$ is not almost-sure weakly synchronizing in $T$
(no matter from which initial state). This is not trivial to see,
and we present the crux of the argument below. 
Although player~$1$ is not almost-sure weakly synchronizing in $T$,
it is not true either that player~$2$ can fix a strategy $\strab$ in $\G_{{\sf lose}}$
to prevent player~$1$ from almost-sure winning for weakly synchronizing in $T$.
This means that in general spoiling strategies can be constructed only after
a strategy for the other player has been fixed, which brings technical difficulty 
in the proofs.

\smallskip\noindent{\em Why player~$1$ can spoil player~$2$ in $\G_{{\sf lose}}$.}
Given an arbitrary strategy $\strab$ for player~$2$, we can construct
a strategy $\straa$ for player~$1$ such that the outcome sequence 
$\G^{\straa,\strab}_{{\sf lose}}$ (from any initial distribution)
is almost-sure weakly synchronizing in~$T$.

Consider the strategy $\straa_{loop}$ that always plays $a_1$ (to 
loop through $q_2$), and note that in the outcome 
$\G^{\straa_{loop},\strab}_{{\sf lose}} = d_0, d_1, \dots$
(from any initial distribution~$d_0$),
the probability mass in $q_2$ is non-decreasing, hence 
$\lim_{k \to \infty} d_k(q_2)$ exists, which we denote by $\alpha(d_0)$. 
We construct $\straa$ to play as follows, starting with $\epsilon = \frac{1}{2}$:
$(1)$ Given $\epsilon > 0$ and the current distribution $d$, 
play $\straa_{loop}$ for $n_{\epsilon}$ rounds, 
where $n_{\epsilon}$ is such that 
$d_{n_{\epsilon}}(q_2) \geq \alpha(d) - \epsilon$, 
then
$(2)$ play $a_2$ in the next round,
and $(3)$ repeat from $(1)$ with $\epsilon := \frac{\epsilon}{2}$.
In the outcome $\G^{\straa,\strab}_{{\sf lose}}$, after
playing $a_2$, the probability mass in $q_2$ is the 
probability mass transferred from $q_1$ in the previous step, 
which is at most $\epsilon$. It follows that the probability mass
in $T = \{q_1,q_3\}$ is at least $1-\epsilon$. The repetition
of this pattern for $\epsilon \to 0$ entails that 
$\G^{\straa,\strab}_{{\sf lose}}$ is almost-sure weakly 
synchronizing in $T$.

\begin{figure}[t]
\hrule
\begin{center}
    \newcommand{\token}{{\LARGE $\cdot$}}
\newcommand{\minitoken}{{\scriptsize $\cdot$}}
\newcommand{\btoken}{\green{\token}}
\newcommand{\rtoken}{\red{\token}}
\newcommand{\bminitoken}{\green{\minitoken}}
\newcommand{\rminitoken}{\red{\minitoken}}

\begin{picture}(83,38)(0,-2)

\gasset{Nh=5,Nw=4, Nmr=0, Nframe=n}

\gasset{AHnb=1, AHangle=30, AHLength=.8, AHlength=0}


\node[Nmarks=n](q01)(3,32){${1}$}

\nodelabel[ExtNL=n, NLdist=2](q01){\btoken\!\token\!\rtoken}


\node[Nmarks=n](q11)(10,32){$1$}
\node[Nmarks=n](q12)(10,25){${2}$}

\nodelabel[ExtNL=n, NLdist=2](q11){\token\!\rtoken}
\nodelabel[ExtNL=n, NLdist=2](q12){\btoken}

\drawedge[ELside=l,ELpos=45, ELdist=-.1](q01,q11){\minitoken\minitoken}
\drawedge[ELside=l,ELpos=45, ELdist=-.1](q01,q12){\minitoken}


\node[Nmarks=n](q21)(17,32){${1}$}
\node[Nmarks=n](q22)(17,25){${2}$}

\nodelabel[ExtNL=n, NLdist=2](q21){\token\!\rtoken}
\nodelabel[ExtNL=n, NLdist=2](q22){\btoken}

\drawedge[ELside=l,ELpos=45, ELdist=-.1](q11,q21){\minitoken\minitoken}
\drawedge[ELside=l,ELpos=45, ELdist=-.1, linegray=.6](q11,q22){}
\drawedge[ELside=l,ELpos=45, ELdist=-.1](q12,q22){\minitoken}


\node[Nmarks=n](q31)(24,32){${1}$}
\node[Nmarks=n](q32)(24,25){${2}$}
\node[Nmarks=n](q33)(24,18){${3}$}

\nodelabel[ExtNL=n, NLdist=2](q31){\rtoken}
\nodelabel[ExtNL=n, NLdist=2](q32){\token}
\nodelabel[ExtNL=n, NLdist=2](q33){\btoken}

\drawedge[ELside=l,ELpos=45, ELdist=-.1](q21,q31){\minitoken}
\drawedge[ELside=l,ELpos=45, ELdist=-.1](q21,q32){\minitoken}
\drawedge[ELside=l,ELpos=45, ELdist=-.1](q22,q33){\minitoken}


\node[Nmarks=n](q41)(31,32){$1$}
\node[Nmarks=n](q42)(31,25){${2}$}

\nodelabel[ExtNL=n, NLdist=2](q41){\btoken\!\rtoken}
\nodelabel[ExtNL=n, NLdist=2](q42){\token}

\drawedge[ELside=l,ELpos=45, ELdist=-.1](q31,q41){\minitoken}
\drawedge[ELside=l,ELpos=45, ELdist=-.1, linegray=.6](q31,q42){}
\drawedge[ELside=l,ELpos=52, ELdist=-.1](q32,q42){\minitoken}
\drawedge[ELside=l,ELpos=30, ELdist=.1, exo=-1, eyo=1](q33,q41){\minitoken}


\node[Nmarks=n](q51)(38,32){${1}$}
\node[Nmarks=n](q52)(38,25){${2}$}

\nodelabel[ExtNL=n, NLdist=2](q51){\btoken\!\rtoken}
\nodelabel[ExtNL=n, NLdist=2](q52){\token}

\drawedge[ELside=l,ELpos=45, ELdist=-.1](q41,q51){\minitoken\minitoken}
\drawedge[ELside=l,ELpos=45, ELdist=-.1, linegray=.6](q41,q52){}
\drawedge[ELside=l,ELpos=45, ELdist=-.1](q42,q52){\minitoken}


\node[Nmarks=n](q61)(45,32){$1$}
\node[Nmarks=n](q62)(45,25){${2}$}

\nodelabel[ExtNL=n, NLdist=2](q61){\btoken\!\rtoken}
\nodelabel[ExtNL=n, NLdist=2](q62){\token}

\drawedge[ELside=l,ELpos=45, ELdist=-.1](q51,q61){\minitoken\minitoken}
\drawedge[ELside=l,ELpos=45, ELdist=-.1, linegray=.6](q51,q62){}
\drawedge[ELside=l,ELpos=45, ELdist=-.1](q52,q62){\minitoken}


\node[Nmarks=n](q71)(52,32){${1}$}
\node[Nmarks=n](q72)(52,25){${2}$}
\node[Nmarks=n](q73)(52,18){${3}$}

\nodelabel[ExtNL=n, NLdist=2](q71){\rtoken}
\nodelabel[ExtNL=n, NLdist=2](q72){\btoken}
\nodelabel[ExtNL=n, NLdist=2](q73){\token}

\drawedge[ELside=l,ELpos=45, ELdist=-.1](q61,q71){\minitoken}
\drawedge[ELside=l,ELpos=45, ELdist=-.1](q61,q72){\minitoken}
\drawedge[ELside=l,ELpos=45, ELdist=-.1](q62,q73){\minitoken}


\node[Nmarks=n](q81)(59,32){${1}$}
\node[Nmarks=n](q82)(59,25){${2}$}
\node[Nmarks=n](q83)(59,18){${3}$}

\nodelabel[ExtNL=n, NLdist=2](q81){\token}
\nodelabel[ExtNL=n, NLdist=2](q82){\rtoken}
\nodelabel[ExtNL=n, NLdist=2](q83){\btoken}

\drawedge[ELside=l,ELpos=45, ELdist=-.1, linegray=.6](q71,q81){}
\drawedge[ELside=l,ELpos=45, ELdist=-.1](q71,q82){\minitoken}
\node[Nmarks=n, Nh=4,Nw=3](q72)(52,25){}
\drawedge[ELside=l,ELpos=45, ELdist=-.1](q72,q83){\minitoken}
\drawedge[ELside=l,ELpos=45, ELdist=0, exo=-1, eyo=1](q73,q81){\minitoken}


\node[Nmarks=n](q91)(66,32){${1}$}
\node[Nmarks=n](q92)(66,25){${2}$}
\node[Nmarks=n](q93)(66,18){${3}$}

\nodelabel[ExtNL=n, NLdist=2](q91){\btoken}
\nodelabel[ExtNL=n, NLdist=2](q92){\token}
\nodelabel[ExtNL=n, NLdist=2](q93){\rtoken}

\drawedge[ELside=l,ELpos=45, ELdist=-.1, linegray=.6](q81,q91){}
\drawedge[ELside=l,ELpos=45, ELdist=-.1](q81,q92){\minitoken}
\node[Nmarks=n, Nh=4,Nw=3](q82)(59,25){}
\drawedge[ELside=l,ELpos=45, ELdist=-.1](q82,q93){\minitoken}
\drawedge[ELside=l,ELpos=45, ELdist=0, exo=-1, eyo=1](q83,q91){\minitoken}


\node[Nmarks=n](qA1)(73,32){$1$}
\node[Nmarks=n](qA2)(73,25){${2}$}

\nodelabel[ExtNL=n, NLdist=2](qA1){\btoken\!\rtoken}
\nodelabel[ExtNL=n, NLdist=2](qA2){\token}

\drawedge[ELside=l,ELpos=45, ELdist=-.1](q91,qA1){\minitoken}
\drawedge[ELside=l,ELpos=45, ELdist=-.1, linegray=.6](q91,qA2){}
\drawedge[ELside=l,ELpos=52, ELdist=-.1](q92,qA2){\minitoken}
\drawedge[ELside=l,ELpos=30, ELdist=.1, exo=-1, eyo=1](q93,qA1){\minitoken}


\node[Nmarks=n](qB1)(80,32){${1}$}
\node[Nmarks=n](qB2)(80,25){${2}$}

\nodelabel[ExtNL=n, NLdist=2](qB1){\btoken\!\rtoken}
\nodelabel[ExtNL=n, NLdist=2](qB2){\token}

\drawedge[ELside=l,ELpos=45, ELdist=-.1](qA1,qB1){\minitoken\minitoken}
\drawedge[ELside=l,ELpos=45, ELdist=-.1, linegray=.6](qA1,qB2){}
\drawedge[ELside=l,ELpos=45, ELdist=-.1](qA2,qB2){\minitoken}


\node[Nmarks=n](label)(3,0){{\tiny ${0}$}}
\node[Nmarks=n](label)(10,0){{\tiny ${1}$}}
\node[Nmarks=n](label)(17,0){{\tiny ${2}$}}
\node[Nmarks=n](label)(24,0){{\tiny ${3}$}}
\node[Nmarks=n](label)(31,0){{\tiny ${4}$}}
\node[Nmarks=n](label)(38,0){{\tiny ${5}$}}
\node[Nmarks=n](label)(45,0){{\tiny ${6}$}}
\node[Nmarks=n](label)(52,0){{\tiny ${7}$}}
\node[Nmarks=n](label)(59,0){{\tiny ${8}$}}
\node[Nmarks=n](label)(66,0){{\tiny ${9}$}}
\node[Nmarks=n](label)(73,0){{\tiny ${10}$}}
\node[Nmarks=n](label)(80,0){{\tiny ${11}$}}

\node[Nmarks=n](label)(6.5,10){{\tiny $\begin{array}{ll} \frac{2}{3}b_1\\[+4pt] \frac{1}{3}b_2\end{array}$}}
\node[Nmarks=n](label)(13.5,10){{\tiny $b_1$}}
\node[Nmarks=n](label)(20.5,10){{\tiny $\begin{array}{ll} \frac{1}{2}b_1\\[+4pt] \frac{1}{2}b_2\end{array}$}}
\node[Nmarks=n](label)(27.5,10){{\tiny $b_1$}}
\node[Nmarks=n](label)(34.5,10){{\tiny $b_1$}}
\node[Nmarks=n](label)(41.5,10){{\tiny $b_1$}}
\node[Nmarks=n](label)(48.5,10){{\tiny $\begin{array}{ll} \frac{1}{2}b_1\\[+4pt] \frac{1}{2}b_2\end{array}$}}
\node[Nmarks=n](label)(55.5,10){{\tiny $b_2$}}
\node[Nmarks=n](label)(62.5,10){{\tiny $b_2$}}
\node[Nmarks=n](label)(69.5,10){{\tiny $b_1$}}
\node[Nmarks=n](label)(76.5,10){{\tiny $b_1$}}



\end{picture}
\end{center} 
\hrule
 \caption{Construction of a spoiling strategy for player~$2$ (in $\G_{{\sf lose}}$).
 \Description{A tree of paths, with some tokens labeling the edges.}
\label{fig:Glose-path}}
\end{figure}

\smallskip\noindent{\em Why player~$2$ can spoil player~$1$ in $\G_{{\sf lose}}$.}
We sketch the crux of the argument for initial state $q_1$,
showing that player~$2$ can spoil an arbitrary
strategy $\straa: \nat \times Q \to \Act$ for player~$1$ (that is 
pure and counting), which is an infinite sequence of selectors
and thus corresponds to an infinite path from $\{q_1\}$ in the subset construction 
(shown in \figurename~\ref{fig:glose-subset}).

Such an infinite path is shown in \figurename~\ref{fig:Glose-path}
where an edge $(s,s')$ labeled by a selector $\alpha$ is drawn as the set 
of edges $(q,q')$ such that $q \in s$ and $q' = \delta(q,\alpha(q),b)$
for some $b \in \Act$. Note that, from some point on, all sets in
such a path contain a non-target state $q \in Q \setminus T$,
and a spoiling strategy for player~$2$ must ensure a bounded probability mass
in $Q \setminus T$, at every round from some point on.

In the example, player~$2$ can ensure a probability mass of $\frac{1}{3}$
in state $q_2 \in Q \setminus T$ from the second round on. 
In \figurename~\ref{fig:Glose-path}, we put three tokens, each carrying 
a probability mass of $\frac{1}{3}$, in the initial state $q_1$ and we 
show how the three tokens can move along the edges to always maintain 
one token in $q_2$ (after the first round). The choice of which edge 
from $q_1$ is taken by a token is made by player~$2$ with the corresponding
action $b_1$ or $b_2$ (the corresponding randomized selector is shown 
below the figure for each round -- edges are drawn in gray if no token 
flows through it). It is easy to show that the pattern 
suggested in \figurename~\ref{fig:Glose-path} can be prolonged ad infinitum.
A key insight is that player~$2$ should not move a token from $q_1$ to $q_2$ 
at every round (which may cause a depletion of tokens), but 
only in the rounds where player~$1$ sends the probability mass from $q_2$
to $q_3$. 

Each token follows a play that is compatible with the strategy of player~$1$.
The set of three plays that are followed by the three tokens 
has the property that in every 
round after round~$1$ at least one of the plays is in $q_2$. We say that the plays (or 
the tokens) \emph{cover} the state $q_2$ from round~$2$ on. Note that it would be
easy to cover $q_2$ from some round $n_0$ on by using plays of the form 
$(q_1)^nq_2Q^{\omega}$ ($n=n_0,n_0+1,\dots$), but this is an infinite set of plays,
which would require infinitely many tokens and would not allow a positive 
lower bound on the probability mass of each token. 
The key to cover $q_2$ with a finite number of plays is
to reuse the tokens when possible. It is however not obvious in general how to 
construct finitely many such plays, given an arbitrary strategy of player~$1$. 

We show in Lemma~\ref{lem:SCC-somewhere} that in deterministic games, a fixed number $K$ of 
tokens (each representing a probability mass $\frac{1}{K}$) is sufficient for 
player~$2$ to spoil any pure counting strategy of player~$1$, 
where $K = 2^{\abs{Q}}$. 

\begin{lemma}\label{lem:SCC-somewhere}
The following equivalence holds in deterministic games: there exists a state $q$ from 
which player~$1$ has a pure counting strategy that is almost-sure 
winning for weakly synchronizing in $T$
if and only if 
there exist a strongly connected component $\C$ 
in $\P(\G)$ and a set $s \in \C$ that is accepting. 
\end{lemma}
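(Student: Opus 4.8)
The statement is an equivalence, so I would prove two implications. The easier direction is ($\Leftarrow$): suppose there exist an SCC $\C$ in $\P(\G)$ and an accepting set $s \in \C$ with $s \subseteq T$. Since $\C$ is strongly connected and contains $s$, there is a cyclic path in $\P(\G)$ that visits $s$ infinitely often; by concatenating the selectors along this cycle I obtain an infinite sequence of selectors, i.e.\ a pure counting strategy $\straa$, under which the reachable subset at infinitely many rounds equals $s \subseteq T$. I then invoke the central property of the subset construction: in a deterministic game, if the reachable support after $k$ rounds (against the uniform player-$2$ strategy) is exactly $s$, then the probability mass is entirely concentrated in $s \subseteq T$. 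Hence $d_k(T) = 1$ at infinitely many rounds $k$, which even gives sure (and therefore almost-sure) weakly synchronizing. One subtlety I must handle is that the set we reach must be \emph{exactly} $s$, not merely contained in it: here the monotonicity property and the fact that $\delta_\alpha$ is a deterministic image of the current subset let me track the support precisely, so mass stays fully inside $T$ at the accepting rounds.

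The hard direction is ($\Rightarrow$): I must show that if \emph{no} accepting SCC exists in $\P(\G)$, then from no state is player~$1$ almost-sure weakly synchronizing, i.e.\ I must construct a \emph{spoiling} strategy for player~$2$ against an \emph{arbitrary} fixed pure counting strategy $\straa$ of player~$1$. Following the paper's own roadmap and the $\G_{\sf lose}$ example, I view $\straa$ as an infinite path $s_1, s_2, \ldots$ in $\P(\G)$. The key structural fact to extract is: if $\P(\G)$ has no accepting SCC, then along \emph{any} infinite path, from some point $N$ on every visited set $s_k$ must contain a non-target state (otherwise the path would enter and cycle within an accepting SCC, as the reachable vertex set along an infinite path eventually stays inside a single SCC of $\P(\G)$). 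So for all $k \geq N$ there is a witness state $q_k \in s_k \setminus T$.

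The crux is then to convert this ``always a non-target state present'' property into a \emph{uniform} lower bound $\frac{1}{K}$ on the probability mass outside $T$, holding simultaneously at every round from some point on. I would do this with the token argument sketched in the text: place $K = 2^{\abs{Q}}$ tokens, each carrying mass $\frac{1}{K}$, and maintain the invariant that at every round (past the threshold) at least one token sits on a non-target witness state $q_k$. Each token traces a play of $\G$ compatible with $\straa$ via the central property (so player~$2$'s action choices determine which concrete edge realizes each token's move). The essential combinatorial point—and the step I expect to be the main obstacle—is showing that a \emph{fixed finite} number of tokens suffices to cover the non-target witnesses at \emph{every} round past $N$: naively one would need a fresh play for each round, hence infinitely many tokens. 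The resolution is to \emph{reuse} tokens, moving a token onto the non-target state only at the rounds where player~$1$ evacuates mass from the current non-target witness (as in the $q_2 \to q_3$ observation in $\G_{\sf lose}$), and to bound the number of simultaneously ``live'' plays needed by the number of distinct subsets, $2^{\abs{Q}}$. I would formalize the covering scheme as a finite-state bookkeeping over which subset-vertices need a witness, argue that a token freed from an old witness can be redirected to a new one using strong connectivity / reachability within the relevant part of $\P(\G)$, and conclude that the invariant is maintainable forever. This guarantees $d_k(Q \setminus T) \geq \frac{1}{K}$ for all large $k$ against player~$2$'s constructed strategy, so $d_k(T) \leq 1 - \frac{1}{K}$ infinitely often—in fact cofinitely often—defeating almost-sure weakly synchronizing for every $\epsilon < \frac{1}{K}$, and the equivalence follows.
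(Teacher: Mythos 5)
Your ($\Leftarrow$) direction is sound and matches the paper's: cycling through the SCC yields an infinite sequence of selectors, i.e.\ a pure counting strategy, and since the support of any outcome at round $k$ is contained in the $k$-th set of the corresponding path in $\P(\G)$, the whole mass lies in $T$ whenever the path visits the accepting set (this even gives sure winning). Note that your ``must be exactly $s$'' worry is inverted: containment of the support in $s$ is all that is needed, and it is automatic by monotonicity. Your ($\Rightarrow$) direction also starts exactly as the paper's: contrapositive, view $\straa$ as an infinite path $s_0 s_1 \dots$ in $\P(\G)$, observe that the infinitely visited vertices form an SCC so from some $i_0$ on every $s_i$ meets $Q \setminus T$, and then try to cover the witnesses with $K = 2^{\abs{Q}}$ tokens of mass $\frac{1}{K}$ each. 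But the step you yourself flag as ``the main obstacle''---that a \emph{fixed finite} family of plays compatible with $\straa$ can place a token on a non-target witness at \emph{every} round from some point on---is precisely the technical content of the lemma, and the two devices you propose for it do not close it.

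A token is a play prefix: if at round $i$ it sits at a state $q \in s_i$, at round $i+1$ it can only sit at some $\delta(q,\straa(i)(q),b)$ with $b \in \Act$. The witness $q_{i+1} \in s_{i+1} \setminus T$ is reachable from the \emph{initial} state by some compatible play (central property), but not necessarily in one step from $q$, nor at the right round from any state a token currently occupies; so ``redirecting a freed token using strong connectivity / reachability within the relevant part of $\P(\G)$'' is not an available move---reachability among subsets says nothing about where an individual play, already committed up to round $i$, can be at round $i+1$. Likewise, ``move a token onto the witness when player~$1$ evacuates mass'' is the intuition of the $\G_{{\sf lose}}$ example, not a general rule. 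What is missing is the paper's register data structure, maintained online in place of actual plays: registers hold \emph{sets} of possible token positions, are updated by the current selector, and when some register's image meets $Q \setminus T$, the rightmost such register is replaced by a singleton witness cloned out of it (a brand-new register is spawned from $s_{i+1} \cap (Q\setminus T)$ only when all images are accepting). This ordering discipline yields the key invariant that a register with $k$ registers to its right has at least $k$ accepting ancestors; since more than $2^{\abs{Q}}$ accepting ancestors would repeat an accepting set and hence create a cycle through an accepting set---an accepting SCC, contradicting the assumption---the number of registers is bounded and eventually stabilizes at some $K \leq 2^{\abs{Q}}$. Only then, a posteriori and by backward concatenation along the permutation classes of registers (chain property plus central property), are the $K$ actual plays extracted, each carrying mass $\frac{1}{K}$. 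Without this bookkeeping, or an equivalent device, your covering invariant cannot be maintained and the bound $2^{\abs{Q}}$ has no justification.
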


\begin{proof}
First, if there exists a strongly connected component $\C$ 
in $\P(\G)$ and an accepting set $s \in \C$, then there exists an infinite
path $s_0 s_1 \dots$ in $\P(\G)$ from $s_0 = s$ that visits $s$ infinitely often. 
Consider the corresponding sequence of selectors $\alpha_0 \alpha_1 \dots$ (such that
$s_{i+1} = \delta_{\alpha_i}(s_i)$). It is easy to see that from all states $q\in s$,
the pure counting strategy $\straa$ defined by 
$\straa(i,q) = \alpha_i(q)$ is sure (thus also almost-sure) winning for weakly 
synchronizing in $s \subseteq T$.

For the second direction, we prove the contrapositive. Assume that no SCC
in $\P(\G)$ contains an accepting set, and show that from all states~$q_0$ 
no pure counting strategy for player~$1$ is almost-sure winning for 
weakly synchronizing in $T$.

Let $\straa: \nat \to (Q \to \Act)$ be a pure counting strategy for player~$1$, 
and we construct a spoiling strategy for player~$2$ from $q_0$. 
First, consider the sequence $s_0, s_1, \dots$ defined by $s_0 = \{q_0\}$
and $s_{i+1} = \delta_{\straa(i)}(s_{i})$, and as this sequence is a path
in the subset construction $\P(\G)$, by our assumption 
there exists an index $i_0$ such that $s_i$ is non-accepting for all $i \geq i_0$. 
Let $\Reject = Q \setminus T$. We have: 
\begin{linenomath*}
$$ s_i \cap \Reject \neq \emptyset \text{ for all } i \geq i_0.$$
\end{linenomath*}
By the central property of the subset construction, for every set $s_i$
and every state $q_i \in s_i$ (in particular for $q_i \in s_i \cap \Reject$ 
if $i \geq i_0$), there exists a play of length $i$ from $q_0$ to
$q_i$ that is compatible with the strategy $\straa$. Player~$2$ can use
such plays to inject positive probability into $q_i$ at every position $i$. 
However, if all those plays form an infinite set (as illustrated by the 
plays $(q_1)^nq_2Q^{\omega}$ in the example of \figurename~\ref{fig:glose}), 
then player~$2$ may not be able to guarantee a bounded probability mass  
in $q_i$ at every round $i$ from some point on. 

Now we construct a data structure that will help player~$2$ to determine
their strategy and to construct a \emph{finite} set $\Pi$ of plays 
compatible with $\straa$ in $\G$
such that, for all $i \geq i_0$, there exists a play $\pi \in \Pi$
with $\Last(\pi(i)) \in \Reject$.

The data structure consists, for each position $i \geq i_0$, 
of a tuple $u_i = \tuple{r_1, \dots, r_k}$ of registers, 
each storing a nonempty subset of $Q$.  
The number of registers is not fixed (but will never decrease along the sequence
$u_{i_0}, u_{i_0+1}, \dots$), and for $i = i_0$ let 
$u_{i_0} = \tuple{r_1}$, thus $u_{i_0}$ consists of one register,
and let $r_1 = \{q_{i_0} \}$ where $q_{i_0} \in s_{i_0} \cap \Reject$.
Each register $r$ in $u_i$ corresponds to one token, and stores the possible
states in which the token can be after $i$ steps by following a play 
compatible with $\straa$.

Given $u_i = \tuple{r_1, \dots, r_k}$, define $u_{i+1}$ as follows:
first, let $r'_j = \straa(i)(r_{j})$ be the set obtained from $r_j$
by following the selector $\straa(i)$ at position $i$ (which is
the same selector used to define $s_{i+1}$ from $s_{i}$). We consider
two cases:

\begin{enumerate}
\item if all registers are accepting, that is 
$r'_j \cap \Reject = \emptyset \text{ for all } 1 \leq j \leq k$:
let $q_{i+1} \in s_{i+1} \cap \Reject$ and create a new register $r'_{k+1} = \{q_{i+1}\}$,
define  $u_{i+1} = \tuple{\underbrace{r'_1, \dots, r'_k}_{\text{accepting}}, r'_{k+1}}$;

\item otherwise, some register is non-accepting, and let $j$ be the largest 
index such that $r'_j \cap \Reject \neq \emptyset$. 
Let $q_{i+1} \in r'_j \cap \Reject$ and we remove the register at position $j$,
and replace it by a new register $r'_{k+1} = \{q_{i+1}\}$ at the end of the tuple.
Define $u_{i+1} = \tuple{r'_1, \dots, r'_{j-1}, \underbrace{r'_{j+1}, \dots, r'_k}_{\text{accepting}}, r'_{k+1}}$.
\end{enumerate}

In both cases, we say that the parent of a register $r'_l$ (for $l \leq k$)
that occurs in $u_{i+1}$ 
is the register $r_l$ in $u_{i}$, and that $r'_{k+1}$ has no parent (in the second 
case above, we say that $r'_{k+1}$ is a clone child of $r_j$). An ancestor
of a register $r$ in $u_{i}$ is either $r$ or an ancestor of the parent of $r$
(but not of the clone parent of $r$).
In the sequel, we assume that the registers in a tuple $u_i$ are called
$r_1, r_2, \dots$ with consecutive indices in the order they appear in $u_i$.

We now state key invariant properties of the sequence $u_{i_0}, u_{i_0+1}, \dots$,
for all $i \geq i_0$:
\begin{itemize}
\item (\emph{consistency property}) in every $u_i$, for every register $r$ in $u_i$
we have $r \subseteq s_{i}$; 
\item (\emph{singleton property}) in every $u_i$, the rightmost register is a 
singleton containing a non-accepting state; 
\item (\emph{chain property}) for all $j \geq i$, every chain of registers (with order defined by the 
parent relation) from an ancestor register $r$ in $u_i$ to a register $r'$ in $u_j$ 
is a path in 
the subset construction labeled by the selectors $\straa(i), \dots, \straa(j)$;
\item (\emph{key property}) in every $u_i$, if there are $k$ registers at the right
of a register $r$, then $r$ has at least $k$ ancestors that are accepting.
\end{itemize}
It is easy to verify that these properties hold by construction of $u_{i_0}$,
and of $u_{i+1}$ from $u_{i}$ (by induction). In particular the key property
holds because whenever a register is appended (or moved) to the right of a
register $r$, then $r$ is accepting.

By our initial assumption, a register $r$ cannot have more than $2^{\abs{Q}}$ 
ancestors that are accepting (using the chain property).
Then it follows from the key property that a tuple $u_i$ cannot have more
than $2^{\abs{Q}}$ registers, and since the number of registers is not decreasing, 
there is an index $i^*_0$ such that all $u_{i}$, for $i \geq i^*_0$,
contain the same number $K \leq 2^{\abs{Q}}$ of registers. 
We are now ready to construct the set $\Pi$, containing $K$ plays.

For each $i > i^*_0$, consider the permutation $f_i$ on $\{1,\dots, K\}$
that maps index $j$ to $k = f_i(j)$ such that $r_{k}$ in $u_{i-1}$ is the parent 
of $r_j$ in $u_{i}$ (or clone parent if $r_j$ has no parent).

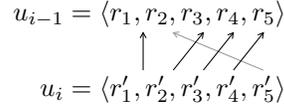
\begin{figure}[h]
\begin{center}
    \newcommand{\token}{{\LARGE $\cdot$}}
\newcommand{\minitoken}{{\scriptsize $\cdot$}}
\newcommand{\btoken}{\green{\token}}
\newcommand{\rtoken}{\red{\token}}
\newcommand{\bminitoken}{\green{\minitoken}}
\newcommand{\rminitoken}{\red{\minitoken}}

\begin{picture}(40,15)(0,0)

\gasset{Nh=5,Nw=4, Nmr=0, Nframe=n}

\gasset{AHnb=0, AHangle=30, AHLength=.8, AHlength=0}

\gasset{ATnb=1, ATangle=30, ATLength=.8, ATlength=0}


\node[Nmarks=n,Nw=60](q0)(37,10){\makebox(0,0)[r]{$u_{i-1} = \tuple{r_1,r_2,r_3,r_4,r_5}$}}
\node[Nmarks=n,Nw=60](q1)(37,0){\makebox(0,0)[r]{$u_{i} = \tuple{r'_1,r'_2,r'_3,r'_4,r'_5}$}}


\drawedge[ELside=l, curvedepth=0, syo=-2.5, eyo=2.5, sxo=-19, exo=-19](q0,q1){}
\drawedge[ELside=l, curvedepth=0, linegray=.6, syo=-2.5, eyo=2.5, sxo=-15, exo=-3](q0,q1){}
\drawedge[ELside=l, curvedepth=0, syo=-2.5, eyo=2.5, sxo=-11, exo=-15](q0,q1){}
\drawedge[ELside=l, curvedepth=0, syo=-2.5, eyo=2.5, sxo=-7, exo=-11](q0,q1){}
\drawedge[ELside=l, curvedepth=0, syo=-2.5, eyo=2.5, sxo=-3, exo=-7](q0,q1){}






\end{picture}
\end{center} 
 \caption{A permutation on registers. \label{fig:permutation1}}
 \Description{Two tuples of registers, with arcs connecting the pairs in a permutation.}
\end{figure}


Following the permutation $f_i$ at position $i$ (for $i > i^*_0$), 
we can define $K$ equivalence classes $C_1,\dots,C_k$ of registers that contain exactly 
one register from each $u_i$: 
the register $r_j$ in the tuple $u_i$ belongs to the class $C_k$ with 
$k = (f_{i^*_0+1} \circ f_{i^*_0+2} \circ \dots \circ f_i)(j)$,
see the illustration in \figurename~\ref{fig:permutation2}.
Note in particular that every rightmost register $r_K$ (highlighted 
in \figurename~\ref{fig:permutation2}), which is a singleton, belongs 
to some class.

\begin{figure}[h]
\begin{center}
    \newcommand{\token}{{\LARGE $\cdot$}}
\newcommand{\minitoken}{{\scriptsize $\cdot$}}
\newcommand{\btoken}{\green{\token}}
\newcommand{\rtoken}{\red{\token}}
\newcommand{\bminitoken}{\green{\minitoken}}
\newcommand{\rminitoken}{\red{\minitoken}}

\begin{picture}(40,45)(0,0)

\gasset{Nh=.5,Nw=.5, Nmr=.25, Nframe=y, Nfill=y}

\gasset{AHnb=0, AHangle=30, AHLength=.8, AHlength=0}

\gasset{ATnb=0, ATangle=30, ATLength=.8, ATlength=0}



\node[Nmarks=n, NLext=y, NLangle=90, NLdist=2](q00)(5,40){{\scriptsize $C_1$}}
\node[Nmarks=n, NLext=y, NLangle=90, NLdist=2](q01)(10,40){{\scriptsize $C_2$}}
\node[Nmarks=n, NLext=y, NLangle=90, NLdist=2](q02)(15,40){{\scriptsize $C_3$}}
\node[Nmarks=n, NLext=y, NLangle=90, NLdist=2](q03)(20,40){{\scriptsize $C_4$}}
\node[Nmarks=n, NLext=y, NLangle=90, NLdist=2, Nfill=n, Nh=.8, Nw=.8, Nmr=.4](q04)(25,40){{\scriptsize $C_5$}}

\node[Nmarks=n, Nframe=n, Nfill=n](label)(30,40){\makebox(0,0)[l]{\small $i^*_0$}}

\node[Nframe=n, Nfill=n, Nh=1.5, Nw=1.5, Nmr=.75](q00)(5,40){}
\node[Nframe=n, Nfill=n, Nh=1.5, Nw=1.5, Nmr=.75](q01)(10,40){}
\node[Nframe=n, Nfill=n, Nh=1.5, Nw=1.5, Nmr=.75](q02)(15,40){}
\node[Nframe=n, Nfill=n, Nh=1.5, Nw=1.5, Nmr=.75](q03)(20,40){}
\node[Nframe=n, Nfill=n, Nh=2, Nw=2, Nmr=1](q04)(25,40){}


\node[Nmarks=n](q10)(5,35){}
\node[Nmarks=n](q11)(10,35){}
\node[Nmarks=n](q12)(15,35){}
\node[Nmarks=n](q13)(20,35){}
\node[Nmarks=n, Nfill=n, Nh=.8, Nw=.8, Nmr=.4](q14)(25,35){}

\node[Nmarks=n, Nframe=n, Nfill=n](label)(30,35){\makebox(0,0)[l]{\small $i^*_0+1$}}

\node[Nframe=n, Nfill=n, Nh=1.5, Nw=1.5, Nmr=.75](q10)(5,35){}
\node[Nframe=n, Nfill=n, Nh=1.5, Nw=1.5, Nmr=.75](q11)(10,35){}
\node[Nframe=n, Nfill=n, Nh=1.5, Nw=1.5, Nmr=.75](q12)(15,35){}
\node[Nframe=n, Nfill=n, Nh=1.5, Nw=1.5, Nmr=.75](q13)(20,35){}
\node[Nframe=n, Nfill=n, Nh=2, Nw=2, Nmr=1](q14)(25,35){}

\drawedge[ELside=l, curvedepth=0](q00,q10){}
\drawedge[ELside=l, curvedepth=0](q01,q11){}
\drawedge[ELside=l, curvedepth=0](q02,q12){}
\drawedge[ELside=l, curvedepth=0, linegray=.6](q03,q14){}
\drawedge[ELside=l, curvedepth=0](q04,q13){}


\node[Nmarks=n](q20)(5,30){}
\node[Nmarks=n](q21)(10,30){}
\node[Nmarks=n](q22)(15,30){}
\node[Nmarks=n](q23)(20,30){}
\node[Nmarks=n, Nfill=n, Nh=.8, Nw=.8, Nmr=.4](q24)(25,30){}

\node[Nmarks=n, Nframe=n, Nfill=n](label)(30,30){\makebox(0,0)[l]{\small $i^*_0+2$}}

\node[Nframe=n, Nfill=n, Nh=1.5, Nw=1.5, Nmr=.75](q20)(5,30){}
\node[Nframe=n, Nfill=n, Nh=1.5, Nw=1.5, Nmr=.75](q21)(10,30){}
\node[Nframe=n, Nfill=n, Nh=1.5, Nw=1.5, Nmr=.75](q22)(15,30){}
\node[Nframe=n, Nfill=n, Nh=1.5, Nw=1.5, Nmr=.75](q23)(20,30){}
\node[Nframe=n, Nfill=n, Nh=2, Nw=2, Nmr=1](q24)(25,30){}

\drawedge[ELside=l, curvedepth=0, linegray=.6](q10,q24){}
\drawedge[ELside=l, curvedepth=0](q11,q20){}
\drawedge[ELside=l, curvedepth=0](q12,q21){}
\drawedge[ELside=l, curvedepth=0](q13,q22){}
\drawedge[ELside=l, curvedepth=0](q14,q23){}


\node[Nmarks=n](q30)(5,25){}
\node[Nmarks=n](q31)(10,25){}
\node[Nmarks=n](q32)(15,25){}
\node[Nmarks=n](q33)(20,25){}
\node[Nmarks=n, Nfill=n, Nh=.8, Nw=.8, Nmr=.4](q34)(25,25){}

\node[Nmarks=n, Nframe=n, Nfill=n](label)(30,25){\makebox(0,0)[l]{\small $i^*_0+3$}}

\node[Nframe=n, Nfill=n, Nh=1.5, Nw=1.5, Nmr=.75](q30)(5,25){}
\node[Nframe=n, Nfill=n, Nh=1.5, Nw=1.5, Nmr=.75](q31)(10,25){}
\node[Nframe=n, Nfill=n, Nh=1.5, Nw=1.5, Nmr=.75](q32)(15,25){}
\node[Nframe=n, Nfill=n, Nh=1.5, Nw=1.5, Nmr=.75](q33)(20,25){}
\node[Nframe=n, Nfill=n, Nh=2, Nw=2, Nmr=1](q34)(25,25){}

\drawedge[ELside=l, curvedepth=0](q20,q30){}
\drawedge[ELside=l, curvedepth=0](q21,q31){}
\drawedge[ELside=l, curvedepth=0, linegray=.6](q22,q34){}
\drawedge[ELside=l, curvedepth=0](q23,q32){}
\drawedge[ELside=l, curvedepth=0](q24,q33){}


\node[Nmarks=n](q40)(5,20){}
\node[Nmarks=n](q41)(10,20){}
\node[Nmarks=n](q42)(15,20){}
\node[Nmarks=n](q43)(20,20){}
\node[Nmarks=n, Nfill=n, Nh=.8, Nw=.8, Nmr=.4](q44)(25,20){}

\node[Nmarks=n, Nframe=n, Nfill=n](label)(30,20){\makebox(0,0)[l]{\small $i^*_0+4$}}

\node[Nframe=n, Nfill=n, Nh=1.5, Nw=1.5, Nmr=.75](q40)(5,20){}
\node[Nframe=n, Nfill=n, Nh=1.5, Nw=1.5, Nmr=.75](q41)(10,20){}
\node[Nframe=n, Nfill=n, Nh=1.5, Nw=1.5, Nmr=.75](q42)(15,20){}
\node[Nframe=n, Nfill=n, Nh=1.5, Nw=1.5, Nmr=.75](q43)(20,20){}
\node[Nframe=n, Nfill=n, Nh=2, Nw=2, Nmr=1](q44)(25,20){}

\drawedge[ELside=l, curvedepth=0](q30,q40){}
\drawedge[ELside=l, curvedepth=0](q31,q41){}
\drawedge[ELside=l, curvedepth=0](q32,q42){}
\drawedge[ELside=l, curvedepth=0](q33,q43){}
\drawedge[ELside=l, curvedepth=0, linegray=.6](q34,q44){}


\node[Nmarks=n](q50)(5,15){}
\node[Nmarks=n](q51)(10,15){}
\node[Nmarks=n](q52)(15,15){}
\node[Nmarks=n](q53)(20,15){}
\node[Nmarks=n, Nfill=n, Nh=.8, Nw=.8, Nmr=.4](q54)(25,15){}

\node[Nmarks=n, Nframe=n, Nfill=n](label)(30,15){\makebox(0,0)[l]{\small $i^*_0+5$}}

\node[Nframe=n, Nfill=n, Nh=1.5, Nw=1.5, Nmr=.75](q50)(5,15){}
\node[Nframe=n, Nfill=n, Nh=1.5, Nw=1.5, Nmr=.75](q51)(10,15){}
\node[Nframe=n, Nfill=n, Nh=1.5, Nw=1.5, Nmr=.75](q52)(15,15){}
\node[Nframe=n, Nfill=n, Nh=1.5, Nw=1.5, Nmr=.75](q53)(20,15){}
\node[Nframe=n, Nfill=n, Nh=2, Nw=2, Nmr=1](q54)(25,15){}

\drawedge[ELside=l, curvedepth=0](q40,q50){}
\drawedge[ELside=l, curvedepth=0, linegray=.6](q41,q54){}
\drawedge[ELside=l, curvedepth=0](q42,q51){}
\drawedge[ELside=l, curvedepth=0](q43,q52){}
\drawedge[ELside=l, curvedepth=0](q44,q53){}


\node[Nmarks=n](q60)(5,10){}
\node[Nmarks=n](q61)(10,10){}
\node[Nmarks=n](q62)(15,10){}
\node[Nmarks=n](q63)(20,10){}
\node[Nmarks=n, Nfill=n, Nh=.8, Nw=.8, Nmr=.4](q64)(25,10){}

\node[Nmarks=n, Nframe=n, Nfill=n](label)(30,10){\makebox(0,0)[l]{\small $i^*_0+6$}}

\node[Nframe=n, Nfill=n, Nh=1.5, Nw=1.5, Nmr=.75](q60)(5,10){}
\node[Nframe=n, Nfill=n, Nh=1.5, Nw=1.5, Nmr=.75](q61)(10,10){}
\node[Nframe=n, Nfill=n, Nh=1.5, Nw=1.5, Nmr=.75](q62)(15,10){}
\node[Nframe=n, Nfill=n, Nh=1.5, Nw=1.5, Nmr=.75](q63)(20,10){}
\node[Nframe=n, Nfill=n, Nh=2, Nw=2, Nmr=1](q64)(25,10){}

\drawedge[ELside=l, curvedepth=0](q50,q60){}
\drawedge[ELside=l, curvedepth=0](q51,q61){}
\drawedge[ELside=l, curvedepth=0](q52,q62){}
\drawedge[ELside=l, curvedepth=0, linegray=.6](q53,q64){}
\drawedge[ELside=l, curvedepth=0](q54,q63){}


\node[Nmarks=n](q70)(5,5){}
\node[Nmarks=n](q71)(10,5){}
\node[Nmarks=n](q72)(15,5){}
\node[Nmarks=n](q73)(20,5){}
\node[Nmarks=n, Nfill=n, Nh=.8, Nw=.8, Nmr=.4](q74)(25,5){}

\node[Nmarks=n, Nframe=n, Nfill=n](label)(30,5){\makebox(0,0)[l]{\small $i^*_0+7$}}

\node[Nframe=n, Nfill=n, Nh=1.5, Nw=1.5, Nmr=.75](q70)(5,5){}
\node[Nframe=n, Nfill=n, Nh=1.5, Nw=1.5, Nmr=.75](q71)(10,5){}
\node[Nframe=n, Nfill=n, Nh=1.5, Nw=1.5, Nmr=.75](q72)(15,5){}
\node[Nframe=n, Nfill=n, Nh=1.5, Nw=1.5, Nmr=.75](q73)(20,5){}
\node[Nframe=n, Nfill=n, Nh=2, Nw=2, Nmr=1](q74)(25,5){}

\drawedge[ELside=l, curvedepth=0](q60,q70){}
\drawedge[ELside=l, curvedepth=0](q61,q71){}
\drawedge[ELside=l, curvedepth=0, linegray=.6](q62,q74){}
\drawedge[ELside=l, curvedepth=0](q63,q72){}
\drawedge[ELside=l, curvedepth=0](q64,q73){}


\node[Nmarks=n, Nframe=n, Nfill=n](q80)(5,2){}
\node[Nmarks=n, Nframe=n, Nfill=n](q81)(10,2){}
\node[Nmarks=n, Nframe=n, Nfill=n](q82)(15,2){}
\node[Nmarks=n, Nframe=n, Nfill=n](q83)(20,2){}
\node[Nmarks=n, Nframe=n, Nfill=n](q84)(25,2){}

\drawedge[ELside=l, curvedepth=0, dash={.2 .5}0](q70,q80){}
\drawedge[ELside=l, curvedepth=0, dash={.2 .5}0](q71,q81){}
\drawedge[ELside=l, curvedepth=0, dash={.2 .5}0](q72,q82){}
\drawedge[ELside=l, curvedepth=0, dash={.2 .5}0](q73,q83){}
\drawedge[ELside=l, curvedepth=0, dash={.2 .5}0](q74,q84){}








\end{picture}
\end{center} 
 \caption{A sequence of permutations on registers. \label{fig:permutation2}}
 \Description{Several tuples of registers, with arcs connecting the pairs in a permutation on successive tuples.}
\end{figure}
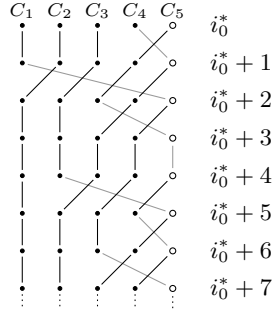

Using the central property of the subset construction, from those $K$ classes 
we construct $K$ infinite plays in $\G$, all 
compatible with $\straa(i^*_0), \straa(i^*_0+1), \dots$, and such that for all
$i \geq i^*_0$, for all registers $r_j$ in $u_i$, one of the plays is in 
a state of $r_j$ at position $i$. In particular, since $r_K$ is always
a non-accepting singleton, the constructed plays cover a non-accepting state
at every position $i \geq i^*_0$. 

Given an equivalence class $C$, consider the register $r$ of $u_i$ in $C$,
and the register $r'$ of $u_{i+1}$ in $C$. Then, by the central property 
of the subset construction, for all states $q' \in r'$, there exists
a state $q \in r$ such that $q' = \delta(q,\alpha_i(q),b)$
for some action $b \in \Act$ of player~$2$ (no matter whether $r$ is the parent
or clone parent of $r'$).
Now consider all \emph{singular} positions $i$ where $r_K$ (the rightmost, singleton, register) 
is the register of $u_i$ in $C$. Between any such two positions $i_1 < i_2$,  
there exists a (segment of) play in $\G$ from the state in the register $r_K$ at position $i_1$
to the state in the register $r_K$ at position $i_2$, that is compatible 
with the strategy $\straa$ (at the corresponding positions $i_1,i_1+1, \dots, i_2$)
using the chain property. 
Analogously, from some state in $s_{i^*_0}$ (using the consistency property) 
there is a segment of play to the state in register $r_K$ at the first 
singular position $i_0$. The constructed segments can be concatenated to form
a single play, and if this play is finite, we can prolong it to
an infinite play compatible with~$\straa$.

Given the $K$ plays in $\G$ constructed in this way from $C_1,\dots,C_k$ , 
it is easy to construct a strategy for player~$2$ 
that ensures a probability mass of $\frac{1}{K}$ (a token) will move along each
of the $K$ plays (possibly using randomization), and therefore 
a probability mass of at least $\frac{1}{K}$ in a non-accepting state
at every round $i \geq i^*_0$, showing that the strategy $\straa$ of player~$1$
is not almost-sure winning for weakly synchronizing in $T$,
which concludes the proof.
\end{proof}

In the deterministic game $\G_{{\sf win}}$ of \figurename~\ref{fig:gwin},
the strongly connected component $\C = \{\{q_2\},\{q_3\}\}$ in the subset 
construction $\P(\G_{{\sf win}})$,
which contains the accepting set $U = \{q_2\}$, shows that player~$1$ is almost-sure
winning from some state (Lemma~\ref{lem:SCC-somewhere}), 
namely from $q_2$ and from $q_3$. This holds even if there is no self-loop on $q_2$.
However, whether player~$1$ is almost-sure winning from $q_1$ depends on the
presence of that self-loop: with the self-loop on $q_2$, the period of the 
SCC $\C$ is $p=1$ (see \figurename~\ref{fig:gwin-subset}) and player~$1$ is almost-sure 
winning from $q_1$, 
whereas without the self-loop, the period of $\C$ is $p=2$ and player~$1$ is 
not almost-sure winning from $q_1$ (player~$2$ can inject an equal 
mass of probability from $q_1$ to $q_2$ in two successive rounds, and 
as those masses can never merge, the probability mass in $q_2$ is always bounded 
away from $1$). In fact, player~$1$ needs to ensure that any probability mass
injected in $\C$ is always injected at the same round (modulo $p$), where $p$ 
is the period of $\C$.

To track the number of rounds modulo $p$, define the game $\G \times [p]$
that follows the transitions of $\G$ and decrements a tracking counter (modulo $p$)
along each transition (
\figurename~\ref{fig:gwin-counter-all}). Formally, let  $\G \times [p] = \tuple{Q', \Act, \delta'}$
where $Q' = Q \times \{p-1,\dots,1,0\}$ and $\delta'$ is defined as follows,
for all $\tuple{q,i}, \tuple{q',j} \in Q'$ and $a \in \Act$:
\begin{linenomath*}
$$\delta'(\tuple{q,i},a)(\tuple{q',j}) = 
\begin{cases}
\delta(q,a)(q') & \text{ if } j=i-1 \!\!\mod p,\\
0 & \text{ otherwise.}
\end{cases}
$$
\end{linenomath*}



Given $0 \leq t_0 < p$,
there is a bijection $\mu_{t_0}$\label{ref:bijection} between the histories 
in $\G$ and in $\G \times [p]$ that maps
the history $q_0 \, a_0b_0 \, q_1 \ldots q_k$ in $\G$ 
to the history $(q_0,t_0) \, a_0b_0 \, (q_1,t_1) \ldots (q_k, t_k)$ in $\G \times [p]$
where $t_i = t_{i-1} - 1 \mod p$ for all $0 < i \leq k$.
Therefore, when $t_0$ is fixed, strategies in $\G$ can be transformed into
strategies in $\G \times [p]$, via this bijection. 
In the sequel, we take the freedom to omit mentioning that this bijection
needs to be applied, and we consider that strategies can be played 
both in $\G$ and in $\G \times [p]$.
We say that a distribution $d$ on $Q \times \{p-1,\dots,1,0\}$ is 
\emph{proper} if there exists an index $0 \leq j < p$ such that if 
$\Supp(d) \subseteq Q \times \{j\}$. We also omit the bijection $\mu_{j}$ between 
proper distributions and distributions on $Q$ (when the index $j$ is not relevant, 
or clear from the context). 
For a distribution $d$ in $Q$, we denote by $d \times \{j\}$ 
the corresponding proper distribution such that $d \times \{j\}(q,i)$ 
is equal to $d(q)$ if $i=j$ (and $0$ otherwise).

A simple property relating the game $\G$ with the games $\G \times [p]$
is that player~$1$ can fix (in advance, regardless of the strategy of player~$2$) 
the value of the counter when synchronization occurs in $T$.

\begin{lemma}\label{lem:game-counter}
Player~$1$ is almost-sure weakly synchronizing in $T$ from $q$
in the game $\G$
if and only if 
for all $p \geq 0$, there exists $0 \leq i \leq p-1$
such that player~$1$ is almost-sure weakly synchronizing in $T \times \{0\}$ 
from $\tuple{q,i}$ in $\G \times [p]$.
\end{lemma}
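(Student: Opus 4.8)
The plan is to translate both sides into statements about the outcome in $\G$ via the bijection $\mu_i$, after which one direction becomes immediate and the whole difficulty concentrates in the other. Writing $d_k^{\straa,\strab}=\G_k^{\straa,\strab}$ for the $k$-th distribution of the outcome from $1_q$ in $\G$, almost-sure weakly synchronizing in $T$ from $q$ unfolds to $\exists\straa\,\forall\strab:\ \limsup_k d_k^{\straa,\strab}(T)=1$. For $\G\times[p]$ I would first observe that, with initial counter $i$, all the probability mass at round $k$ sits at counter value $i-k \bmod p$, so the mass in $T\times\{0\}$ at round $k$ is $d_k^{\straa,\strab}(T)$ when $k\equiv i \pmod p$ and $0$ otherwise; since the missing rounds contribute $0\le d_k^{\straa,\strab}(T)$, this gives $\limsup_k (\text{mass in } T\times\{0\}) = \limsup_{k\equiv i} d_k^{\straa,\strab}(T)$. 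As $\mu_i$ transfers strategies of both players bijectively and preserves cylinder probabilities, almost-sure weakly synchronizing in $T\times\{0\}$ from $\tuple{q,i}$ in $\G\times[p]$ is equivalent to $\exists\straa\,\forall\strab:\ \limsup_{k\equiv i} d_k^{\straa,\strab}(T)=1$, a statement purely about $\G$. The converse direction is then immediate: instantiating the right-hand side at $p=1$ forces $i=0$, and the congruence $k\equiv 0\pmod 1$ holds for every $k$, so the restricted limsup is the full one and we recover exactly almost-sure weakly synchronizing in $T$ from $q$.

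For the forward implication I would fix an arbitrary $p\ge 1$ and prove the contrapositive: assuming that for \emph{every} residue $i\in\{0,\dots,p-1\}$ player~$1$ fails to be almost-sure weakly synchronizing in $T\times\{0\}$ from $\tuple{q,i}$, I derive that player~$1$ is not almost-sure weakly synchronizing in $T$ from $q$. Let $\straa$ be an arbitrary player-$1$ strategy. By the dictionary above, the failure at residue $i$ reads $\forall\straa'\,\exists\strab:\ \limsup_{k\equiv i} d_k^{\straa',\strab}(T)<1$; applying it to our fixed $\straa$ yields a player-$2$ strategy $\strab_i$ and a constant $\gamma_i>0$ with $\limsup_{k\equiv i} d_k^{\straa,\strab_i}(T)\le 1-\gamma_i$, for each $i$.

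The crux, and the main obstacle, is that these $p$ spoilers are each effective only on their own residue, so one cannot simply reuse $\straa$ against a single worst adversary (the good residue may shift with $\epsilon$ and with $\strab$): this is a genuine minimax/maximin gap. I would close it with superposition in the MDP $\G^{\straa}$ obtained by fixing player~$1$'s strategy, in which player~$2$ is the sole controller. Taking $\strab^\ast=\frac1p\sum_{i=0}^{p-1}\strab_i$ gives $d_k^{\straa,\strab^\ast}=\frac1p\sum_i d_k^{\straa,\strab_i}$, and for each residue $j$, bounding the $j$-th summand by $1-\gamma_j$ along $k\equiv j$ and the others by $1$,
\[
\limsup_{k\equiv j} d_k^{\straa,\strab^\ast}(T)\ \le\ \tfrac1p\big((1-\gamma_j)+(p-1)\big)\ =\ 1-\tfrac{\gamma_j}{p}.
\]
Since the full limsup is the maximum of the $p$ residue-restricted limsups, $\limsup_k d_k^{\straa,\strab^\ast}(T)\le 1-\tfrac1p\min_j\gamma_j<1$. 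As $\straa$ was arbitrary, every player-$1$ strategy is spoiled, so player~$1$ is not almost-sure weakly synchronizing in $T$ from $q$, which establishes the contrapositive and hence the forward direction for every $p$.
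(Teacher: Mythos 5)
Your proof is correct and takes essentially the same route as the paper: the easy direction by specialization (the paper projects from arbitrary $p$, you instantiate $p=1$), and the hard direction by contrapositive, extracting for a fixed arbitrary $\straa$ one spoiling strategy $\strab_i$ per residue class and combining them via the uniform superposition $\frac{1}{p}\sum_i \strab_i$ in the (infinite-state) MDP obtained by fixing $\straa$. Your bookkeeping is in fact slightly tighter than the paper's (you correctly obtain the bound $1-\gamma_j/p$, where the paper loosely writes $1-\epsilon$), but the argument is the same.
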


\begin{proof}
It is immediate that almost-sure weakly synchronizing in $T \times \{0\}$ 
from $\tuple{q,i}$ in the game $\G \times [p]$ implies almost-sure weakly synchronizing 
in $T$ from $q$ in $\G$.

For the converse implication, we prove the contrapositive.
If for all $0 \leq i \leq p-1$, player~$1$ is not almost-sure 
weakly synchronizing in $T \times \{0\}$ from $\tuple{q,i}$ in $\G \times [p]$,
then for all player-$1$ strategies $\straa$, 
there exist player-$2$ strategies $\strab_{0},\strab_{1},\dots,\strab_{p-1}$ 
such that $\G^{\straa,\strab_{i}}_{\tuple{q,i}}$ is not 
almost-sure weakly synchronizing in $T \times \{0\}$,
that is there exist rounds $j_0,j_1,\dots,j_{p-1}$ and positive bounds 
$\epsilon_0,\epsilon_1,\dots,\epsilon_{p-1}$ such that
for all $0 \leq i \leq p-1$, for all $j \geq j_i$ with $j = i \mod p$, 
under strategies $\straa,\strab_{i}$ in $\G$ from $q$
the probability mass in $T$ at position~$j$ is at most $1-\epsilon_i$.
Consider the strategy $\strab$ playing the superposition $\sum_i \frac{1}{p}\strab_{i}$. 
For $\epsilon = \min_i \epsilon_i$ and $j^* = \max_i j_i$, 
under strategies $\straa,\strab$ in $\G$ from $q$ 
the probability mass in $T$ at all positions after $j^*$ is at most $1-\epsilon$,
which shows that $\straa$ is not almost-sure weakly synchronizing in $T$.
\end{proof}

\begin{figure}[!tb]%
\begin{center}
\hrule
\subfloat[\mbox{$\G_{{\sf win}} \times [1]$}{\large \strut}]{
   \begin{picture}(20,60)(0,0)


\drawline[AHnb=0, linegray=.6, dash={1.2}0](3,49)(3,39)(17,39)(17,2)

\node[Nmarks=r, Nmr=0](q10)(10,45){{\footnotesize ${q_1,0}$}}
\node[Nmarks=n, linegray=.6](q20)(10,25){\gray50{{\footnotesize $q_2,0$}}}
\node[Nmarks=r, linegray=.6](q30)(10,5){\gray50{{\footnotesize $q_3,0$}}}

\drawedge[ELside=r,ELpos=50, linegray=.6](q10,q20){\gray50{$b_2$}}

\drawedge[ELside=r,ELpos=50, ELdist=1, curvedepth=-5, linegray=.6](q20,q30){}
\drawedge[ELside=r,ELpos=50, ELdist=1, curvedepth=-5, linegray=.6](q30,q20){}


\drawloop[ELside=l,loopCW=y, loopangle=145, loopdiam=5, linegray=.6](q20){}
\drawloop[ELside=l,loopCW=y, loopangle=90, loopdiam=5](q10){$b_1$}

\end{picture}
   \label{fig:gwin-counter}
}
\quad\quad\quad\quad   
\subfloat[\mbox{$\G'_{{\sf win}} \times [2]${\large \strut}}]{
   \begin{picture}(40,60)(0,0)


\drawline[AHnb=0, linegray=.6, dash={1.2}0](3,49)(3,42)(20,25)(20,2)

\node[Nmarks=r, Nmr=0](q10)(10,45){{\footnotesize ${q_1,0}$}}
\node[Nmarks=n, Nmr=0](q11)(30,45){{\footnotesize ${q_1,1}$}}
\node[Nmarks=n, linegray=.6](q21)(10,25){\gray50{{\footnotesize $q_2,1$}}}
\node[Nmarks=n](q20)(30,25){{\footnotesize $q_2,0$}}
\node[Nmarks=r, linegray=.6](q30)(10,5){\gray50{{\footnotesize $q_3,0$}}}
\node[Nmarks=n](q31)(30,5){{\footnotesize $q_3,1$}}

\drawedge[ELside=l,ELpos=50, ELdist=1, curvedepth=5](q10,q11){$b_1$}
\drawedge[ELside=l,ELpos=50, ELdist=1, curvedepth=5](q11,q10){$b_1$}

\drawedge[ELside=r,ELpos=50, linegray=.6](q10,q21){\gray50{$b_2$}}
\drawedge[ELside=l,ELpos=50](q11,q20){$b_2$}

\drawedge[ELside=r,ELpos=50, ELdist=1, curvedepth=-5, linegray=.6](q21,q30){}
\drawedge[ELside=r,ELpos=50, ELdist=1, curvedepth=-5, linegray=.6](q30,q21){}
\drawedge[ELside=r,ELpos=50, ELdist=1, curvedepth=-5](q20,q31){}
\drawedge[ELside=r,ELpos=50, ELdist=1, curvedepth=-5](q31,q20){}



\end{picture}
   \label{fig:gwin-counter-noselfloop}
}
\hrule
\caption{Removal of positive attractor in $\G_{{\sf win}} \times [1]$ 
and $\G'_{{\sf win}} \times [2]$, where $\G'_{{\sf win}}$ is the 
variant of $\G_{{\sf win}}$ without a self-loop on $q_2$. \label{fig:gwin-counter-all}}%
\Description{Two (deterministic) game graphs.}
\end{center}
\end{figure}
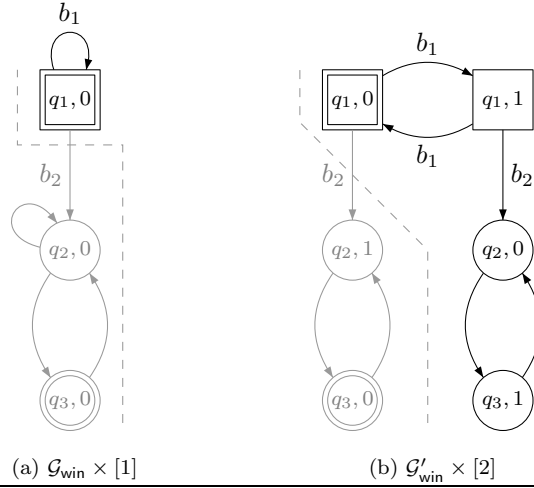

Given an accepting set $U \subseteq T$ that belongs to an SCC with period $p$ 
in $\P(\G)$, we solve the game $\G$ by removing from $\G \times [p]$ 
the attractor $W$ of $U \times \{0\}$, and by recursively solving the subgame 
of $\G \times [p]$ induced by $Q \setminus W$, with target set $T \times \{0\}$.

In $\G_{{\sf win}}$, the set $U = \{q_3\}$ belongs to an SCC of period~$1$,
and its attractor is $W = \{q_2,q_3\}$. After removal of the attractor,
the subgame is winning for player~$1$ (\figurename~\ref{fig:gwin-counter}).
In the variant $\G'_{{\sf win}}$ of $\G_{{\sf win}}$ without a self-loop on $q_2$,
the set $U = \{q_3\}$ is also in an SCC, but with period~$2$. 
After removal of the attractor to $U \times [0]$ in $\G'_{{\sf win}} \times [2]$,
the subgame is not winning for player~$1$ (\figurename~\ref{fig:gwin-counter-noselfloop}).

By Lemma~\ref{lem:game-counter}, solving $\G$ with target set $T$
is equivalent to solving $\G \times [p]$ with target set $T \times \{0\}$.
However, in general combining two almost-sure winning strategies
constructed in two different subgames may not give an almost-sure
winning strategy (if, for instance, one strategy ensures the probability
mass in $T \times \{0\}$ tends to $1$ at even rounds, and the other strategy 
at odd rounds). To establish the correctness of our solution, 
note that all states in $U \times \{0\}$ belong to the (controllable predecessor of the) 
attractor of $U \times \{0\}$ in 
$\G \times [p]$, and since the period of the SCC containing $U$ is $p$, 
in $\P(\G\times [p])$ there is a path from $U \times \{0\}$ to itself 
of length $\ell = k\cdot p$ for all sufficiently large $k$. See the Frobenius
problem~\cite{Kan92} for questions related to computing the largest $k_0$ such 
that there is no such path of length $\ell = k_0\cdot p$. 
It follows that, for $W = \Attr(U \times \{0\},\G \times [p])$,
given a state $\tuple{q,i} \in W$
and an arbitrary length $\ell_0 = i + k\cdot p$ for $k \geq k_0$, 
player~$1$ has a strategy to get eventually synchronized 
in $U \times \{0\} \subseteq T \times \{0\}$ at round $\ell_0$ (where the tracking counter is $0$),
and by the same argument at any round $\ell_1 = \ell_0 + k'\cdot p$ for $k' \geq k_0$,
and so on. That is, for any sequence $i_0,i_1,\dots$ such that 
$i_{j} - i_{j-1} \geq k_0 \cdot p$  (where $i_{-1} = 0$) and $i_j = i \mod p$ 
for all $j\geq 0$,
player~$1$ has a strategy 
from $\tuple{q,i}$ such that for all outcomes $d_0, d_1, \ldots$ of a strategy 
of player~$2$ in $\G \times [p]$, we have
$\liminf_{k \to \infty} d_{i_k}(T \times \{0\}) = 1$ 
(thus also $\limsup_{k \to \infty} d_{i_k}(T \times \{0\}) = 1$).

Intuitively, we can choose the sequence $i_0,i_1,\dots$ in order
to synchronize the probability mass in $W$ with the outcome of the strategy 
constructed in the subgame of $\G \times [p]$ induced by the complement
of $W$.

\begin{lemma}\label{lem:recursive-alg}
Given a set $U$ in a strongly connected component of period $p$ in the subset construction $\P(\G)$
that is accepting ($U \subseteq T$), let $W = \Attr(U \times \{0\},\G \times [p])$
and $\H = \G \times [p] \upharpoonright [Q \times [p] \setminus W]$. We have:
\begin{linenomath*}
$$\winas{weakly}(\G,T) \cap \{1_q \mid q \in Q\}=  \{1_q \mid \exists i: \tuple{q,i} \in 
W \cup \winas{weakly}(\H,T \times \{0\}) \}.$$
\end{linenomath*}
\end{lemma}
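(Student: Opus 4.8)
\emph{Reduction.} The plan is to prove everything inside the counter game $\G \times [p]$ and transfer back via Lemma~\ref{lem:game-counter}, instantiated at the period $p$ of the SCC containing $U$. Write $S = Q \times [p] \setminus W$, so that $\H = \G \times [p] \upharpoonright [S]$, and abbreviate by $Z(\cdot)$ the set of states from which the Dirac distribution is almost-sure weakly synchronizing in $T \times \{0\}$ (so $Z(\H) \subseteq S$, following the paper's convention of identifying such states). The single identity I would establish is
\[
Z(\G \times [p]) \;=\; W \cup Z(\H). \qquad (\star)
\]
Granting $(\star)$, both inclusions of the lemma are immediate. For ``$\subseteq$'', if $1_q \in \winas{weakly}(\G,T)$ then the forward direction of Lemma~\ref{lem:game-counter} produces an index $i$ with $\tuple{q,i} \in Z(\G \times [p])$, and $(\star)$ places $\tuple{q,i}$ in $W \cup Z(\H)$. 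For ``$\supseteq$'', any witness $i$ on the right lies in $Z(\G \times [p])$ by $(\star)$, and the trivial converse direction of Lemma~\ref{lem:game-counter} lifts this to $q$ in $\G$. Thus the whole content is $(\star)$.

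\emph{Soundness of $(\star)$, i.e. $W \cup Z(\H) \subseteq Z(\G \times [p])$.} For the $W$ part I would use exactly the mechanism described before the lemma: since the initial distribution is a single token $1_{\tuple{q,i}}$ with $\tuple{q,i} \in W = \Attr(U \times \{0\}, \G \times [p])$, the pure memoryless attractor strategy drives it surely into $U \times \{0\} \subseteq T \times \{0\}$; and because $U$ lies in an SCC of period $p$ in $\P(\G)$, for all large $k$ there is a subset-construction loop at $U$ of length $kp$, whose selectors keep the counter at $0$ and return the token to $U \times \{0\}$. Parking the token on $U \times \{0\}$ at a cofinal family of rounds $\equiv i \pmod p$ yields sure, hence almost-sure, weak synchronization. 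For the $Z(\H)$ part, player~$1$ plays the $\H$-winning strategy $\straa_\H$ on the mass residing in $S$ and the above $W$-strategy on any mass that player~$2$ diverts into $W$; as $S$ and $W$ are disjoint this is a single counting strategy. Both components deposit mass in $T \times \{0\}$ only when the counter is $0$, and the $W$-component can be scheduled to synchronize at a cofinal set of counter-$0$ rounds of player~$1$'s choosing (subject to a minimum spacing $k_0 p$), which I would align with the counter-$0$ rounds at which $\straa_\H$ synchronizes the surviving $S$-mass; at such a common round the mass in $T \times \{0\}$ is at least $(1-\epsilon)\, d(S) + d(W) \ge 1-\epsilon$.

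\emph{Completeness of $(\star)$, i.e. $Z(\G \times [p]) \subseteq W \cup Z(\H)$.} I would prove the contrapositive: fix $\tuple{q,i} \in S$ with $\tuple{q,i} \notin Z(\H)$ and show player~$1$ is not almost-sure weakly synchronizing from $\tuple{q,i}$ in $\G \times [p]$. The crucial observation is that $S$, being the complement of the player-$1$ attractor $W$, is a \emph{trap} for player~$1$: for $\tuple{q',j} \notin \CPre(W)$ and every action $a$ there is a response $b$ with $\delta(\tuple{q',j},a,b) \in S$ (the support is a singleton as $\G$ is deterministic). Respecting the discipline that the spoiling strategy is built only after player~$1$'s strategy is fixed, I would take an arbitrary player-$1$ strategy $\straa$, have player~$2$ use the trap responses to keep all mass inside $S$; then the play never leaves $S$, the induced transitions coincide with those of $\H$, and $\straa$ restricted to $S$-histories is a genuine $\H$-strategy. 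Since $\tuple{q,i} \notin Z(\H)$, there is a player-$2$ strategy in $\H$ spoiling this restricted strategy; lifting it through the subgame redirections (replacing each redirected action by the staying action realizing the same $S$-transition) gives a player-$2$ strategy in $\G \times [p]$ whose outcome coincides with the non-synchronizing $\H$-outcome. As $\straa$ was arbitrary, $\tuple{q,i} \notin Z(\G \times [p])$.

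\emph{Main obstacle.} The delicate step is the combination in the soundness argument: player~$2$ may repeatedly bleed small amounts of mass into $W$, and player~$1$ must re-synchronize each batch at a round that is \emph{simultaneously} a synchronization round of $\straa_\H$, all while $\straa_\H$ keeps synchronizing the shrinking $S$-mass. This is precisely the situation, flagged before the lemma, where naively gluing two winning strategies fails; what rescues it is the counter, which forces both the recovered $W$-mass and the $\straa_\H$-mass to peak in $T \times \{0\}$ only at counter-$0$ rounds, so the two peaks live on the same residue class modulo $p$ and can be driven to coincide. Carrying out the quantitative bookkeeping of these interleaved batches uniformly in $\epsilon$ is the part that requires the most care.
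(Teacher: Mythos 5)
Your overall architecture is the same as the paper's: transfer to the counter game via Lemma~\ref{lem:game-counter}, prove $Z(\G\times[p]) = W \cup Z(\H)$, obtain the inclusion $Z(\G\times[p]) \subseteq W \cup Z(\H)$ from the fact that $S = Q\times[p]\setminus W$ is a trap for player~$1$ (your contrapositive, spoiling-strategy spelling of this is fine and equivalent to the paper's terser transfer argument), and put $W$ inside the winning region by the attractor-plus-loop strategy. The genuine gap is in the remaining inclusion $Z(\H) \subseteq Z(\G\times[p])$, exactly at the point you flag as the ``main obstacle''. Your alignment scheme requires player~$1$ to schedule the rounds at which the $W$-mass is parked on $U\times\{0\}$ so that they coincide with ``the counter-$0$ rounds at which $\straa_\H$ synchronizes the surviving $S$-mass''. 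But membership $\tuple{q,i} \in \winas{weakly}(\H, T\times\{0\})$ only gives you \emph{some} almost-sure winning strategy $\straa_\H$, and for an arbitrary such strategy those rounds are not fixed: they may depend on the strategy of player~$2$. The counter does not repair this. It forces all peaks into a single residue class modulo $p$, but \emph{which} counter-$0$ rounds are peaks remains player-$2$-dependent. Moreover player~$1$ cannot adapt: a strategy assigns actions to individual play prefixes, so its behaviour on histories inside $W$ cannot observe how player~$2$ has moved the mass that stayed in $S$ (this is precisely the imperfect-information flavour of distribution-based objectives). If player~$2$ can steer the peaks of $\straa_\H$ away from any pre-committed schedule $J$, your combined strategy fails, and nothing in your argument excludes this; deferring it as ``quantitative bookkeeping'' hides the fact that the bookkeeping cannot be carried out from your hypotheses.

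The paper closes this gap by proving a stronger statement by induction along the subgame recursion of Algorithm~\ref{alg:solve-as-ws}: the almost-sure winning strategy $\straa_\H$ constructed for $\H$ comes together with a fixed sequence of indices $i_0, i_1, \dots$ at which synchronization is guaranteed \emph{against every player-$2$ strategy}, and the strategy built for $\G\times[p]$ inherits the same property, its guaranteed indices $j_0, j_1, \dots$ being chosen as a subsequence of $i_0, i_1, \dots$ with spacing at least $k_0\cdot p$ so that the flexible $W$-strategy can hit them. With that strengthened hypothesis your alignment, and the limit bookkeeping $d_{j_k}(T\times\{0\}) \to \eta$ and $d'_{j_k}(T\times\{0\}) \to 1-\eta$, go through. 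So the fix is not more care within your argument but a change of the statement being proved: either prove the lemma jointly with the correctness of the recursion, carrying the guaranteed-synchronization-indices property as an invariant, or establish separately that almost-sure weakly synchronizing winning implies the existence of a winning strategy whose synchronization times are independent of player~$2$ --- which is, in effect, what the paper's induction establishes.
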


\begin{proof}
We establish the claim by showing an inclusion in both ways.
We recall that we generally identify $1_s$ with $s$.
First, we show that if $q \in \winas{weakly}(\G,T)$, then there exists
$0 \leq i \leq p-1$ such that either $\tuple{q,i} \in W$, or 
$\tuple{q,i} \in \winas{weakly}(\H,T \times \{0\})$.
By Lemma~\ref{lem:game-counter}, there exists $0 \leq i \leq p-1$ 
such that $\tuple{q,i} \in \winas{weakly}(\G \times [p], T \times \{0\})$,
and thus if $\tuple{q,i}$ is in the state space of $\H$, then since all choices
of player~$2$ in $\H$ are also possible in $\G \times [p]$, we have 
$\tuple{q,i} \in \winas{weakly}(\H,T \times \{0\})$, and otherwise $\tuple{q,i} \in W$.

Second, we show that if there exists
$0 \leq i \leq p-1$ such that either $\tuple{q,i} \in W$, or 
$\tuple{q,i} \in \winas{weakly}(\H,T \times \{0\})$, then 
$q \in \winas{weakly}(\G,T)$.
To show this, we construct a strategy $\straa$ that is almost-sure 
weakly synchronizing in $T \times \{0\}$ from $\tuple{q,i}$ (in the
game $\G \times [p]$, which entails that player~$1$ is almost-sure
weakly synchronizing in $T$ from $q$ (in the game $\G$).

The construction proceeds by induction on the subgames. Assume that 
we have constructed an almost-sure winning strategy $\straa_{\H}$ 
in $\H$ as well as a sequence of indices $i_0,i_1,\dots$ such that 
for all outcomes $d_0, d_1, \ldots$ of a strategy $\strab$ of player~$2$
in $\H$, we have $\lim_{k \to \infty} d_{i_k}(T \times \{0\}) = 1$
(we say that under $\straa_{\H}$, synchronization is guaranteed to occur 
at indices $i_0,i_1,\dots$).
For the induction step, we construct an almost-sure winning strategy $\straa_{\G}$
in $\G \times [p]$ together with a sequence of indices $j_0,j_1,\dots$ 
where the synchronization is guaranteed to occur. 
Note that $i_{k} - i_{k-1}$ is a multiple of $p$ (for all $k \geq 1$)
and that the counter $c$ is $0$ in any state $\tuple{q',c}$ carrying 
positive probability at round $i_k$. 
The sequence $j_0,j_1,\dots$ is a(ny) sub-sequence of $i_0,i_1,\dots$ 
such that $j_{k} - j_{k-1} \geq k_0 \cdot p$ (where $j_{-1} = 0$).
There is a strategy $\straa_{W}$ for player~$1$ (essentially following 
paths of the appropriate lengths in $\P(\G \times [p])$)
such that given a state $\tuple{q',c}$ carrying probability $\eta$ at round $j$ 
with $j_{k-1} \leq j \leq j_{k}$, for all outcomes $d'_0, d'_1, \ldots$ of a 
strategy $\strab$ of player~$2$, we have 
$\lim_{m \to \infty} d'_{j_{k+m}-j}(T \times \{0\}) = 1$, 
that is synchronization occurs at rounds $j_{k+1}, j_{k+2},\ldots$ from $\tuple{q',c}$.

The strategy $\straa_{G}$ plays according to $\straa_{W}$ whenever the state is
in $W$, and according to the almost-sure winning strategy $\straa_{\H}$ in $\H$
otherwise. Note that $\straa$ is well-defined because once a play 
leaves $\H$ (and enters $W$) it never leaves $W$.
It is easy to see that $\straa$ is almost-sure
weakly synchronizing in $T \times \{0\}$ since, for all strategies
$\strab \in \Strab$ of player~$2$, the outcome sequence from $\tuple{q,i}$
is the sum of a sequence $d_0, d_1, \dots$ of probability distributions
with support in $W$, and a sequence $d'_0, d'_1, \dots$ of probability 
distributions with support in $Q \times [p] \setminus W$ and such that,
for some $\eta \in [0,1]$:
\begin{linenomath*}
$$\lim_{k \to \infty} d_{j_k}(T \times \{0\}) = \eta, \text{ and }$$ 
$$\lim_{k \to \infty} d'_{j_k}(T \times \{0\}) = 1 - \eta,$$
\end{linenomath*}
from which it follows that 
$\limsup_{i \to \infty} d_i(T \times \{0\}) + d'_i(T \times \{0\}) = 1$,
thus player~$1$ is almost-sure weakly synchronizing in $T \times \{0\}$ 
from $\tuple{q,i}$.
\end{proof}

\begin{figure}[!t]
\hrule
\begin{center}
    \begin{picture}(50,50)(0,0)

\node[Nmarks=n, Nmr=0](q)(25,45){$q$}
\node[Nmarks=n](r)(15,25){$r$}
\node[Nmarks=n](s)(5,45){$s$}
\node[Nmarks=r](t)(35,25){$t$}
\node[Nmarks=n](u)(45,45){$u$}

\node[Nmarks=n](x)(15,5){$x$}
\node[Nmarks=r](y)(35,5){$y$}

\drawedge[ELside=r,ELpos=50, ELdist=.5](q,r){$b_1$}
\drawedge[ELside=l,ELpos=50, ELdist=.5](q,t){$b_2$}
\drawedge[ELside=l,ELpos=50, ELdist=.5](r,s){$a_1$}
\drawedge[ELside=r,ELpos=50, ELdist=1](r,x){$a_2$}
\drawedge[ELside=r,ELpos=50, ELdist=.5](t,u){$a_1$}
\drawedge[ELside=l,ELpos=50, ELdist=1](t,y){$a_2$}
\drawedge[ELside=r,ELpos=50, ELdist=1](u,q){}
\drawedge[ELside=r,ELpos=50, ELdist=1](s,q){}

\drawedge[ELside=r,ELpos=50, ELdist=1, curvedepth=5](x,y){}
\drawedge[ELside=r,ELpos=50, ELdist=1, curvedepth=5](y,x){}



\end{picture}
\end{center} 
\hrule
 \caption{A deterministic game. \label{fig:game32}}
 \Description{A deterministic game with 7 states.}
\end{figure}
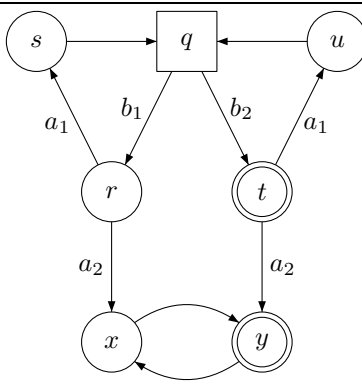

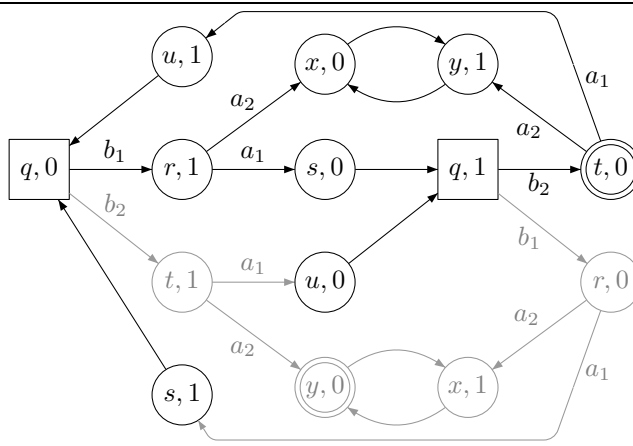
\begin{figure}[!t]
\hrule
\begin{center}
    \begin{picture}(84,60)(0,0)

\node[Nmarks=n, Nmr=0](q0)(4,38){$q,0$}
\node[Nmarks=n, Nmr=0](q1)(61,38){$q,1$}
\node[Nmarks=n,linegray=.6](r0)(80,23){\gray50{$r,0$}}
\node[Nmarks=n](r1)(23,38){$r,1$}
\node[Nmarks=n](s0)(42,38){$s,0$}
\node[Nmarks=n](s1)(23,8){$s,1$}
\node[Nmarks=r](t0)(80,38){$t,0$}
\node[Nmarks=n,linegray=.6](t1)(23,23){\gray50{$t,1$}}
\node[Nframe=n,Nmarks=n](u0)(42,23){$u,0$}   
\node[Nmarks=n](u1)(23,53){$u,1$}

\node[Nmarks=n](x0)(42,52){$x,0$}
\node[Nmarks=n,linegray=.6](x1)(61,9){\gray50{$x,1$}}
\node[Nmarks=r,linegray=.6](y0)(42,9){\gray50{$y,0$}}
\node[Nmarks=n](y1)(61,52){$y,1$}

\drawline[AHnb=0,arcradius=2](75,52)(72.5,59)(30.6,59)(28.7,57.5)
\node[Nframe=n,Nmarks=n, Nh=0, Nw=0, Nmr=0](t0out)(75,52){}
\node[Nframe=n,Nmarks=n, Nh=0, Nw=0, Nmr=0](u1in)(28.7,57.5){}
\drawedge[ELside=r,ELpos=80, ELdist=.5,AHnb=0](t0,t0out){$a_1$}
\drawedge[ELside=r,ELpos=50, ELdist=1](u1in,u1){}

\drawline[AHnb=0,arcradius=2,linegray=.6](75,9)(72.5,2)(26.8,2)(25.85,3.5)
\node[Nframe=n,Nmarks=n, Nh=0, Nw=0, Nmr=0](r0out)(75,9){}
\node[Nframe=n,Nmarks=n, Nh=0, Nw=0, Nmr=0](s1in)(25.85,3.5){}
\drawedge[ELside=l,ELpos=80, ELdist=.5,AHnb=0,linegray=.6](r0,r0out){\gray50{$a_1$}}
\drawedge[ELside=r,ELpos=50, ELdist=1,linegray=.6](s1in,s1){}

\drawedge[ELside=l,ELpos=53, ELdist=1](q0,r1){$b_1$}
\drawedge[ELside=l,ELpos=45, ELdist=.5,linegray=.6](q0,t1){\gray50{$b_2$}}
\drawedge[ELside=r,ELpos=50, ELdist=1,linegray=.6,linegray=.6](r0,x1){\gray50{$a_2$}}
\drawedge[ELside=l,ELpos=50, ELdist=.5](t0,y1){$a_2$}
\drawedge[ELside=r,ELpos=50, ELdist=1](u0,q1){}
\drawedge[ELside=r,ELpos=50, ELdist=1](s0,q1){}
\drawedge[ELside=r,ELpos=50, ELdist=0,linegray=.6](q1,r0){\gray50{$b_1$}}
\drawedge[ELside=r,ELpos=50, ELdist=.5](q1,t0){$b_2$}
\drawedge[ELside=l,ELpos=50, ELdist=1](r1,s0){$a_1$}
\drawedge[ELside=l,ELpos=50, ELdist=.5](r1,x0){$a_2$}
\drawedge[ELside=l,ELpos=50, ELdist=1,linegray=.6](t1,u0){\gray50{$a_1$}}
\drawedge[ELside=r,ELpos=50, ELdist=.5,linegray=.6](t1,y0){\gray50{$a_2$}}
\drawedge[ELside=r,ELpos=50, ELdist=1](u1,q0){}
\drawedge[ELside=r,ELpos=50, ELdist=1](s1,q0){}

\drawedge[ELside=r,ELpos=50, ELdist=1, curvedepth=5](x0,y1){}
\drawedge[ELside=r,ELpos=50, ELdist=1, curvedepth=5](y1,x0){}
\drawedge[ELside=r,ELpos=50, ELdist=1, curvedepth=5,linegray=.6](x1,y0){}
\drawedge[ELside=r,ELpos=50, ELdist=1, curvedepth=5,linegray=.6](y0,x1){}

\node[Nmarks=n](u0)(42,23){}   



\end{picture}
\end{center} 
\hrule
 \caption{The game $\H_{sub}$ computed by Algorithm~\ref{alg:solve-as-ws} 
(line~\ref{alg:solve-as-ws:Hsub}) for the game of \figurename~\ref{fig:game32},
where the shaded region is the set $W$ (line~\ref{alg:solve-as-ws:W}),
and the set $U = \{\tuple{y,0}\}$ is self-recurrent. \label{fig:game32-counter}}
 \Description{A deterministic game with 14 states.}
\end{figure}

Lemma~\ref{lem:recursive-alg} suggests a recursive procedure to compute
the almost-sure winning set for weakly synchronizing objective, shown 
as Algorithm~\ref{alg:solve-as-ws}. We illustrate the execution on
the example of \figurename~\ref{fig:game32}. First the set $U = \{y\}$
is accepting and belongs to an SCC of period $p=2$ (line~\ref{alg:solve-as-ws:period})
in the subset construction (line~\ref{alg:solve-as-ws:SCC}). 
The game $H = \G \times [p]$ with tracking counter modulo $p=2$
is shown in \figurename~\ref{fig:game32-counter}, with the attractor~$W$
to $U \times \{0\} = \{\tuple{y,0}\}$ shaded 
(lines~\ref{alg:solve-as-ws:H}-\ref{alg:solve-as-ws:W}). 
In the induced subgame (line~\ref{alg:solve-as-ws:Hsub}), all states
except $\tuple{x,0}$ and $\tuple{y,1}$ are winning, which is found in the
recursive call (line~\ref{alg:solve-as-ws:recursive}). 
It follows that all states (all Dirac distributions) are winning for player~$1$,
and in fact all distributions that do not contain both $x$ and $y$ in their
support are winning.

We establish the correctness and termination of Algorithm~\ref{alg:solve-as-ws} 
as follows. The correctness straightforwardly follows 
from Lemma~\ref{lem:recursive-alg}, and we show that the depth
of the recursive calls in $Solve(\G_0,T)$ is bounded by the size $\abs{Q_{\G_0}}$ 
of the state space of $\G_0$. This is not immediately obvious, since the size 
of the first argument $\G$ in a recursive call may increase (the game $\H_{sub}$ 
is a subgame of $\H$, which is $p$ times bigger than $\G$). 
However, we claim that an invariant of the execution of $Solve(\G_0,T_0)$ is that, 
in all recursive calls $Solve(\G,T)$, the algorithm only needs to consider 
states of the first argument $\G$ that form a subgame (isomorphic to a subgame) of 
$\G_0 \times [k]$ for some $k$. 
This holds in the initial call (take $k=1$), and if 
$\G$ is a subgame of $\G_0 \times [k]$, then the period $p$ computed at 
line~\ref{alg:solve-as-ws:period} is a multiple of $k$, and therefore 
in all states $\tuple{\tuple{q,i},j}$ in $(\G_0 \times [k]) \times [p]$
the value $j-i \mod k$ is constant along the transitions. 
Given the target states $(T\times \{0\}) \times \{0\}$ 
we only need to consider states $\tuple{\tuple{q,i},j}$ with $j-i = 0 \mod k$,
and we can project $\tuple{\tuple{q,i},j}$ to $\tuple{q,j}$ without loss.
It follows that $\H$ (and also $\H_{sub}$ used in the 
recursive call) can be viewed as a subgame of $\G_0 \times [p]$.
Moreover, the attractor $W$ contains at least one state for every value of 
the tracking counter, and therefore the size of the game $\G$ 
measured as $\max_i \abs{\{q \in Q_{\G_0} \mid \tuple{q,i} \in Q_{\G}\}}$
is strictly decreasing. It follows that there are at most $\abs{Q_{\G_0}}$ recursive calls
in $Solve(\G_0,T)$.
 
Given $0 \leq i \leq p-1$, the \emph{slice} at $i$ of a 
set $W \subseteq Q \times [p]$ is the set 
$\{q \in Q \mid \tuple{q,i} \in W\}$.

\begin{algorithm}[t]
\caption{$Solve(\G,T)$}
\label{alg:solve-as-ws}
{
  \SetKwInOut{Input}{Input}\SetKwInOut{Output}{Output}
 
  \Input{$\G = \tuple{Q, \Act, \delta}$ is a deterministic game, $T \subseteq Q$ is a target set.}
  \Output{The set $\{q \in Q \mid 1_q \in \winas{weakly}(\G,T)\}$.}%
\begin{flushleft}
 \Begin{
                \nl \If{there is an SCC $\C$ of $\P(\G)$ containing a set $U \in \C$ with $U\subseteq T$ \label{alg:solve-as-ws:SCC}}
                {
                   \nl $p \gets \text{period of } \C$  \label{alg:solve-as-ws:period}\;
                   \nl $\H \gets \G \times [p]$ \label{alg:solve-as-ws:H} \;
                   \nl $W \gets \Attr(U \times \{0\},\H)$ \label{alg:solve-as-ws:W}\;
                   \nl $\H_{sub} \gets \H \upharpoonright [Q \times [p] \setminus W]$ \label{alg:solve-as-ws:Hsub}\;
                   \nl \KwRet{ $\{q \in Q \mid \exists i: \tuple{q,i} \in W \cup Solve(\H_{sub},T \times \{0\} \setminus W)\}$} \label{alg:solve-as-ws:recursive} \;
                }
                \Else
                {
                   \nl \KwRet{ $\emptyset$} \;
                }
 }
\end{flushleft}
}
\end{algorithm}

\begin{lemma}\label{lem:alg-det-PSPACE}
Algorithm~\ref{alg:solve-as-ws} computes the almost-sure winning 
Dirac distributions for weakly synchronizing in deterministic games.
It can be implemented in PSPACE.
\end{lemma}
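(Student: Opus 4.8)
The plan is to prove the two assertions separately: correctness by induction on the recursion, and the space bound by an accounting that never materializes the exponentially large product games $\G\times[p]$.

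For correctness I would induct on the depth of the recursion, which is finite by the termination argument already established (the width $\max_i \abs{\{q \in Q_{\G_0} \mid \tuple{q,i} \in Q_{\G}\}}$ strictly decreases, so the depth is at most $\abs{Q_{\G_0}}$). In the base case the test on line~\ref{alg:solve-as-ws:SCC} fails, so there is no accepting SCC in $\P(\G)$; by the contrapositive of Lemma~\ref{lem:SCC-somewhere} no state admits a pure counting strategy that is almost-sure weakly synchronizing in $T$, and since pure counting strategies are assumed sufficient for player~$1$ in this section, no Dirac distribution is winning and returning $\emptyset$ is correct. In the inductive case an accepting $U$ in an SCC of period $p$ is found, and Lemma~\ref{lem:recursive-alg} identifies the winning Dirac distributions of $\G$ with $\{1_q \mid \exists i: \tuple{q,i} \in W \cup \winas{weakly}(\H_{sub},T\times\{0\})\}$. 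It then remains to match the recursive call: since the states of $\H_{sub}$ are exactly $Q\times[p]\setminus W$, no distribution over $\H_{sub}$ ever places mass on $W$, so $\winas{weakly}(\H_{sub},T\times\{0\}) = \winas{weakly}(\H_{sub},(T\times\{0\})\setminus W)$, which by the induction hypothesis is precisely what $Solve(\H_{sub},(T\times\{0\})\setminus W)$ returns. Substituting into line~\ref{alg:solve-as-ws:recursive} yields exactly the set of Lemma~\ref{lem:recursive-alg}.

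For the space bound, the first step is to bound the sizes of all objects. By the invariant from the termination argument, at recursion depth $j$ the game is (isomorphic to) a subgame of $\G_0\times[k_j]$, and every transition of such a game decrements the tracking counter by exactly one. Hence every set reachable in the subset construction from a singleton $\{\tuple{q,i}\}$ has support inside a single counter slice and is representable as a pair $(S,i)$ with $S\subseteq Q_{\G_0}$ and $i$ written in binary. Since any SCC period is at most the number of such single-slice nodes, i.e.\ at most $2^{\abs{Q_{\G_0}}}\cdot k_j$, the moduli satisfy $k_{j+1}\le 2^{\abs{Q_{\G_0}}}\cdot k_j$, so $k_j \le 2^{\abs{Q_{\G_0}}^2}$ throughout. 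Thus individual states $\tuple{q,i}$ and subset-construction nodes $(S,i)$ all have polynomial-size descriptions, even though $\G\times[p]$ is exponentially large.

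The second step is to verify that each line is a polynomial-space query over these implicitly represented graphs, so that the whole recursion---of polynomial depth, with space reused across levels---decides membership of a given $q$ in the output within PSPACE. Reachability and SCC membership in the subset construction reduce to reachability in a graph with polynomially described nodes and polynomially space-checkable edges (a candidate edge is verified by guessing the selector $\alpha$ and recomputing $\delta_\alpha$, intersected with the ``alive'' set), hence lie in $\text{NPSPACE}=\text{PSPACE}$ by Savitch; existence of an accepting SCC is decided by guessing a single-slice accepting node and checking it reaches itself by a nonempty path. The period I would obtain in PSPACE as the greatest common divisor of the return lengths, reducing divisibility tests to reachability in the products of the subset construction with the cyclic groups $\mathbb{Z}_d$ (the value $d$ divides the period iff every node is reachable at a unique residue modulo $d$), all $d$ being polynomially encoded since the period is at most singly exponential. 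The crucial point, and the main obstacle, is the attractor test $\tuple{q,i}\in W$ in the exponentially large \emph{deterministic} game $\G\times[p]$: here I would use that in a deterministic game player-$1$ forced reachability coincides with reachability in the subset construction once $U\times\{0\}$ is made absorbing; because the counter is deterministic, the reachable subsets again stay essentially within one slice (plus an absorbed part inside $U\times\{0\}$), remain polynomially representable, and ``can player~$1$ force $U\times\{0\}$'' becomes a plain PSPACE reachability question. The ``alive'' tests that present $\H_{sub}$ to the recursive call simply compose these attractor queries across the at most $\abs{Q_{\G_0}}$ levels while reusing space, so assembling all parts gives a polynomial-space implementation.
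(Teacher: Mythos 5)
Your proposal is correct and follows essentially the same route as the paper's proof: correctness is obtained from Lemma~\ref{lem:recursive-alg} (with Lemma~\ref{lem:SCC-somewhere} and the section's pure-counting assumption covering the base case) together with the termination invariant bounding the recursion depth by $\abs{Q_{\G_0}}$, and the PSPACE bound comes from the singly-exponential bound on the moduli (period at most $2^n\cdot k$ per level, hence $p\leq 2^{n^2}$) plus a nondeterministic polynomial-space implementation, via Savitch, that answers reachability, SCC-membership, attractor, and transition queries on the implicitly represented subset constructions and composes them across the at most $n$ recursion levels with space reuse. The only differences are that you fill in details the paper leaves terse --- the identification $\winas{weakly}(\H_{sub},T\times\{0\}) = \winas{weakly}(\H_{sub},(T\times\{0\})\setminus W)$ needed to match the recursive call, the reduction of the attractor test to subset-construction reachability after making $U\times\{0\}$ absorbing (a genuine subtlety, since forced reachability is not the same as sure synchronization in the target unless it is absorbing), and the explicit divisibility-based period computation --- all of which are sound refinements of the paper's argument rather than a different approach.
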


\begin{proof}
The correctness and termination of Algorithm~\ref{alg:solve-as-ws}
have been established above. To show the PSPACE upper bound,
note that if $\G$ is a subgame of $\G_0 \times [k]$, then 
the the period $p$ computed at 
line~\ref{alg:solve-as-ws:period} is at most $2^n \cdot k$,
because slices are subsets of $Q_{\G_0}$. 
Since the depth of recursive calls is at most $n$, 
the game $\H$ constructed at line~\ref{alg:solve-as-ws:H}
is a subgame of $\G_0 \times [p]$ where $p \leq (2^n)^n = 2^{n^2}$.

In a PSPACE implementation of Algorithm~\ref{alg:solve-as-ws},
we can store the game $\G_0$ but not the subgames $\H_{sub}$ of $\G_0 \times [k]$
(constructed at line~\ref{alg:solve-as-ws:Hsub}). 
However, we will show that there is a PSPACE procedure to 
determine the transitions of $\H_{sub}$, namely, given
$\tuple{q,i},\tuple{q',j} \in Q \times [k]$ and $a,b\in \Act$ 
to decide whether $(\tuple{q,i},a,b,1_{\tuple{q',j}})$ 
is a transition in $\H_{sub}$. 

We describe a (N)PSPACE implementation of Algorithm~\ref{alg:solve-as-ws} as follows.
In the first call to the algorithm, at line~\ref{alg:solve-as-ws:SCC} 
we guess the set $U \subseteq T$.
We can check in PSPACE that $U$ belongs to an SCC of $\P(\G)$ (by guessing
the selectors along a path from $U$ to itself) and we can compute its period $p$.
We can construct a PSPACE procedure to check if a state $\tuple{q,i}$ belongs 
to the attractor $W$ of $U \times 0$ in $\G \times [p]$ (analogously,
by guessing the selectors along a path from the singleton $\{\tuple{q,i}\}$ to 
$W$ in $\P(\G)$). A transition $(\tuple{q,i},a,b,1_{\tuple{q',j}})$ is
in $\H_{sub}$ if $j=i-1$ and $(q,a,b,1_{q'})$ is a transition in $\G$ 
and $\tuple{q,i},\tuple{q',j}$ are in the attractor $W$. All these
conditions can be checked in PSPACE.
In the recursive calls (where $\G$ is $\H_{sub}$ from caller), 
we guess a slice of the set $U$ at line~\ref{alg:solve-as-ws:SCC},
and we check that $U$ belongs to an SCC of $\P(\G)$ using our PSPACE
to decide the transitions in $\H_{sub}$. We can do this for all recursive calls
up to depth $n$ by storing the slice of the set $U$ for each recursive call.
\end{proof}

\begin{theorem}\label{theo:alg-det-PSPACE-complete}
The membership problem for almost-sure weakly synchronizing in deterministic games
is PSPACE-complete.
\end{theorem}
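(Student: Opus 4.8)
The proof splits into the PSPACE upper bound and the matching PSPACE lower bound. The upper bound is immediate: by Lemma~\ref{lem:alg-det-PSPACE}, Algorithm~\ref{alg:solve-as-ws} computes, in polynomial space, the set of states $q$ with $1_q \in \winas{weakly}(\G,T)$, and the membership problem asks exactly whether a given $1_q$ belongs to this set.

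For the lower bound I would reduce from the universality problem for nondeterministic finite automata, which is PSPACE-complete. Given an NFA $\A = \tuple{S,\Gamma,\Delta,s_0,F}$ (with $s_0 \in F$, so that only nonempty rejected words matter), I would build a deterministic game $\G$ whose states include a copy of $S$, where player~$1$'s actions encode the letters of $\Gamma$ together with a distinguished \emph{reset} action $\r$, and player~$2$'s actions resolve the nondeterminism of $\Delta$: on a letter $\gamma$ the transition $\delta(q,a_\gamma,b)=1_{q'}$ ranges over $q' \in \Delta(q,\gamma)$ as $b$ varies, while $\r$ deterministically sends every state to $s_0$. Taking the target set $T = S \setminus F$, a subset is accepting in $\P(\G)$ exactly when it is a rejecting configuration of $\A$, and when player~$1$ plays the same letter in every state of the current support (\emph{uniform} play) a set evolves in $\P(\G)$ precisely as the determinized powerset automaton of $\A$.

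If $\A$ is non-universal, some nonempty word $w$ drives $\{s_0\}$ to a rejecting set $R = \hat\Delta(s_0,w) \subseteq T$; the edge $R \to \{s_0\}$ provided by $\r$ then yields a strongly connected component of $\P(\G)$ containing the accepting set $R$, and player~$1$ wins by repeatedly playing $w$ and resetting, placing all mass in $R$ at recurring rounds. This is in fact \emph{sure} winning, hence almost-sure by the inclusion $\winsure{weakly}(T) \subseteq \winas{weakly}(T)$. Conversely, if $\A$ is universal then every support reached by uniform play meets $F$, i.e. always contains a non-target state; against such a play player~$2$ spoils by the token-covering construction underlying Lemma~\ref{lem:SCC-somewhere}, keeping a bounded probability mass in $S \setminus T$ at every round (non-uniform play is handled by the gadget below). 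Thus membership from $\{s_0\}$ holds iff $\A$ is non-universal, and sure and almost-sure winning coincide on these instances.

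The one real obstacle is that a counting strategy may assign \emph{different} letters to different states of the support, and such non-uniform play could reach a rejecting configuration even when $\A$ is universal. I would neutralize this with a period-based gadget that exploits exactly the phenomenon illustrated by $\G_{{\sf win}}$: route, for each letter $\gamma$, any audited mass into a cycle $Z_\gamma$ whose length is a distinct prime $p_\gamma$. Under uniform play an audit places all mass in a single $Z_\gamma$, from which player~$1$ can realign it with the ongoing computation after $p_\gamma$ steps; under non-uniform play player~$2$ audits to split the mass across two cycles $Z_\gamma, Z_{\gamma'}$ with coprime lengths, which can never be resynchronized, so that the mass stays bounded away from $T$. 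The delicate part of the argument is to verify that this gadget is faithful in both directions---that it lets player~$2$ spoil every non-uniform strategy (via the incompatible periods) while leaving every uniform synchronizing strategy unharmed---so that the winning condition from $\{s_0\}$ reduces cleanly to the universality of $\A$.
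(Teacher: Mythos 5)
Your PSPACE upper bound is the paper's own (Lemma~\ref{lem:alg-det-PSPACE}), so the only thing to examine is the lower bound, and there the proposal has a genuine gap, exactly at the point you yourself flag as ``the delicate part''. The gadget's soundness rests on the claim that two probability masses sent \emph{at the same round} into cycles $Z_\gamma$ and $Z_{\gamma'}$ of coprime lengths $p_\gamma$ and $p_{\gamma'}$ ``can never be resynchronized''. This is backwards. If both masses enter their cycles at the audit round $t_0$ at the cycles' entry points, then the first mass returns to its entry/exit point exactly at the rounds $t_0 + k\,p_\gamma$ and the second at the rounds $t_0 + l\,p_{\gamma'}$; these arithmetic progressions intersect (at $t_0 + m\,p_\gamma p_{\gamma'}$ --- indeed simultaneous entry yields common return times for \emph{any} pair of lengths), so player~$1$ can extract both masses in the same round and re-merge them with the main computation. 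Coprimality makes realignment easier, not harder: the congruences $t \equiv t_0 \pmod{p_\gamma}$ and $t \equiv t_0 \pmod{p_{\gamma'}}$ are simultaneously solvable by the Chinese remainder theorem. The phenomenon you are trying to invoke --- the one in $\G'_{{\sf win}}$ (\figurename~\ref{fig:gwin-counter-noselfloop}) --- is the opposite configuration: a \emph{common} period $p$ with the two masses injected at \emph{incongruent} phases modulo $p$; that is what makes merging impossible. Consequently, when $\A$ is universal, player~$2$'s audit of a non-uniform selector does not keep the split mass bounded away from $T$ forever; it only delays player~$1$, who can resynchronize and go on exploiting per-state letter choices. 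The direction ``$\A$ universal $\Rightarrow$ not almost-sure winning'' is therefore unproven, and with this gadget it is in all likelihood false.

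This difficulty --- that counting strategies choose actions per state rather than per round --- is precisely why the paper does not reduce from NFA universality. It reduces instead from the membership problem for \emph{sure eventually synchronizing in MDPs}, PSPACE-complete by~\cite[Theorem~2]{DMS19}, whose semantics already grants the controller per-state action choices, so no uniformization gadget is needed at all: the game is a copy of the MDP in which the probabilistic branching is handed to player~$2$, plus a fresh target $q_\sharp$ reachable by a special action $\sharp$ from $T$ (with a sink punishing $\sharp$ elsewhere) and a return to $q_0$; both directions of the correctness claim then follow from the accepting-SCC characterization of Lemma~\ref{lem:SCC-somewhere}. If you want to keep NFA universality as the source problem, the gadget must be redesigned around a single period with distinct entry phases per letter, and you would still owe a complete proof of both directions --- essentially re-deriving what Lemma~\ref{lem:SCC-somewhere} gives the paper for free.
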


\begin{proof}
The PSPACE upper bound is given by Lemma~\ref{lem:alg-det-PSPACE}.
To establish the PSPACE lower bound, we present a reduction from the 
membership problem\footnote{As a side note, we recall that this problem
is the same as emptiness of one-letter alternating automata~\cite{AFA1}.} 
for sure eventually synchronizing in MDPs,
which is PSPACE-complete~\cite[Theorem~2]{DMS19}. 

Given an MDP $\M$ and target set $T$, construct a deterministic game $\G$ as a copy
of $\M$ (illustrated in \figurename~\ref{fig:reduction-det-game})
where each probabilistic choice
in $\M$ becomes a choice for player~$2$ in $\G$: for each $q' \in \delta_{\M}(q,a,-)$,
there is an action $b \in \Act$ such that $\delta_{\G}(q,a,b) = 1_{q'}$ 
(assuming w.l.o.g. that there are sufficiently many actions in $\Act$). 
The alphabet of $\G$ is $\Act \cup \{\sharp\}$ where the action $\sharp$
can be used by player~$1$ from the states in $T$ to visit the new (target)
state $q_{\sharp}$. From the non-target states, playing $\sharp$ leads
to a sink state.

We claim that player~$1$ is sure winning for eventually synchronizing in $T$
from $q_0$ (in $\M$) if and only if player~$1$ is almost-sure winning 
for weakly synchronizing in $\{q_{\sharp}\}$ from $q_0$ (in $\G$).

First, if player~$1$ has a sure-winning strategy for eventually synchronizing in $T$
from $q_0$ (in $\M$), then player~$1$ can play the same strategy in $\G$
followed by playing two times $\sharp$ to visit $q_{\sharp}$ and restart from $q_0$
repeating the same strategy. Hence player~$1$ is almost-sure winning 
for weakly synchronizing in $\{q_{\sharp}\}$ from $q_0$ (in $\G$).

Conversely, if player~$1$ is almost-sure winning 
for weakly synchronizing in $\{q_{\sharp}\}$ from $q_0$ (in $\G$),
then by Lemma~\ref{lem:SCC-somewhere} there exists an SCC $\C$ in $\P(\G)$
containing $\{q_{\sharp}\}$. Given a path $s_0,s_1, \dots, s_k$ (induced by
a sequence of selectors $\alpha_1, \alpha_2, \dots, \alpha_k$)
from $\{q_{\sharp}\}$ to itself in $\P(\G)$ ($s_0 = s_k = \{q_{\sharp}\}$), 
consider the largest index $i<k$ such that $q_{\sharp} \in s_i$.
It is easy to see that the selectors $\alpha_{i+2}, \alpha_{i+3}, \dots, \alpha_{k}$
play only actions in $\Act$, and that they define a sure-winning 
strategy for eventually synchronizing in $T$ from $q_0$ (in $\M$), 
which concludes the proof.
\begin{figure}[t]
\begin{center}
\hrule
\begin{picture}(80,35)(0,0)

\node[Nmarks=n, Nw=33, Nh=18, dash={0.3 0.9}0](m1)(16.5,25){}
\node[Nframe=n](label)(7.5,20){MDP $\M$}
\node[Nmarks=n, Nw=14, Nh=16, dash={0.8 0.9}0](mdp)(25,25){}
\node[Nframe=n](label)(25,20){$T\subseteq Q$}
\node[Nmarks=r](n1)(25,27){$q$}
\node[Nframe=n](label)(14,27){$\dots$}
\node[Nmarks=n](n2)(5,27){$q_0$}   

\node[Nframe=n](arrow)(40,24){{\Large $\Rightarrow$}}


\node[Nmarks=n, Nw=33, Nh=18, dash={0.3 0.9}0](m1)(63.5,25){}
\node[Nmarks=n, Nw=14, Nh=16, dash={0.8 0.9}0](mdp)(72,25){}
\node[Nframe=n](label)(72,20){$T\subseteq Q$}
\node[Nmarks=n](n2)(72,27){$q$}
\node[Nframe=n](label)(61,27){$\dots$}
\node[Nmarks=n](n1)(52,27){$q_0$}   

\node[Nmarks=n](sink)(52,8){$q_{\bot}$} 
\node[Nmarks=r](qq)(72,8){$q_{\sharp}$} 


\drawedge[ELpos=40, ELside=l](n1,sink){$\sharp$}
\drawedge[ELpos=60, ELside=l, curvedepth=7](n2,qq){$\sharp$}
\drawline[AHnb=1, arcradius=2](69.17,5.17)(66,2)(42,2)(42,17)(49.17,24.17)
\node[Nframe=n](label)(36,9){$\Act \! \cup \!\{\sharp\}$}




\drawloop[ELpos=57, ELside=l, NLdist=-2, loopCW=y, loopangle=35, loopdiam=5](sink){$\Act \! \cup \!\{\sharp\}$}

\end{picture}
\hrule
\caption{Sketch of the reduction to show PSPACE-hardness of the membership problem
for almost-sure weakly synchronizing.}\label{fig:reduction-det-game}
\Description{An MDP on the left, a deterministic game on the right.}
\end{center}
\end{figure}
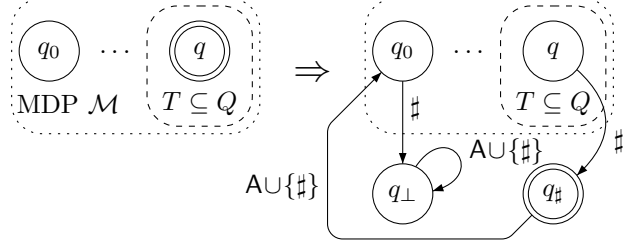
\end{proof}

\subsection{Weakly synchronizing in stochastic games}\label{sec:stoch-games}

We present an algorithm to compute the almost-sure winning region
for weakly synchronizing objectives in stochastic games,
which generalizes the result of Section~\ref{sec:det-games}.
This algorithm has the flavor of the algorithm for
deterministic games, with additional complications due
to the probabilistic transitions in the game. The proof
is also more technical because we no longer assume 
that pure counting strategies are sufficient 
for player~$1$ (but we show that such strategies are indeed always
sufficient for almost-sure winning).

Recall that throughout this section we consider a stochastic game 
$\G = \tuple{Q, \Act, \delta}$ and we denote by $n = \abs{Q}$ the size of the state space, 
and by $\eta$ the smallest positive probability in the transitions of $\G$.
We consider the almost-sure weakly synchronizing objective defined
by a set $T \subseteq Q$ of accepting states.

Given a set $U \subseteq Q$, consider the sequence $U_i = \CPre^i(U)$
for $i \geq 1$ (and $U_0 = U$). Since $U_i \subseteq Q$, this sequence 
is ultimately periodic. Consider the least $k \geq 0$ for which there exists 
$r > 1$ such that $U_{k} = U_{k+r}$, and consider the least such $r$,
called the \emph{period}. It is easy to see that  $k,r \leq 2^n$.
For $R = U_{k}$ we call $\tuple{R,r,k}$ the \emph{periodic scheme} of $U$
and we refer to its elements as $\R(U) = U_k$, $\r(U) = r$, and $\k(U) = k$.
The set $U$ is \emph{self-recurrent} if $U \neq \emptyset$ and there exists 
an index $0\leq t < r$ such that all states in $U \times \{t\}$ are almost-sure 
winning for the (state-based) reachability objective $\Diamond(R \times \{0\})$ in $\G \times [r]$.
The intuitive meaning of being self-recurrent appears in Lemma~\ref{lem:SCC-somewhere-gen} below.
Note that in deterministic games~$\G$, an accepting set $U \subseteq T$ 
contained in a strongly connected component $\C$ of $\P(\G)$ 
is self-recurrent. Self-recurrent sets are the key to generalize 
the result of Lemma~\ref{lem:SCC-somewhere} to stochastic games. 
The argument of the proof is more involved, and presented in the 
following three lemmas.


\begin{lemma}\label{lem:SCC-somewhere-gen}
If there exists a self-recurrent set $U \subseteq T$, 
then 
there exists a state from which player~$1$ is almost-sure winning 
for weakly synchronizing in $T$.
\end{lemma}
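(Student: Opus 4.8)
The plan is to build the strategy directly, working inside the product game $\H = \G \times [r]$ with $r = \r(U)$, and exploiting three mechanisms supplied by the periodic scheme $\tuple{R,r,k}$ of $U$ (with $R = \R(U)$, $k = \k(U)$, and $U_i = \CPre^i(U)$). First, a \emph{contraction} mechanism: since $R = \CPre^k(U)$, the attractor-style witness strategy forces, from any distribution supported on $R \times \{0\}$, the whole support into $U = U_0$ after exactly $k$ rounds no matter what player~$2$ does, ending at counter $c = -k \bmod r$; as $U \subseteq T$, the mass in $T$ at that round equals the mass that sat on $R \times \{0\}$. Second, a \emph{parking} mechanism: since $\CPre^r(R) = R$, player~$1$ can keep the support inside $R$ at every counter-$0$ round, so mass that has reached $R \times \{0\}$ can be held there. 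Third, \emph{concentration}: self-recurrence says precisely that $U \times \{t\} \subseteq \W$, where $\W$ is the almost-sure reachability region of $R \times \{0\}$ in $\H$; together with Lemma~\ref{lem:almost-sure-reach-game-pl1} this yields a strategy that, for every $\epsilon > 0$, drives at least a $1-\epsilon$ fraction of the mass into $R \times \{0\}$ within a bounded horizon $h_\epsilon$.

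I would then fix a state $u_0 \in U$ and show that $1_{u_0}$ is winning, constructing the strategy in $\H$ from $\tuple{u_0, c}$ in phases $i \geq 1$ with $\epsilon_i = 2^{-i}$. In phase $i$: (a)~play the concentration strategy, switching each fragment of mass to parking as soon as its history first reaches $R \times \{0\}$, until a common counter-$0$ round $N_i \geq h_{\epsilon_i}$ carries at least $1-\epsilon_i$ of the mass on $R \times \{0\}$; (b)~apply the contraction strategy for $k$ rounds, so that at round $N_i + k$ the mass in $U$, hence in $T$, is at least $1-\epsilon_i$, which is the synchronization of phase~$i$; (c)~continue with phase $i+1$. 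Crucially, because threads are separated by their histories (whether they have reached $R \times \{0\}$ in the current phase), the distribution semantics lets player~$1$ apply the contraction actions only to the concentrated mass while the remaining fraction keeps executing the concentration strategy inside $\W$, so that no mass is permanently lost. Each phase yields a round with $d(T) \geq 1-\epsilon_i$ and $\epsilon_i \to 0$, so the outcome is almost-sure weakly synchronizing into $U \times \{c\}$ in $\H$; transporting the strategy to $\G$ via the bijection $\mu_c$ and dropping the counter (the mass in $U \times \{c\}$ at a counter-$c$ round equals the mass in $U$ in $\G$) shows that $1_{u_0}$ is almost-sure weakly synchronizing in $T$, as required.

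I expect the main obstacle to be closing the cycle across phases: after a synchronization the concentrated mass lands in $U$ at counter $c = -k \bmod r$, and phase $i+1$ can re-concentrate it at $R \times \{0\}$ only if $U \times \{c\} \subseteq \W$. This is exactly why the counter is essential and why the relevant self-recurrence residue must be $c$: a direct computation shows that forcing into $U_0 = U$ from the periodic set $U_{k+v}$ sitting at counter $v$ lands at counter $-k$ independently of $v$, so $c$ is the only residue at which the recurrent cycle can close, and one must verify $U \times \{c\} \subseteq \W$ there. The companion difficulty is to ensure the $\le \epsilon_i$ leftover never leaves $\W$ — otherwise the accumulated losses $\sum_i \epsilon_i$ would keep the error bounded away from $0$ — which is why the contraction strategy is applied only to mass already on $R \times \{0\}$, every other thread remaining in the player-$2$ trap $\W$ until it is deliberately synchronized. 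The rest (existence of $h_\epsilon$, monotone growth of the concentrated mass, and the limit $\epsilon_i \to 0$) is routine bookkeeping given Lemma~\ref{lem:almost-sure-reach-game-pl1}.
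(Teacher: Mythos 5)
Your proposal is correct and takes essentially the same route as the paper's proof: work in $\G \times [r]$, concentrate mass on $R \times \{0\}$ within a bounded horizon (Lemma~\ref{lem:almost-sure-reach-game-pl1}), park it there using $R = \CPre^r(R)$, contract it into $U$ using $R = \CPre^k(U)$, and iterate with errors $2^{-i}$ while the uncontracted remainder keeps following the memoryless almost-sure reachability strategy inside the winning region. The cycle-closing subtlety you flag, namely that the landing slice $U \times \{(-k) \bmod r\}$ must lie in the almost-sure winning region for $\Diamond(R \times \{0\})$, is equally implicit in the paper's claim~(ii), and can be discharged by noting that self-recurrence at any residue $t$ implies self-recurrence at $(-k) \bmod r$: composing the concentrate-then-contract cycle shifts the residue by $t+k$ each time, so $r-1$ compositions land on $t+(r-1)(t+k) \equiv -k \pmod r$.
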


\begin{proof}
Let $\tuple{R,r,k}$ be the periodic scheme of $U$.
We show that if all states  in $U \times \{t\}$
are almost-sure winning for the (state-based) reachability objective
$\Diamond(R \times \{0\})$ in $\G \times [r]$, 
then we can construct a strategy $\straa_{\as}$ for player~$1$ in $\G$ that 
is almost-sure winning for weakly synchronizing in $U$ (and thus also in $T$)
from all distributions with support in $U$.

The key argument is to show that for all distributions $d$ on $U$, 
for all $\epsilon > 0$, there exists a strategy $\straa_{\epsilon}$ 
for player~$1$ that ensures from $d \times \{t\}$, 
against all strategies of player~$2$, that after finitely many rounds, 
a distribution $d' \times \{t'\}$ is reached such that 
$(i)$  $d'(U) > 1 - \epsilon$, and 
$(ii)$ $d' \times \{t'\}$ is almost-sure winning for the reachability objective 
$\Diamond(R \times \{0\})$ in $\G \times [r]$.
From this key argument, we can construct a strategy $\straa_{\as}$ for player~$1$ that 
successively plays according to the strategies $\straa_{\frac{1}{2}}$, $\straa_{\frac{1}{4}}$,
$\straa_{\frac{1}{8}}$,\dots and show that it is almost-sure weakly synchronizing in $U$
(and thus also in $T$) from the distribution $d$, which concludes the proof.

To prove the key argument, we construct $\straa_{\epsilon}$ as follows (see 
also~\cite[Theorem~7]{DMS19}).
Since $d \times \{t\}$ is almost-sure winning for the reachability objective 
$\Diamond(R \times \{0\})$ in $\G \times [r]$, 
there exists a (pure memoryless) strategy $\straa$ and 
an integer $h_{\epsilon}$ such that for all strategies 
$\strab$ of player~$2$, we have 
$\Prb_{q_0}^{\straa,\strab}(\Diamond^{\leq h_{\epsilon}}\, R \times \{0\}) 
 \geq 1-\epsilon$ (Lemma~\ref{lem:almost-sure-reach-game-pl1}).

We construct the strategy $\straa_{\epsilon}$ to play according to $\straa$ 
as long as no state in $R \times \{0\}$ is reached.
Whenever a state in $R \times \{0\}$ is reached,
if it happens within the first $h_{\epsilon}$ rounds,
then the strategy $\straa_{\epsilon}$ plays to reach again $R \times \{0\}$ 
after $r$ more rounds (which is possible since $R = \CPre^r(R)$); 
if it happens after $h_{\epsilon}$ rounds,
then it plays according to a sure-winning strategy for eventually synchronizing 
in $U$ from $R$ (thus for $k$ more steps, where $k$ is such that $R = \CPre^k(U)$). 
It immediately follows that 
for all strategies $\strab$ of player~$2$, after finitely many rounds  
(at most $h_{\epsilon} + r + k$ rounds)
a distribution~$d' \times \{t'\}$ is reached such that 
$(i)$ $d'(U) \geq 1 - \epsilon$, and 
$(ii)$ $d' \times \{t'\}$ is almost-sure winning for the reachability objective 
$\Diamond(R \times \{0\})$ in $\G \times [r]$ (recall that for a reachability objective, 
a distribution is almost-sure winning if 
all states in its support are almost-sure winning). 
\end{proof}

Note that the strategy $\straa_{\as}$ constructed in the proof 
of Lemma~\ref{lem:SCC-somewhere-gen} is pure and counting.

For the converse of Lemma~\ref{lem:SCC-somewhere-gen}, the structure of the
argument is similar to the proof of Lemma~\ref{lem:SCC-somewhere} for deterministic games
and pure strategies.
However, the technical details are more involved due to stochasticity (in the game graph, and
in the strategy of player~$1$).
For convenience, we separate the structure of the argument and the technical details.

\smallskip\noindent{\em Structure of the argument (substitution game).}
We present the \emph{substitution game}, loosely inspired by player substitutions 
in ice hockey: for simplicity there is a single player on ice 
and a coach who manages $(i)$ a main team of players (those who have been on ice at least once
during the game) initially empty, and $(ii)$ a reserve team of $K$ players (fresh 
players who have never been on ice). 
We consider the following protocol for substitutions of players.
A player stays on ice for a one-minute period, then needs to rest 
and to be substituted: he returns to the main team, and a player from 
the main team (possibly the same player) is selected by the coach to go on ice. 
At each one-minute period, 
the players in the main team have 
the possibility to \emph{pass} (e.g., if they are tired), which they can
do whenever they want, not necessarily on consecutive periods but at most a fixed 
number $N$ of times. When a player returns from ice
to the main team, he gets recharged with $N$ pass. 
The next player to go on ice is chosen, at the discretion of the coach,
among the players of the main team who did not pass.
If at some period all players in the main team pass, 
a player from the reserve team is called to go on ice instead (reserve players 
cannot pass), and will integrate the main
team after a one-minute period, thereby getting $N$ possibilities to pass.
Initially, the main team is empty. The question is how many players
should there be initially in the reserve team, in order for the coach 
to be able to play the substitution game forever regardless of when the players 
decide to pass, and what is an optimal strategy for the coach (assuming there 
is no player on ice initially). 

Formally, a configuration of the substitution game consists of a set $M$,
initially $M = \emptyset$, and a function $f: M \to \nat$ that maps
each member of the main team to their number of pass. 
Given a configuration $\tuple{M,f}$, the players choose a set $P \subseteq \{p \in M \mid f(p) \geq 1\}$
of players who will pass, 
and then the coach chooses a player $p$ such that
$p \in M \setminus P$ if $P \neq M$, and $p = \abs{M}$ if $P = M$.
The next configuration is $\tuple{M',f'}$ where $M' = M \cup \{p\}$
and $f'$ is defined by $f'(p) = N$, and for all $m \in M \setminus \{p\}$:
\begin{linenomath*}
$$f'(m) = \begin{cases} f(m) - 1 & \text{if } m \in P, \\
                                    f(m) &     \text{if } m \not\in P. \\ \end{cases}$$
\end{linenomath*}
The game is won by the coach if there exists $K$ (the number of reserve players)
such that the game continues forever, with $\abs{M} \leq K$ in all configurations.

It is easy to see that $K \geq N + 1$ is necessary for the coach to win (e.g.,
if all players always pass, then $K=N$ players is not sufficient). 
To show that $K = N + 1$ is sufficient, consider the strategy of the
coach that, given the current configuration $\tuple{M,f}$, 
chooses a player $p$ in $M \setminus P$ with largest number of pass (if $P \neq M$),
that is $p \in \argmax_{x \in M \setminus P} f(x)$.
Under this strategy, the following invariant holds in all configurations $\tuple{M,f}$ during 
the game: for $k= \abs{M}$, let $p_{0}, p_{1}, \dots, p_{k-1}$ be the players in $M$ 
in ascending order according to $f$ (i.e., if $i \leq j$, then $f(p_{i}) \leq f(p_{j})$),
then $f(p_i) \leq N+1-k+i$. The proof is by induction, where the initial 
configuration satisfies the invariant trivially, and given the invariant
holds in a configuration $\tuple{M,f}$ with $k= \abs{M}$, we consider two
cases: $(1)$ if $P = M$, then all players in $M$ are passing, thus 
$\abs{M'} = k+1$ in the next configuration $\tuple{M',f'}$ and
$f'(p_i) =  f(p_i) - 1 \leq (N+1-k+i)-1 = N+1-(k+1)+i$ for $p_i \in M$ (in ascending order),
and for the new player $p_k$ in position $k$, we verify that $f'(p_k) = N \leq N+1-(k+1)+k$;
$(2)$ if $P \neq M$, then $\abs{M'} = \abs{M} = k$ in the next configuration $\tuple{M',f'}$,
and for the players $p_i$ on the left of the player chosen by the coach, 
their position in $M'$ is the same as in $M$ and their value of $f$ does not increase,
and for the players $p_i$ on the right of the player chosen by the coach, 
their position in $M'$ and their value of $f$ decrease by $1$, 
thus in both cases $f'(p_i) \leq N+1-k+i$ holds, while for the player chosen by the 
coach, its new position is $k-1$ and we verify that $f'(p_{k-1}) = N \leq N+1-k+(k-1)$.

It follows from this invariant that $f(p_0) \leq N+1-k$ for the first player 
in ascending order, and $k \leq N+1$ since $f(p_0) \geq 0$, that is each
configuration has at most $N+1$ players in $M$, showing that $N+1$ reserve
players are sufficient.

In the proof of Lemma~\ref{lem:SCC-somewhere}, the players of the substitution 
game were the registers, and the number $N$ of allowed pass was $N = 2^n-1$
(which is an upper bound on the number of positions that player~$2$ could not
cover). 
We concluded that $K = 2^n$ registers were sufficient, which allowed us to 
define a spoiling strategy for player~$2$ using $K$ tokens, each representing
a probability mass of $\frac{1}{K}$.

\smallskip\noindent{\em Technical aspects.}
For stochastic games, we use a structure of argument based on 
the substitution game to show the
converse implication of Lemma~\ref{lem:SCC-somewhere-gen}. 
In Lemma~\ref{lem:epsilon-weakly} we construct a number $\epsilon_w > 0$ 
and a bound $N_w$ on
the number of positions that player~$2$ may not be able to cover
against an arbitrary strategy of player~$1$,
where covering a position informally means that a probability mass 
$\epsilon_w > 0$ is in $Q \setminus T$ at that position.



\begin{lemma}\label{lem:epsilon-weakly}
Let $\G$ be a stochastic game.
There exists $\epsilon_w > 0$ and $N_w \in \nat$ such that the following holds:
if there exists no set $U \subseteq T$ that is self-recurrent, 
then in $\G$ for all player-$1$ strategies $\straa$, 
for all but at most $N_w$ rounds $i$, 
there exists a player-$2$ strategy $\strab$
such that $\G^{\straa,\strab}_{i}(T) \leq 1 - \epsilon_w$.
\end{lemma}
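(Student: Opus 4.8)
The plan is to prove the statement in contrapositive, quantitative form. Assuming no $U \subseteq T$ is self-recurrent, I will fix explicit constants $\epsilon_w > 0$ and $N_w \in \nat$ (both functions of $n = \abs{Q}$ and $\eta$) and show that for every player-$1$ strategy $\straa$ the set of \emph{uncoverable} rounds
$$B(\straa) = \{ i \mid \forall \strab \in \Strab :\ \G^{\straa,\strab}_i(T) > 1-\epsilon_w \}$$
has at most $N_w$ elements; a round $i \notin B(\straa)$ is exactly one for which some $\strab$ achieves $\G^{\straa,\strab}_i(T) \le 1-\epsilon_w$, as required. As an entry point I would first extract a clean structural consequence of the hypothesis: $\PosAttr_2(\Reject,\G) = Q$. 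Indeed, if $U := Q \setminus \PosAttr_2(\Reject,\G)$ were nonempty then, since $\Reject \subseteq \PosAttr_2(\Reject,\G)$, we would have $U \subseteq T$, and unfolding $\PosPre_2$ shows $U \subseteq \CPre(U)$; hence the iterates $\CPre^i(U)$ are nondecreasing with limit $\R(U) \supseteq U$ satisfying $\CPre(\R(U)) = \R(U)$. Because $U \subseteq \R(U)$ is a player-$2$ trap, from $U \times \{t\}$ player-$1$ can surely stay in $\R(U)$ and align the counter to $0$, witnessing almost-sure reachability of $\R(U) \times \{0\}$ in $\G \times [\r(U)]$, so $U$ would be self-recurrent, a contradiction. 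Thus from every state player-$2$ can push mass $\ge \eta^{n}$ into $\Reject$ within $n$ rounds; this yields escape to $\Reject$ but not yet control of the exact round, which is the role of the counting construction below.

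Next I would associate to each uncoverable round a finite \emph{configuration}, mirroring the accepting-set/SCC bookkeeping of Lemma~\ref{lem:SCC-somewhere}. Fix a threshold $\beta = \eta^{\mathrm{poly}(n)}$ and define, for $i \in B(\straa)$, the heavy support $U_i = \{ q \in T \mid \sup_{\strab} \G^{\straa,\strab}_i(q) \ge \beta \} \subseteq T$; robustness of the $T$-mass against every $\strab$ is precisely what membership in $B(\straa)$ buys. The configuration of $i$ is the pair $(U_i,\, i \bmod \r(U_i))$. There are at most $\sum_{U \subseteq T} \r(U) \le 2^{n}\cdot 2^{n}$ configurations, so I set $N_w$ to this bound, and I take $\epsilon_w = \eta^{\mathrm{poly}(n)}$ with $\epsilon_w$ chosen much smaller than $\beta$ and than the slack produced in the next step. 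The claim reduces to showing that distinct uncoverable rounds carry distinct configurations, which immediately gives $\abs{B(\straa)} \le N_w$.

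To prove the claim I argue by contradiction with a pumping argument in a counter game. Suppose two uncoverable rounds $i_1 < i_2$ share a configuration $(U,t)$, set $r := \r(U)$, so $i_1 \equiv i_2 \pmod r$. In $\G \times [r]$ the counter agrees at $i_1$ and $i_2$, and the segment of $\straa$ on $[i_1,i_2]$ sends the $U$-synchronized heavy configuration back to itself while robustly keeping mass in $T$; iterating this segment produces a strategy under which, from $U \times \{t\}$, the mass in $T$ at the counter-$0$ rounds stays above $1-\epsilon_w$ against every $\strab$. Were $U \times \{t\}$ \emph{not} almost-sure winning for $\Diamond(\R(U) \times \{0\})$ in $\G \times [r]$, then Lemma~\ref{lem:almost-sure-reach-game} would supply a memoryless player-$2$ strategy keeping mass $\ge \eta^{n \cdot r}$ away from $\R(U) \times \{0\}$ at every counter-$0$ round; tracing this back through the heavy support funnels a mass $\ge \epsilon_w$ into $\Reject$ at those rounds, contradicting uncoverability. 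Hence $U \times \{t\}$ \emph{is} almost-sure winning for that reachability objective, i.e. $U$ is self-recurrent, contradicting the hypothesis. Therefore configurations are pairwise distinct and $\abs{B(\straa)} \le N_w$.

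The main obstacle I anticipate is exactly this last conversion. Since $\straa$ is arbitrary (randomized and infinite-memory), the heavy support $U_i$ is only a threshold-based proxy for a subset-construction state, so care is needed to ensure the heavy mass is genuinely preserved under the pumped segment and that the thresholds $\beta$ and $\epsilon_w$ compose across the (up to $2^{n}$-long) counter period without degrading to $0$. The deeper point is that uncoverability, which is a statement about individual rounds each established against a possibly \emph{different} $\strab$, must be upgraded to \emph{almost-sure} reachability of the synchronized periodic target $\R(U) \times \{0\}$ — not merely positive or eventual reachability — which is where the exact-timing content lives. This is precisely the place where Lemma~\ref{lem:almost-sure-reach-game} (to manufacture the covering, i.e. spoiling, strategy when reachability fails) and the identity $\R(U) = \CPre^{\r(U)}(\R(U))$ (to realign the counter modulo the period) have to be combined, and getting the quantitative bounds to survive this combination is the crux of the proof.
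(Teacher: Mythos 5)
Your opening observation is correct (if $U:=Q\setminus\PosAttr_2(Q\setminus T,\G)$ were nonempty it would be a player-$2$ trap inside $T$, hence self-recurrent), and your target bound $N_w=4^n$ via counting configurations $(U_i,\,i \bmod \r(U_i))$ matches the paper's. But the proof of your key claim --- that distinct uncoverable rounds carry distinct configurations --- has a genuine gap, in fact two. First, the pumping step is not available. The strategy $\straa$ is an arbitrary randomized, history-dependent strategy, so ``the segment of $\straa$ on $[i_1,i_2]$'' is a map on histories extending length-$i_1$ prefixes from $d_0$; it is not an object you can iterate from a fresh start $U\times\{t\}$. Two rounds sharing a heavy support and a counter residue share nothing else: the actual distributions at $i_1$ and $i_2$ differ, and the residual behaviour of $\straa$ after round-$i_2$ histories is unrelated to its behaviour after round-$i_1$ histories, so no guarantee transfers to the iterates. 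Uncoverability of the two specific rounds $i_1,i_2$ is a statement about $\straa$ played from $d_0$ at those two rounds only; it does not yield a player-$1$ strategy from $U\times\{t\}$ with guarantees against \emph{all} $\strab$ at \emph{all} counter-$0$ rounds, which is what witnessing self-recurrence (almost-sure reachability of $\R(U)\times\{0\}$) requires. The paper never attempts such a conversion in this lemma; the direction ``winning strategy exists $\Rightarrow$ self-recurrent set exists'' is handled separately (Lemma~\ref{lem:SCC-somewhere-gen-equiv}), via the substitution-game argument, \emph{using} the present lemma.

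Second, and more fundamentally, your contradiction step conflates ``outside $U$'' with ``outside $T$''. The spoiler of Lemma~\ref{lem:almost-sure-reach-game} keeps mass away from $\R(U)\times\{0\}$, and from outside $\R(U)=\CPre^{\k(U)+m\r(U)}(U)$ player~$2$ can only push mass outside $U$, attenuated by a factor of order $\eta^{n\cdot 2^n}$; that mass may perfectly well land in $T\setminus U$. With a single threshold $\beta$ the argument then cannot close: the injected mass is far below $\beta$, so it neither covers the round (it need not lie outside $T$) nor forces a change of the heavy support (it is too small to push a new state of $T\setminus U$ above threshold $\beta$). So two uncoverable rounds with the same configuration remain perfectly consistent with $U$ not being self-recurrent, and the same objection defeats a pumping-free variant that works directly with $i_1$ and $i_2$. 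This is exactly the difficulty the paper's proof is built around: it runs at most $2^n$ iterations with \emph{geometrically decreasing} thresholds $\epsilon_0>\epsilon_1>\cdots$ (with $\epsilon_k$ of order $\frac{\beta}{n}\cdot\frac{\epsilon_{k-1}}{2^n}$), maintains token sub-distributions via the superposed spoilers $\strab^{U}$, and turns every failure to cover a round into the discovery of a heavy set $U_k$ that is \emph{new} --- but new only at the smaller threshold $\epsilon_k$. The resulting count confines uncoverable rounds to at most $2^n$ windows $[\hat{\imath}_k,\hat{\imath}_k+2^n]$; it does not, and need not, separate them by configuration (nothing prevents rounds $i$ and $i+\r(U)$ inside one window from sharing a configuration). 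To repair your proof you would essentially have to reintroduce this threshold hierarchy and the new-set discovery step, at which point it becomes the paper's argument.
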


\begin{proof}
By the conditions of the lemma and by Lemma~\ref{lem:almost-sure-reach-game}, 
given a set $U \subseteq T$ with periodic scheme $\tuple{R,r,k}$ and 
given $0\leq t < r$, there exists a strategy $\strab^{U}_t$ 
for player~$2$ in $\H = \G \times [r]$ 
(with initial distribution $d_0$ such that $\Supp(d_0) = U \times \{t\}$),
such that for all player-$1$ strategies $\straa$, we have 
$\H^{\straa,\strab^{U}_t}_{i}(R \times \{0\}) \leq 1 - \eta_0 \cdot \eta^{n \cdot 2^n}$
where $\eta_0 = \min \{ d_0((q,t)) \mid q \in U \}$ is the smallest positive probability 
in~$d_0$.

\smallskip\noindent{\em Construction of the strategy $\strab^U$.}
Each strategy $\strab^{U}_t$ maps to a strategy $\strab$ in $\G$ via the
bijection $\mu_{t}$ (namely, $\strab(\rho) = \strab^{U}_t(\mu_t(\rho))$, 
see p.\pageref{ref:bijection}), 
and by an abuse of notation we also denote by 
$\strab^{U}_t$ the corresponding strategy $\strab$ in $\G$. Thus 
$\strab^{U}_t$ plays in $\G$ as if the initial index (in $\G \times [r]$)
was $t$. Define the strategy $\strab^U = \sum_{t=0}^{\r(U)-1} \frac{1}{\r(U)} \cdot \strab^{U}_t$
as the uniform superposition of the strategies $\strab^{U}_t$ for $t=0,\dots,\r(U)-1$.
Notice that if $d_0(q) \geq \eta_0$ for all $q \in U$, 
then from $d_0$ the strategy $\strab^U$ ensures, against all player-$1$ strategies $\straa$, 
that $\G^{\straa,\strab^{U}}_i(Q \setminus R) \geq \frac{\eta_0 \cdot \eta^{n \cdot 2^n}}{2^n}$ 
for all $i \geq 0$, since $\r(U) \leq 2^n$.

\smallskip\noindent{\em Construction of the set $I_w$.}
We define the numbers $\epsilon_w > 0$ and $N_w \in \nat$ required
in the lemma. Let $N_w = 4^n$ and 
$\epsilon_w = \frac{1}{2n}\cdot \left(\frac{\eta^{(n+1) \cdot 2^n}}{n\cdot 4^n}\right)^{2^n}$.

To prove the lemma, fix an arbitrary player-$1$ strategy $\straa$ in $\G$, and we construct 
a set $I_w$ of positions such that $\abs{I_w} \leq N_w$, 
and for all $i\in \nat \setminus I_w$, there exists a player-$2$ strategy $\strab$
such that $\G^{\straa,\strab}_{i}(T) \leq 1 - \epsilon_w$.

We construct the set $I_w$ iteratively as follows. 
We construct a sequence $I = I_0, I_1, \dots$ of (finite) sets $I_i \subseteq \nat$ 
of indices (initially, $I_0 = \emptyset$) and a sequence $\U = \U_0, \U_1, \dots$
of sets $\U_i \subseteq 2^T$ of nonempty subsets of accepting states (initially, $\U_0 = \emptyset$). 
The sequence $\U$ will be strictly increasing $\U_0 \subsetneq \U_1 \subsetneq \dots$,
and therefore the construction has (at most) $2^n$ iterations. 
Each iteration $k$ corresponds to a round, numbered $\hat{\imath}_k$, 
and defines the sets $I_k$ and $\U_k$. 

At iteration $k$ ($k=1,2,\dots$), we consider the round $\hat{\imath}_{k-1}$ 
defined from the previous iteration (initially, $\hat{\imath}_0 = 0$), and 
either the construction terminates at iteration $k$, 
or we define a new index $\hat{\imath}_k > \hat{\imath}_{k-1} + 2^n$.
We associate with index $\hat{\imath}_k$ a set of sub-distributions\footnote{A 
\emph{sub-distribution} on~$S$ is a function $d : S \to [0, 1]$ 
such that $\sum_{s \in S} d(s) \leq 1$.}
that 
sum up to the distribution $d_{\hat{\imath}_k}$ in the outcome of the game
$\G$ (from initial distribution $d_0$) under the player-$1$ strategy $\straa$ 
and some player-$2$ strategy $\strab$.
The sub-distributions associated with index $\hat{\imath}_k$ 
are described below (where $K = 2^n$), see also \figurename~\ref{fig:hitting-strategy}:
\begin{itemize}
\item for each $U \in \U_{k}$, we have $(K-k)$ identical sub-distributions $f^U$
(which we call a \emph{$U$-token});

\item we have a sub-distribution $g_k$ storing the remaining probability mass.
\end{itemize}

All tokens for a given set $U$
were created at the iteration, say $j_U$, where $U$ was added to $\U$.
At the end of iteration $j_U$, each token carries probability $\epsilon_{j_U}$
in each state of $U$
and the tokens are updated according to the strategy $\strab^{U}$, which ensures
that at every step after $j_U$, a bounded probability mass (namely, 
$\epsilon_{j_U} \cdot \frac{\eta^{n \cdot 2^n}}{2^n}$)
lies in $Q \setminus \R(U)$ from where, by definition of $\R(U)$, 
player~$2$ has the possibility, using the spoiling strategy against sure 
eventually synchronizing in $U$, to 
inject a fraction $\eta^{\k(U)}$ of the probability mass into $Q \setminus U$ 
after $\k(U)$ steps. Since $\k(U) \leq 2^n$, for all $j \geq \hat{\imath}_k + 2^n$,
there is a player-$2$ strategy to ensure bounded probability in $Q \setminus U$
at step $j$ (namely, $\epsilon_{j_U} \cdot \frac{\eta^{(n+1) \cdot 2^n}}{2^n}$).
Then the state in $Q \setminus U$ with largest probability mass, carries at
least $\frac{\epsilon_{j_U}}{n} \cdot \frac{\eta^{(n+1) \cdot 2^n}}{2^n}$.
If that state is not in $T$, then the index $j$ is covered.
However, if that state is in $T$ (and if this is the case for all $U \in \U_{k}$), 
then the index $j$ is not covered, and we insert $j$ into $I_w$. 
On the other hand, if player~$2$ does inject probability in $Q \setminus U$ in this way, 
we can show that a new set $U$ (i.e., $U \not\in \U_{k}$) carries a bounded probability.
Therefore, the situation where an index $j$ is not covered cannot occur more
than $2^n$ times (which is an upper bound on the cardinality of the sets in 
the sequence $\U$). Note that when player~$2$ does inject probability in $Q \setminus U$,
one token is ``consumed'', and therefore the number 
of tokens for each $U$ will decrease (by $1$) at each iteration. 
We create sufficiently many tokens for each $U$ (namely $K = 2^n$ tokens)
to avoid depletion.

\begin{figure}[!t]
\begin{center}
    \input{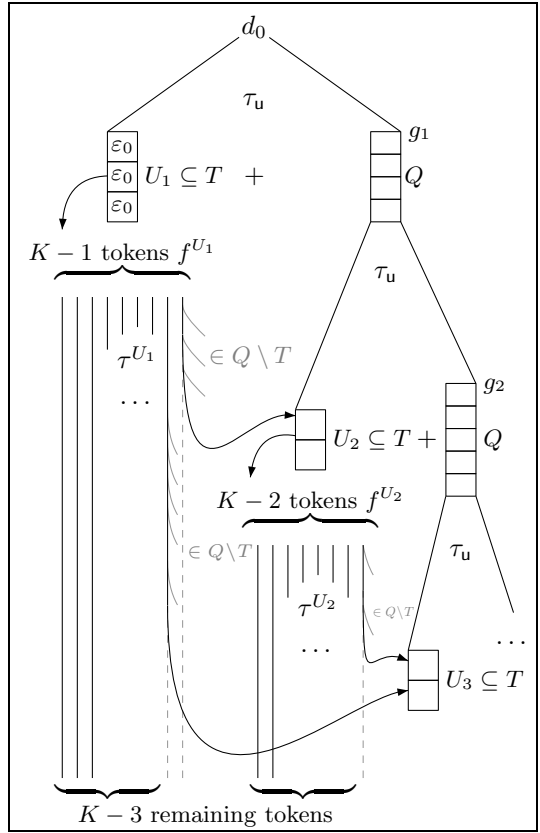}
\end{center} 
 \caption{The construction of the strategy $\strab_w$ as the limit of the sequence $\strab_0,\strab_1,\dots$
 (proof of Lemma~\ref{lem:epsilon-weakly}). \label{fig:hitting-strategy}}
 \Description{A tree rooted at $d_0$ showing which strategy of player~$2$ is used, and a flow of tokens.}
\end{figure}

Initially we have $\U_0 = \emptyset$ and $g_0 = d_0$ is the (full)
initial distribution. 
At iteration $k$, given index $\hat{\imath}_{k-1}$ we first construct the index $\hat{\imath}_k$ and 
then the sub-distributions associated with~$\hat{\imath}_k$. The construction
is illustrated in \figurename~\ref{fig:hitting-strategy}.

The first iteration ($k=1$) of the construction is slightly 
different from the other iterations because $\U_0$ is empty. 
Consider the outcome $\G^{\straa,\strab_{\u}} = d_0, d_1, \dots$
from the initial distribution $g_0 = d_0$ and the uniform strategy $\strab_{\u}$.
Let $\hat{\imath}_1$ be the smallest index $j \geq 0$ such that
the set $T \cap \{q \mid d_j(q) \geq \epsilon_{0}\}$ is nonempty,
where $\epsilon_0 = \frac{1}{2n}$.
We consider two cases: $(1)$ if no such index $j$ exists, then 
$d_j(T) \leq n \cdot \epsilon_0 = \frac{1}{2} \leq \epsilon_w$, for all $j \geq 0$,
hence we can take $I_w = \emptyset$ and the construction terminates;
$(2)$ otherwise, let 
$U_1 = T \cap \{q \mid d_{\hat{\imath}_1}(q) \geq \epsilon_{0}\} \neq \emptyset$
and decompose the sub-distribution $d_{\hat{\imath}_1}$ as
$d_{\hat{\imath}_1} = (K-1) \cdot f^{U_1} + g_{1}$
where $f^{U_1}(q) = \frac{\epsilon_0}{K-1}$ if $q \in U_{1}$, 
and $f^{U_1}(q) = 0$ otherwise (and let $g_{1} = d_{\hat{\imath}_1} - (K-1) \cdot f^{U_1}$), 
thus $f^{U_1}$ and $g_{1}$ are sub-distributions. 
We update $I$ and $\U$ as follows: 
let $I_1 := I_0 \cup [\hat{\imath}_1,\hat{\imath}_1+2^n]$ 
and $\U_{1} := \U_{0} \cup \{U_{1}\}$. Let $\epsilon_1 = \frac{\epsilon_0}{K-1}$
be the probability mass in each state of $U_1$ according to $f_{U_1}$.

Now we present iteration $k$ for $k\geq 2$.
For each $U \in \U_{k-1}$,
pick one of the $K-k+1$ tokens of iteration $k-1$ 
(i.e., a sub-distribution $f^U$ associated with index $\hat{\imath}_{k-1}$ 
that carried probability mass at least $\epsilon_{k-1}$
in each state of $U$ when $U$ was added to $\U$) 
and for each $j \geq i_{k-1} + 2^n$, consider the strategy $\strab^U$ played
until step $j-\k(U)$, then the spoiling strategy for sure eventually 
synchronizing in $U$. The strategy $\strab^U$ would ensure
probability at least $\frac{\epsilon_{k-1} \cdot \eta^{n \cdot 2^n}}{2^n}$
in $Q \setminus \R(U)$ at every step, thus in particular at step $j-\k(U)$,
and the spoiling strategy can inject probability at least 
$\beta \cdot \epsilon_{k-1}$ in $Q \setminus U$ (at step $j$)
where $\beta = \eta^{2^n} \cdot \frac{\eta^{n \cdot 2^n}}{2^n}  
\leq  \eta^{\k(U)} \cdot \frac{\eta^{n \cdot 2^n}}{2^n}$.

Hence under that strategy, 
there is a state $q^U_j \in Q \setminus U$ that contains probability at 
least $\frac{\beta}{n} \cdot \epsilon_{k-1}$. 
We consider two cases:
$(1)$ if for all $j \geq \hat{\imath}_{k-1} + 2^n$ there exists $U \in \U_{k-1}$
such that $q^U_j \not\in T$, then the construction is over and we define 
$I_w = I_{k-1}$;  
$(2)$ otherwise, let $\hat{\imath}_k$ be the smallest index $j \geq \hat{\imath}_{k-1} + 2^n$ 
such that $q^U_j \in T$
for all $U \in \U_{k-1}$, and consider the sub-distribution $d$ at step $\hat{\imath}_k$
that originates from the superposition of playing $\strab_{\u}$ from $g_{k-1}$
and playing, from each $U$-token for $U \in \U_{k-1}$ the strategy $\strab^U$ 
followed (at step $j-\k(U)$ for one of the $U$-tokens) by the spoiling strategy for sure eventually 
synchronizing in $U$.
Let $U_k = T \cap \{q \mid d(q) \geq \frac{\beta}{n}  \cdot \epsilon_{k-1}\} \neq \emptyset$ 
be the set of accepting states that carry 
a sufficiently significant probability mass according to $d$. 
Note that $q^U_j \in U_k$ for all $U \in \U_{k-1}$,
and since $q^U_j \not\in U$, it follows that $U_k \neq U$ for all $U \in \U_{k-1}$,
that is $U_k \not\in \U_{k-1}$.
Decompose the sub-distribution $d$ as
$d = (K-k) \cdot f^{U_k} + g_{k}$
where $f^{U_k}(q) = \frac{\beta}{n}  \cdot \frac{\epsilon_{k-1}}{K-k}$ if $q \in U_{k}$, 
and $f^{U_k}(q) = 0$ otherwise (and let $g_{k} = d - (K-k) \cdot f^{U_k}$), 
thus $f^{U_k}$ and $g_{k}$ are sub-distributions. 

With step $\hat{\imath}_k$, we associate the sub-distribution $g_k$,
the $K-k$ copies of the sub-distribution $f^{U_k}$, and 
for each $U \in \U_{k-1}$ the $K-k$ remaining $U$-tokens updated according 
to $\strab^U$ (see \figurename~\ref{fig:hitting-strategy}).
We update $I$ and $\U$ as follows: 
let $I_k := I_{k-1} \cup [\hat{\imath}_k,\hat{\imath}_k+2^n]$ 
and $\U_{k} := \U_{k-1} \cup \{U_{k}\}$. Let $\epsilon_k = 
\frac{\beta}{n}  \cdot \frac{\epsilon_{k-1}}{K-k}$.

The construction must terminate after at most $2^n$ iterations
because the cardinality of the sets $\U_i$ is increasing by $1$
at each iteration (in fact $\abs{\U_i} = i$), and $\U_i \subseteq 2^T$ 
thus $\abs{\U_i}$ is bounded by $2^n$.
We note that $I_{k}$ is the union of $k$ intervals of size $2^n$,
thus $\abs{I_{k}} \leq k \cdot 2^n \leq 4^n$.
If the construction terminates before the first iteration is complete,
then we have already shown that the result of the lemma holds.
When the construction terminates, say during iteration $k_{\max} < K = 2^n$,
then for all $k \leq k_{\max}$, considering the two cases, 
in all rounds $j$ between $\hat{\imath}_{k-1} + 2^n$ and $\hat{\imath}_k$ 
and in all rounds $j$ after round $\hat{\imath}_{k_{\max}} + 2^n$,
player~$2$ can inject probability 
$\frac{\beta}{n} \cdot \frac{\epsilon_{k-1}}{K} \geq \epsilon_K$ in $q^{U}_j$ 
for $U=U_{k}$, thus outside $T$, which concludes the proof 
with $N_w = 4^n$ and $\epsilon_w = \epsilon_K > \epsilon_0 \cdot (\frac{\beta}{n\cdot K})^{2^n} = 
\frac{1}{2n}\cdot \left(\frac{\eta^{(n+1) \cdot 2^n}}{n\cdot 4^n}\right)^{2^n}$.
\end{proof}

In the proof of Lemma~\ref{lem:epsilon-weakly}, the set $I_w$ 
of cardinality $N_w$ corresponds to the positions that player~$2$
does not cover from an initial distribution $d_0$ (where he would
pass in the substitution game). 
Note the order of the 
quantifiers in the statement of Lemma~\ref{lem:epsilon-weakly}:
player~$2$ may use different strategies to cover different
positions. We use the structure of argument of the substitution game
to show that a single strategy of player~$2$ can cover all but finitely
many positions, and we obtain the generalization of Lemma~\ref{lem:SCC-somewhere}
to stochastic games.

\begin{lemma}\label{lem:SCC-somewhere-gen-equiv}
Let $\G$ be a stochastic game. The following equivalence holds:
there exists a self-recurrent set $U \subseteq T$, 
if and only if, 
there exists a state from which player~$1$ is almost-sure winning 
for weakly synchronizing in $T$.
\end{lemma}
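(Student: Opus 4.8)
We prove the equivalence by establishing the two implications separately. The implication from the existence of a self-recurrent set $U \subseteq T$ to the existence of an almost-sure winning state is exactly Lemma~\ref{lem:SCC-somewhere-gen}, so only the converse requires work. For the converse I would argue by contraposition: assuming that no set $U \subseteq T$ is self-recurrent, I would show that player~$1$ is not almost-sure winning for weakly synchronizing in $T$ from any state. Fixing an arbitrary state $q_0$ and an arbitrary player-$1$ strategy $\straa$ (crucially, with no restriction on its memory or randomization, which is where the generalization beyond the pure-counting setting of Lemma~\ref{lem:SCC-somewhere} lies), the goal is to build a single player-$2$ strategy $\strab$ and a constant $\epsilon > 0$ with $\G_i^{\straa,\strab}(T) \leq 1 - \epsilon$ for all but finitely many $i$. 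This refutes almost-sure weakly synchronizing for $\straa$, and since $\straa$ is arbitrary, $1_{q_0} \notin \winas{weakly}(\G,T)$.

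The starting point is Lemma~\ref{lem:epsilon-weakly}, which provides constants $\epsilon_w > 0$ and $N_w \in \nat$ (depending only on $\G$) such that, for the fixed $\straa$, all but at most $N_w$ rounds can be \emph{covered}: for each such round some player-$2$ strategy forces mass at least $\epsilon_w$ into $Q \setminus T$. As the remark following that lemma stresses, these strategies are chosen per round, so the remaining task is one of scheduling, namely to merge a family of single-round covering strategies into one strategy that covers cofinitely many rounds. The device for this is the substitution game: the tokens maintained by player~$2$ are the players, and the bound $N_w$ on the number of uncoverable rounds plays the role of the number $N$ of passes, so that the substitution-game analysis guarantees that $K = N_w + 1$ tokens suffice for player~$2$ (the coach) to keep covering forever.

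Concretely, I would split the initial mass at $q_0$ into $K$ tokens, each a sub-distribution of mass $\frac{1}{K}$, and drive each active token by a covering strategy $\strab^{U}$ of the form produced in Lemma~\ref{lem:epsilon-weakly} (which keeps a bounded fraction of the token's mass in $Q \setminus \R(U)$ and then, after at most $\k(U)$ steps, injects it into $Q \setminus U$). A token keeps covering until it reaches a round where its injected mass lands in $T$ rather than in $\Reject = Q \setminus T$; such a round is a \emph{pass} for that token, and the coach then switches to another token according to the substitution rule (selecting a non-passing token with the largest remaining reserve of passes). The superposition of the $K$ token strategies is the desired strategy $\strab$, and the scheduling invariant of the substitution game ensures that from some round on, at every round at least one token keeps a fixed positive mass $\epsilon > 0$ in $Q \setminus T$, with $\epsilon$ a constant fraction of $\epsilon_w$ determined by $K$; taking this $\epsilon$ finishes the converse.

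The main obstacle is the faithful translation of the abstract substitution game into an actual player-$2$ strategy on sub-distributions. One must check that each token passes at most $N_w$ times before being recharged (so that the substitution-game bound applies), that the coach's choice of which token to activate is always realizable through the superposition of the underlying $\strab^{U}$, and that a token resumed after resting still carries mass at least $\epsilon$ in $Q \setminus T$ when it covers, so the lower bound does not decay. These are the bookkeeping invariants analogous to the consistency, singleton, chain, and key properties used in the proof of Lemma~\ref{lem:SCC-somewhere}; once they are in place, the uniform positive lower bound on the mass outside $T$ at cofinitely many rounds is immediate, and the equivalence follows.
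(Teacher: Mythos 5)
Your proposal matches the paper's own proof essentially step for step: one direction is delegated to Lemma~\ref{lem:SCC-somewhere-gen}, and the converse is proved by contraposition for an arbitrary (unrestricted) player-$1$ strategy, combining the per-round covering strategies of Lemma~\ref{lem:epsilon-weakly} with the substitution-game scheduling argument --- tokens of mass $\frac{1}{K}$ with $K \leq N_w+1$, whose strategies are superposed into a single player-$2$ strategy keeping mass at least $\frac{\epsilon_w}{K}$ outside $T$ at all but finitely many rounds. The only cosmetic difference is that you allocate all $K = N_w+1$ tokens upfront, whereas the paper draws tokens dynamically from a reserve team until the main team stabilizes at some size $K \leq N_w+1$; both variants are correct.
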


\begin{proof}
One direction of the lemma is given by Lemma~\ref{lem:SCC-somewhere-gen}.
For the converse direction, the proof uses Lemma~\ref{lem:epsilon-weakly}
and is similar to the proof of Lemma~\ref{lem:SCC-somewhere}. 

Given an arbitrary strategy $\straa$ for player~$1$, by Lemma~\ref{lem:epsilon-weakly} 
there exists an index $i_0$ (namely, $i_0 = \max I_w$ the largest 
index in the finite set $I_w$ constructed in the proof of Lemma~\ref{lem:epsilon-weakly})
such that for all $i\geq i_0$, there exists a player-$2$ strategy $\strab^i$
such that $d_i = \G^{\straa,\strab^i}_{i}(T) \leq 1 - \epsilon_w$. The strategies
$\strab^i$ correspond to the reserve team of tokens (in the substitution game). 

Now for each $d_i$ (as an initial distribution), considering the strategy $\straa$
of player~$1$ played from $d_i$, we also get from 
Lemma~\ref{lem:epsilon-weakly} a finite set $I_{d_i}$ of at most $N_w$ indices,
which are the times associated with $d_i$ where player~$2$ will pass in the 
substitution game. As we know that at most $N_w + 1$ tokens from the reserve team
may be needed in the substitution game, there is a time $i^*_0$ after which 
no token will ever be transferred from the reserve team and let $K$ be the number of
token in the main team of tokens at time $i^*_0$. We assign probability mass $\frac{1}{K}$
to each token. At every step $i \geq i^*_0$,
there is a token that does not pass, and therefore there exists a strategy $\strab$
transforming that token into a sub-distribution that carries probability at least
$\frac{\epsilon_w}{K}$ in $Q \setminus T$ at time $i$. The resulting sub-distribution
(at time $i$) has again from Lemma~\ref{lem:epsilon-weakly} a finite set of indices
of size $N_w$ 
where it can pass, as in the substitution game. Hence the process can continue
forever, and at every step $i \geq i^*_0$, there is a probability mass at least
$\frac{\epsilon_w}{K}$ in $Q \setminus T$, showing that $\straa$ is not 
almost-sure winning for weakly synchronizing in $T$.
\end{proof}

\begin{algorithm}[t]
\DontPrintSemicolon
\caption{$Solve(\H,p,T)$}
\label{alg:solve-as-ws-stoch}
{
  \SetKwInOut{Input}{Input}\SetKwInOut{Output}{Output}
 
  \Input{$\G = \tuple{Q, \Act, \delta}$ is a stochastic game, $T \subseteq Q$ is a target set.}
  \Output{The set $\{\Supp(d) \mid d \in \winas{weakly}(\G,T)\}$.}%
\begin{flushleft}
 \Begin
 {
     \nl $r \gets 1$ \label{alg:solve-as-ws-stoch-initr}\;
     \nl $\H \gets G \times [r]$ \label{alg:solve-as-ws-stoch-inith} \;
     \nl $S \gets Q \times [r]$ \label{alg:solve-as-ws-stoch-inits} \;

     \nl $K \gets S$ \label{alg:solve-as-ws-stoch-inits0} \;

     \nl \label{alg:solve-as-ws-stoch-loop}\Repeat{$K = \emptyset$} 
     {

     \nl \If{there is a self-recurrent set $U \subseteq T$ in 
             $\H\upharpoonright [K]$  \label{alg:solve-as-ws-stoch-main-test}}
     {\nl Let $\tuple{R,r,k}$ be the periodic scheme of $U$  \;
             \nl $\H \gets \G \times [r]$ \label{alg:solve-as-ws-stoch-newH}\;
             \nl $K,S \gets \expand(r,K,S)$ \label{alg:solve-as-ws-stoch-expand}\;
             \nl Let $W$ be the almost-sure winning \\region for the (state-based)
                 reachability \\ objective $\Diamond(R \times \{k \!\mod r\})$ in $\H\upharpoonright [K]$ \label{alg:solve-as-ws-stoch-W0}\;
             \nl $X \gets \PosAttr_1(W, \H\upharpoonright [K])$ \label{alg:solve-as-ws-stoch-posAttr1} \;
             \nl $K \gets K \setminus X$ \label{alg:solve-as-ws-stoch-posAttr2} \; 
     }
     \label{alg:solve-as-ws-stoch-else} \Else 
     {
             \nl $L \gets \PosAttr_2(K, \H\upharpoonright [S])$ \label{alg:solve-as-ws-stoch-L1}\;
             \nl $S \gets S \setminus  L$ \label{alg:solve-as-ws-stoch-SL}\;
             \nl $K \gets S$ \label{alg:solve-as-ws-stoch-L2}\;
     }
     }
     \nl \KwRet{$\{s \subseteq Q \mid \exists i: s \times \{i\} \subseteq S\}$} \label{alg:solve-as-ws-stoch-return}\;
 } 
\end{flushleft}
}
\end{algorithm}

\begin{figure}
\begin{center}
  \hrule
  \begin{picture}(52,42)(0,-3)

\node[Nmarks=n](q)(5,25){$q$}
\node[Nmarks=n](r)(25,25){$s$}
\node[Nmarks=r, Nmr=0](s)(45,25){$t$}

\rpnode[Nmarks=n](qx)(15,20)(4,1.2){}
\drawarc[linegray=0](15,20,2.5,270,26.565)
\rpnode[Nmarks=n](rx)(35,20)(4,1.2){}
\drawarc[linegray=0](35,20,2.5,216.87,26.565)

\node[Nmarks=n](x)(15,5){$x$}
\node[Nmarks=r](y)(35,5){$y$}

\drawedge[ELside=l,ELpos=50, ELdist=1](q,r){$a_1$}
\drawedge[ELside=l,ELpos=50, ELdist=1](r,s){$a_1$}
\drawedge[ELside=r,ELpos=50, ELdist=1, curvedepth=-6, syo=1,eyo=1](s,q){$b_2$}

\drawedge[ELside=r,ELpos=55, ELdist=.5](q,qx){$a_2$}
\drawedge[ELside=r,ELpos=55, ELdist=.5](r,rx){$a_2$}
\drawedge[ELside=r,ELpos=50, ELdist=1](qx,x){}
\drawedge[ELside=r,ELpos=50, ELdist=1](qx,r){}
\drawedge[ELside=r,ELpos=50, ELdist=1](rx,s){}
\drawedge[ELside=r,ELpos=50, ELdist=1, eyo=2](rx,x){}

\drawedge[ELside=r,ELpos=50, ELdist=1, curvedepth=4.5](x,y){}
\drawedge[ELside=r,ELpos=50, ELdist=1, curvedepth=4.5](y,x){}


\drawloop[ELside=l,loopCW=y, loopangle=45, loopwidth=5, loopheight=4](s){$b_1$}

\end{picture}
  \hrule
  \caption{A stochastic game $\G$. \label{fig:game32stoch}}
  \Description{A stochastic game with 5 states.}
\end{center}
\end{figure}

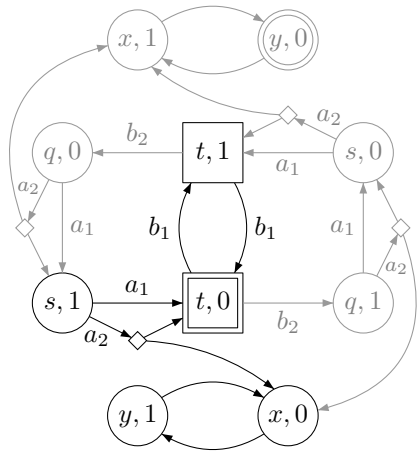
\begin{figure}
\begin{center}
  \hrule
  \begin{picture}(50,62)(0,-1)

\node[Nmarks=n,linegray=.6](q0)(5,40){\gray50{$q,0$}}
\node[Nmarks=n,linegray=.6](q1)(45,20){\gray50{$q,1$}}
\node[Nmarks=n,linegray=.6](r0)(45,40){\gray50{$s,0$}}
\node[Nmarks=n](r1)(5,20){$s,1$}
\node[Nmarks=r, Nmr=0](s0)(25,20){$t,0$}
\node[Nmarks=n, Nmr=0](s1)(25,40){$t,1$}

\rpnode[Nmarks=n,linegray=.6](qx0)(0,30)(4,1.2){}
\rpnode[Nmarks=n,linegray=.6](rx0)(35,45)(4,1.2){}

\rpnode[Nmarks=n,linegray=.6](qx1)(50,30)(4,1.2){}
\rpnode[Nmarks=n](rx1)(15,15)(4,1.2){}

\node[Nmarks=n](x0)(35,5){$x,0$}
\node[Nmarks=r,linegray=.6](y0)(35,55){\gray50{$y,0$}}
\node[Nmarks=n,linegray=.6](x1)(15,55){\gray50{$x,1$}}
\node[Nmarks=n](y1)(15,5){$y,1$}

\drawedge[ELside=l,ELpos=50, ELdist=1,linegray=.6](q0,r1){\gray50{$a_1$}}
\drawedge[ELside=l,ELpos=50, ELdist=1,linegray=.6](r0,s1){\gray50{$a_1$}}
\drawedge[ELside=r,ELpos=50, ELdist=1,linegray=.6](s0,q1){\gray50{$b_2$}}

\drawedge[ELside=l,ELpos=50, ELdist=1,linegray=.6](q1,r0){\gray50{$a_1$}}
\drawedge[ELside=l,ELpos=50, ELdist=1](r1,s0){$a_1$}
\drawedge[ELside=r,ELpos=50, ELdist=1,linegray=.6](s1,q0){\gray50{$b_2$}}

\drawedge[ELside=l,ELpos=50, ELdist=1, curvedepth=4.5](s0,s1){$b_1$}
\drawedge[ELside=l,ELpos=50, ELdist=1, curvedepth=4.5](s1,s0){$b_1$}

\drawedge[ELside=r,ELpos=55, ELdist=.2,linegray=.6](q0,qx0){\gray50{{\footnotesize $a_2$}}}
\drawedge[ELside=r,ELpos=55, ELdist=.5,linegray=.6](r0,rx0){\gray50{$a_2$}}
\drawedge[ELside=r,ELpos=55, ELdist=.2,linegray=.6](q1,qx1){\gray50{{\footnotesize $a_2$}}}
\drawedge[ELside=r,ELpos=55, ELdist=.5](r1,rx1){$a_2$}
\drawedge[ELside=r,ELpos=50, ELdist=1, curvedepth=9,linegray=.6](qx0,x1){}
\drawedge[ELside=r,ELpos=50, ELdist=1,linegray=.6](qx0,r1){}
\drawedge[ELside=r,ELpos=50, ELdist=1,linegray=.6](rx0,s1){}
\drawedge[ELside=r,ELpos=50, ELdist=1, curvedepth=2, eyo=-2,linegray=.6](rx0,x1){}
\drawedge[ELside=r,ELpos=50, ELdist=1, curvedepth=9,linegray=.6](qx1,x0){}
\drawedge[ELside=r,ELpos=50, ELdist=1,linegray=.6](qx1,r0){}
\drawedge[ELside=r,ELpos=50, ELdist=1](rx1,s0){}
\drawedge[ELside=r,ELpos=50, ELdist=1, curvedepth=2, eyo=2](rx1,x0){}

\drawedge[ELside=r,ELpos=50, ELdist=1, curvedepth=4.5](x0,y1){}
\drawedge[ELside=r,ELpos=50, ELdist=1, curvedepth=4.5,linegray=.6](y0,x1){}
\drawedge[ELside=r,ELpos=50, ELdist=1, curvedepth=4.5,linegray=.6](x1,y0){}
\drawedge[ELside=r,ELpos=50, ELdist=1, curvedepth=4.5](y1,x0){}



\end{picture}
  \hrule
  \caption{The game $\H = \G \times [2]$ for the game $\G$ of \figurename~\ref{fig:game32stoch}
with the shaded region $X$ computed by Algorithm~\ref{alg:solve-as-ws-stoch} 
(line~\ref{alg:solve-as-ws-stoch-posAttr1}).  \label{fig:game32stoch-counter}}
  \Description{A stochastic game with 10 states, among which 5 states are shaded.}
\end{center}
\end{figure}

We present Algorithm~\ref{alg:solve-as-ws-stoch} to compute the almost-sure 
winning set for weakly synchronizing objectives.
We use the game of \figurename~\ref{fig:game32stoch} for illustration.
The game contains a self-recurrent set $U = \{y\}$ (with period $2$),
thus player~$1$ is winning from $x$ and from $y$.
Player~$1$ is also winning from the other states: the mass of probability
that eventually stays in $t$ is winning, and the remaining mass of probability
can be injected in $U$ by player~$1$ at the correct times to be synchronized
modulo the period~$2$, thanks to the consecutive transitions on $a_2$
from $q$ and from $s$.

Given the game $\G$ as input, the algorithm considers subgames of 
$\H = \G \times [r]$ where $r=1$ initially 
(lines~\ref{alg:solve-as-ws-stoch-initr}-\ref{alg:solve-as-ws-stoch-inith}),
with state space $S$, initially $S = Q \times [1]$ (line~\ref{alg:solve-as-ws-stoch-inits}).
The working variable $K$ (line~\ref{alg:solve-as-ws-stoch-inits0}) 
is used to compute losing states for player~$1$.

The algorithm proceeds iteratively to construct $K$, by removing states from $S$.
In the loop of line~\ref{alg:solve-as-ws-stoch-loop}, 
as long as there is a self-recurrent set $U$
in the subgame $\H\upharpoonright[K]$, we expand the game $H$ to
track the number of rounds modulo the period of $U$ (which must be a multiple of $r$). 
Given a set $S \subseteq Q \times [p]$, and a period $r$ that is a multiple of $p$,
the \emph{$r$-expansion} of $S$ is the set 
$\{\tuple{q,i} \mid 0 \leq i \leq r-1 \land \tuple{q,i \mod p} \in S\}$.
The $\expand$ function computes the $r$-expansion of $S$ 
and $K$ at line~\ref{alg:solve-as-ws-stoch-expand} (where 
$p = \abs{\{i \mid \tuple{q,i} \in S\}}$ can be derived from the set $S$).
In the expanded game $\H$ (line~\ref{alg:solve-as-ws-stoch-newH}),
we compute the almost-sure winning region~$W$ for the (state-based) reachability 
objective $\Diamond(R \times \{k\})$,
which we call a \emph{core winning region},
where $k = \k(U) \!\!\mod r$. From the states in~$W$ player~$1$ 
is almost-sure weakly synchronizing in $T \times \{0\}$ (see the proof 
of Lemma~\ref{lem:SCC-somewhere-gen}). \figurename~\ref{fig:game32stoch-counter}
shows the $2$-expansion of the game of \figurename~\ref{fig:game32stoch}.
The value $k$ is such that from $R \times \{k\}$ player~$1$ can 
inject all the probability mass into $T \times \{0\}$ (in fact, in~$U \times \{0\}$).

The next iteration starts after removing from the state space $K$ 
the positive attractor for player~$1$ to $W$ (lines~\ref{alg:solve-as-ws-stoch-posAttr1}-\ref{alg:solve-as-ws-stoch-posAttr2}), 
thus ensuring $\H\upharpoonright[K]$ is again a subgame (the dark part of \figurename~\ref{fig:game32stoch-counter}).
Whenever there is no set $U$ in $\H\upharpoonright[K]$
satisfying the conditions of Lemma~\ref{lem:SCC-somewhere-gen-equiv},
the whole state space $K$ is losing for player~$1$ and we remove
its positive attractor for player~$2$ 
(lines~\ref{alg:solve-as-ws-stoch-L1}-\ref{alg:solve-as-ws-stoch-L2}).
This part of the algorithm is illustrated in the game of \figurename~\ref{fig:game11stoch}, 
where the self-recurrent set $U=\{y\}$
(with period $1$) induces a core winning region $W = \{x,y\}$, 
and in the subgame obtained by removing the positive attractor for
player~$1$ to $W$, the set $U = \{q\}$ is self-recurrent. 
The remaining subgame with state space $\{s\}$ has no self-recurrent
set, thus we remove $s$ and its positive attractor $\{q,s\}$ for player~$2$,
showing that $q$ and $s$ are losing for player~$1$.

The loop (line~\ref{alg:solve-as-ws-stoch-loop})
terminates when $K = \emptyset$, which can happen
if either the state space $S$ can be partitioned by positive attractors 
to core winning regions, and then the whole state space $S$ is winning
for player~$1$, or if all states in $S$ are losing ($L = S$), and then
the winning region for player~$1$ is empty.
The algorithm then returns the slices of the winning region, which correspond
to the support of the winning distributions.
\figurename~\ref{fig:iterations} illustrates the first iterations of the algorithm.

\begin{figure}[!t]
\begin{center}
\hrule
    \newcommand{\token}{{\LARGE $\cdot$}}
\newcommand{\minitoken}{{\scriptsize $\cdot$}}
\newcommand{\btoken}{\green{\token}}
\newcommand{\rtoken}{\red{\token}}
\newcommand{\bminitoken}{\green{\minitoken}}
\newcommand{\rminitoken}{\red{\minitoken}}

\begin{picture}(82,67)(0,0)

\gasset{Nh=2,Nw=1.6, Nmr=0, Nframe=y}


\drawpolygon[fillgray=.9, Nframe=n](50,53)(50,65)(26,65)(26,53)(29.2,52.5)(34.8,53)(38,53.7)(41.2,54)(46,52.5)
\drawpolygon[fillgray=.9, Nframe=n](82,53)(82,65)(58,65)(58,53)(61.2,52.5)(66.8,53)(70,53.7)(73.2,54)(78,52.5)

\drawline[AHnb=0,linegray=.6](12,35)(12,65)
\drawline[AHnb=0,linegray=.6](16,35)(16,65)

\drawline[AHnb=0,linegray=.6](38,35)(38,65)

\drawline[AHnb=0,linegray=.6](54,9)(54,25)

\drawcurve[AHnb=0,linegray=.6](8,43)(12.4,43.5)(15.6,42)(20,43)
\drawcurve[AHnb=0,linegray=.6](8,40)(11.6,39)(14,38.5)(20,38)

\drawcurve[AHnb=0,linegray=.6](26,43)(30.4,43.5)(33.6,42)(38,43) 
\drawcurve[AHnb=0,linegray=.6](26,40)(29.6,39)(32,38.5)(38,38) 
\drawcurve[AHnb=0,linegray=.6](38,43)(42.4,43.5)(45.6,42)(50,43) 
\drawcurve[AHnb=0,linegray=.6](38,40)(41.6,39)(44,38.5)(50,38) 

\drawcurve[AHnb=0,linegray=.6](26,53)(29.2,52.5)(34.8,53)(41.2,54)(46,52.5)(50,53)
\drawcurve[AHnb=0,linegray=.6](26,49)(30,48)(38,47)(50,46)

\drawcurve[AHnb=0,linegray=.6](58,53)(61.2,52.5)(66.8,53)(73.2,54)(78,52.5)(82,53) 
\drawcurve[AHnb=0](58,51)(62,49)(66,48)(70,48.4)(74,48)(78,49.5)(82,51)

\drawcurve[AHnb=0](0,25)(4,23)(8,22)(12,22.4)(16,22)(20,23.5)(24,25) 
\drawline[AHnb=0](0,25)(0,9)(24,9)(24,25)

\drawcurve[AHnb=0](30,25)(34,23)(38,22)(42,22.4)(46,22)(50,23.5)(54,25) 
\drawcurve[AHnb=0](54,25)(58,23)(62,22)(66,22.4)(70,22)(74,23.5)(78,25) 
\drawline[AHnb=0](30,25)(30,9)(78,9)(78,25) 

\drawcurve[AHnb=0,linegray=.6](30,16)(36.4,15.5)(47.6,17)(60.4,16)(70,15.5)(78,16)
\drawcurve[AHnb=0,linegray=.6](30,13)(38,12)(54,11.5)(78,11)

\drawline[AHnb=0](30,25)(30,9)(78,9)(78,25) 

\node[Nmarks=n, Nh=30, Nw=4](game)(2,50){}   
\node[Nmarks=n, Nframe=n](label)(6,50){{\small $\Rightarrow$}}
\node[Nmarks=n, Nh=30, Nw=12](game)(14,50){} 
\node[Nmarks=n, Nframe=n](label)(22.8,50){{\small $\Rightarrow$}}
\node[Nmarks=n, Nh=30, Nw=24](game)(38,50){} 
\node[Nmarks=n, Nframe=n](label)(54,50){{\small $\Rightarrow$}}
\node[Nmarks=n, Nh=30, Nw=24](game)(70,50){} 

\node[Nmarks=n, Nframe=n](label)(26.8,17){{\small $\Rightarrow$}}
\node[Nmarks=n, Nframe=n](label)(81.2,17){{\small $\dots$}}

\node[Nmarks=n, Nframe=n](q1)(2,33){}
\node[Nmarks=n, Nframe=n](q2)(12,33){}
\node[Nmarks=n, Nframe=n](q3)(16,33){}
\node[Nmarks=n, Nframe=n](q4)(38,33){}

\drawedge[ELside=r, curvedepth=-2](q1,q2){{\scriptsize $\expand(3,\cdot)$}}
\drawedge[ELside=r, curvedepth=-2.5](q3,q4){{\scriptsize $\expand(6,\cdot)$}}

\node[Nmarks=n, Nframe=n](q5)(12,7){}
\node[Nmarks=n, Nframe=n](q6)(54,7){}

\drawedge[ELside=r, curvedepth=-3](q5,q6){{\scriptsize $\expand(12,\cdot)$}}

\node[Nmarks=n, Nframe=n](label)(10.16,37){{\scriptsize $W$}}
\node[Nmarks=n, Nframe=n](label)(18,40){{\scriptsize $X$}}

\node[Nmarks=n, Nframe=n](label)(28.4,45.5){{\scriptsize $W$}}
\node[Nmarks=n, Nframe=n](label)(43.6,49.5){{\scriptsize $X$}}

\node[Nmarks=n, Nframe=n](label)(66,60){{\scriptsize $K_1$}}
\node[Nmarks=n, Nframe=n](label)(73.2,51){{\scriptsize $L_1$}}

\node[Nmarks=n, Nframe=n](label)(32,11){{\scriptsize $W$}}
\node[Nmarks=n, Nframe=n](label)(46,14){{\scriptsize $X$}}





\end{picture}
\hrule
\end{center} 
 \caption{View of the iterations of Algorithm~\ref{alg:solve-as-ws-stoch}. \label{fig:iterations}}
 \Description{Evolution of the blocks of state space constructed by Algorithm~\ref{alg:solve-as-ws-stoch}.}
\end{figure}
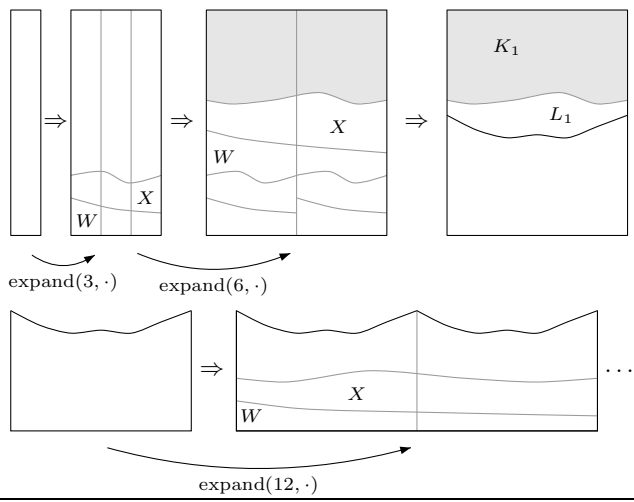

\smallskip\noindent{\em Termination.}
The termination of Algorithm~\ref{alg:solve-as-ws-stoch} 
is established by showing that $K \subseteq S$ is an invariant
of the repeat-loop and that at every iteration, either $(i)$ the size of $S$ is 
strictly decreasing, or $(ii)$ the size of $S$ is unchanged, and the size
of $K$ is strictly decreasing, where the \emph{size} of $S$ (and $K$)
is defined by 
$\max_i \abs{\{q \in Q \mid \tuple{q,i} \in S\}}$ (and similarly for $K$).
Note that the $\expand$ function may increase the cardinality of $S$ and $K$,
but not their size (line~\ref{alg:solve-as-ws-stoch-expand}).
To show $(i)$ and $(ii)$, if the condition of the test (line~\ref{alg:solve-as-ws-stoch-main-test})
holds, then $W$ is nonempty and contains at least one state for each value 
of the tracking counter, thus the size of $K$ decreases (and $S$ remains unchanged); 
otherwise, the size of $S$ decreases (line~\ref{alg:solve-as-ws-stoch-SL}) 
since $K$ is nonempty at the beginning of each iteration (the loop
terminates if $K = \emptyset$). 
It follows that the number of executions of the main 
loop (line~\ref{alg:solve-as-ws-stoch-loop}) is at most $n^2$.

\begin{figure}[!b]
\hrule
\begin{center}
    \begin{picture}(40,45)(0,-5)

\node[Nmarks=r, Nmr=0](q)(5,25){$q$}
\node[Nmarks=n](r)(35,25){$s$}
\node[Nmarks=n](x)(5,5){$x$}
\node[Nmarks=r](y)(35,5){$y$}

\rpnode[Nmarks=n](qx)(20,25)(4,1.2){}
\drawarc[linegray=0](20,25,2.5,233.13,360)

\rpnode[Nmarks=n](yy)(20,0)(4,1.2){}
\drawarc[linegray=0](20,0,2.5,177,250)

\drawedge[ELside=l,ELpos=50, ELdist=1](q,qx){$b_2$}
\drawedge[ELside=l,ELpos=50, ELdist=1](qx,r){}
\drawedge[ELside=l,ELpos=50, ELdist=1](qx,x){}

\drawedge[ELside=l,ELpos=50, ELdist=1, curvedepth=5](x,y){$a_2$}
\drawedge[ELside=r,ELpos=50, ELdist=1, curvedepth=1.4](y,yy){}
\drawedge[ELside=r,ELpos=50, ELdist=1, curvedepth=1.4](yy,x){}
\drawbpedge[ELside=r,ELpos=50, ELdist=1](yy,240,10,y,250,6){}


\drawloop[ELside=l,loopCW=y, loopangle=105, loopwidth=5, loopheight=5](x){$a_1$}
\drawloop[ELside=l,loopCW=y, loopangle=90, loopwidth=5, loopheight=5](q){$b_1$}
\drawloop[ELside=l,loopCW=y, loopangle=90, loopwidth=5, loopheight=5](r){}

\end{picture}
\end{center} 
\hrule
 \caption{A stochastic game. \label{fig:game11stoch}}
 \Description{A stochastic game with 4 states.}
\end{figure}
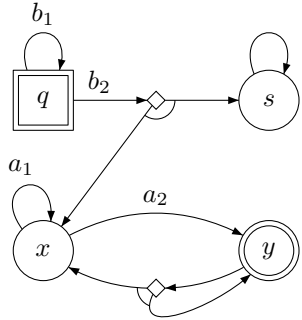

\smallskip\noindent{\em Correctness.}
Let $\G$ be a game with state space $Q$, let $T \subseteq Q$.

In the main loop of Algorithm~\ref{alg:solve-as-ws-stoch} 
(line~\ref{alg:solve-as-ws-stoch-loop}), let $L_1, L_2,\dots, L_{i_{\ell}}$ 
be the sequence of values of the variable $L$ successively computed in the 
else-clause (line~\ref{alg:solve-as-ws-stoch-L1}) and
let $X_1, X_2, \dots, X_{i_{x}}$ be the values of the variable $X$ successively 
computed in the then-clause \emph{after} the last execution of the else-clause 
(line~\ref{alg:solve-as-ws-stoch-posAttr1}).
Each of these two sequences may be empty, but not both.
The sets $L_i$ and $X_i$ are possibly computed with different values 
of $r$. In the sequel, we consider their $p$-expansion for the value $p$ of $r$
when the algorithm terminates, which leads to the property that 
in the game $\H = \G \times [p]$, the sets $\expand(p,L_i)$ ($i=1,\dots,i_{\ell}$) 
and $\expand(p,X_j)$ ($j=1,\dots,i_{x}$) form a partition of the state 
space $Q \times [r]$ (\figurename~\ref{fig:winning-losing-decomposition}).
We often omit the $\expand$ function and write $L_i$ instead of $\expand(p,L_i)$
(and analogously for $X_j$).

We refine this partition with a decomposition of the losing states $L_i$,
and of the winning states $X_i$ (\figurename~\ref{fig:winning-losing-refined-decomposition}).
There exist sets $K_i$, $F_i$ for $i=1,\dots,i_{i_{\ell}}$, and $F_{i_{\ell}+1}$
such that $F_1 = Q \times [r]$ and for all $1 \leq i \leq i_{\ell}$:
\begin{itemize}
\item $K_i$ is a trap for player~$1$ in $\H \upharpoonright [F_i]$, 

\item $L_i = \PosAttr_2(K_i, \H \upharpoonright [F_i])$,

\item $F_{i+1} = F_i \setminus L_i$.
\end{itemize}

The set $K_i$ is the value of variable $K$ (up to $p$-expansion) when $L_i$ 
is computed (at line~\ref{alg:solve-as-ws-stoch-L1}), and the value $F_i$
is the value of variable $S$ computed in the next line (line~\ref{alg:solve-as-ws-stoch-SL}). 
On the other hand, 
there exist nonempty sets $U_i,R_i,S_i,W_i$ for $i=1,\dots, i_{x}$, and $U_{i_{x}+1}$ such that 
$S_1 = F_{i_{\ell}+1}$ and for all $1 \leq i \leq i_{x}$:
\begin{itemize}
\item $U_i$ is a self-recurrent set in $\H \upharpoonright[S_i]$,
with periodic scheme $\tuple{R_i,r_i,k_i}$ where $r_i = p$
(since $\H$ is the $p$-expansion of $\G$, thus the least possible period 
in a subgame of $\H$ is $p$).

\item $X_i = \PosAttr_1(W_i, \H\upharpoonright [S_i])$ where $W_i$ is the 
almost-sure winning region for reachability to $R_i \times \{k_i\}$ in 
$\H \upharpoonright[S_i]$,

\item $S_{i+1} = S_i \setminus X_i$.

\end{itemize}

\begin{figure}
\centering
\begin{minipage}{.45\textwidth}
  \centering
  \begin{picture}(40,60)(0,0)

\gasset{Nh=2,Nw=2, Nmr=0, Nframe=y}

\node[Nmarks=n, Nh=60, Nw=40](game)(20,30){}   

\drawcurve[AHnb=0,linegray=.6](0,52)(15,53)(35,50)(40,52)
\node[Nmarks=n, Nframe=n](label)(32,53){{\small $L_1$}}

\drawcurve[AHnb=0](0,40)(7,39)(20,41)(35,38)(40,40)
\node[Nmarks=n, Nframe=n](label)(32,42){{\small $L_2$}}

\drawcurve[AHnb=0,linegray=.6](0,25)(10,25.5)(20,25.3)(30,24.7)(40,25)
\node[Nmarks=n, Nframe=n](label)(20,33){{\small $X_3$}}

\drawcurve[AHnb=0,linegray=.6](0,15)(10,15.5)(20,14.3)(30,13.7)(36,14)
\drawline[AHnb=0,linegray=.6](36,14)(40,14)
\node[Nmarks=n, Nframe=n](label)(20,20){{\small $X_2$}}

\node[Nmarks=n, Nframe=n](label)(20,6){{\small $X_1$}}

\node[Nmarks=n, Nh=60, Nw=40](game)(20,30){}  





\end{picture}
  \captionof{figure}{The sets $L_i$ of losing states, and $X_j$ of winning states in $\G \times [r]$
computed by Algorithm~\ref{alg:solve-as-ws-stoch} (see \emph{Correctness}). \label{fig:winning-losing-decomposition}}
\end{minipage}%
\hfill
\begin{minipage}{.45\textwidth}
  \centering
  \begin{picture}(40,60)(0,0)

\gasset{Nh=2,Nw=2, Nmr=0, Nframe=y}

\node[Nmarks=n, Nh=60, Nw=40](game)(20,30){}   

\drawline[AHnb=0,linegray=.4](4,0)(4,-1)
\drawline[AHnb=0,linegray=.4](8,0)(8,-1)
\drawline[AHnb=0,linegray=.4](12,0)(12,-1)
\drawline[AHnb=0,linegray=.4](16,0)(16,-1)
\drawline[AHnb=0,linegray=.4](20,0)(20,-1)
\drawline[AHnb=0,linegray=.4](24,0)(24,-1)
\drawline[AHnb=0,linegray=.4](28,0)(28,-1)
\drawline[AHnb=0,linegray=.4](32,0)(32,-1)
\drawline[AHnb=0,linegray=.4](36,0)(36,-1)

\node[Nmarks=n, Nframe=n](label)(38,-2){{\scriptsize $0$}}
\node[Nmarks=n, Nframe=n](label)(34,-2){{\scriptsize $1$}}
\node[Nmarks=n, Nframe=n](label)(18.2,-2.4){{\scriptsize $\dots$}}
\node[Nmarks=n, Nframe=n](label)(0,-2){{\scriptsize $r\!-\!1$}}

\drawline[AHnb=0,linegray=.6](0,56)(40,56)
\drawcurve[AHnb=0,linegray=.6](0,52)(15,53)(35,50)(40,52)
\node[Nmarks=n, Nframe=n](label)(4,58){{\small $K_1$}}
\node[Nmarks=n, Nframe=n](label)(32,53){{\small $L_1$}}

\drawline[AHnb=0,linegray=.6](0,46)(40,46)
\drawcurve[AHnb=0](0,40)(7,39)(20,41)(35,38)(40,40)
\node[Nmarks=n, Nframe=n](label)(4,49){{\small $K_2$}}
\node[Nmarks=n, Nframe=n](label)(32,42){{\small $L_2$}}

\drawcurve[AHnb=0,linegray=.6](0,39)(12,36)(40,33)
\node[Nmarks=n, Nframe=y, Nh=6, Nw=4, linegray=.6](U3)(38,29){{\small $U_3$}}
\node[Nmarks=n, Nframe=y, Nh=4, Nw=4, linegray=.6](R3)(26,29){{\small $R_3$}}
\drawcurve[AHnb=0,linegray=.6](0,25)(10,25.5)(20,25.3)(30,24.7)(40,25)
\node[Nmarks=n, Nframe=n](label)(20,38){{\small $X_3$}}
\node[Nmarks=n, Nframe=n](label)(6,32){{\small $W_3$}}

\drawcurve[AHnb=0,linegray=.6](0,24)(12,22)(40,19)
\node[Nmarks=n, Nframe=y, Nh=4, Nw=4, linegray=.6](U2)(38,16){{\small $U_2$}}
\node[Nmarks=n, Nframe=y, Nh=4, Nw=4, linegray=.6](R2)(30,17){{\small $R_2$}}
\node[Nmarks=n, Nframe=y, Nh=1, Nw=1, linegray=.6](q)(13,18){}
\rpnode[Nmarks=n, Nframe=y, linegray=.6](p)(18,16)(4,.5){}
\drawedge[ELpos=50, ELside=l, linegray=.6, curvedepth=0,AHdist=0.7,AHangle=20,AHLength=.75,AHlength=0.7](q,p){}
\node[Nmarks=n, Nframe=n, Nh=1, Nw=1, linegray=.6](r1)(18,12){}
\node[Nmarks=n, Nframe=n, Nh=1, Nw=1, linegray=.6](r2)(25,14){}
\node[Nmarks=n, Nframe=n, Nh=1, Nw=1, linegray=.6](r3)(29,27){}
\drawedge[ELpos=50, ELside=l, linegray=.6, curvedepth=0,AHdist=0.7,AHangle=20,AHLength=.75,AHlength=0.7](p,r1){}
\drawedge[ELpos=50, ELside=l, linegray=.6, sxo=.4, curvedepth=1,AHdist=0.7,AHangle=20,AHLength=.75,AHlength=0.7](p,r2){}
\drawedge[ELpos=50, ELside=l, linegray=.6, sxo=-.4, curvedepth=-2,AHdist=0.7,AHangle=20,AHLength=.75,AHlength=0.7](p,r3){}
\drawline[AHnb=0, linewidth=0.07](14.7,16.6)(15.7,17.6)
\drawline[AHnb=0, linewidth=0.07](14.7,17.6)(15.7,16.6)

\rpnode[Nmarks=n, Nframe=y, linegray=.6](p)(32,10)(4,.5){}
\node[Nmarks=n, Nframe=n, Nh=1, Nw=1, linegray=.6](r1)(32,4){}
\node[Nmarks=n, Nframe=n, Nh=1, Nw=1, linegray=.6](r2)(36,9){}
\node[Nmarks=n, Nframe=n, Nh=1, Nw=1, linegray=.6](r3)(34.5,18){}
\drawedge[ELpos=50, ELside=l, linegray=.6, curvedepth=0,AHdist=0.7,AHangle=20,AHLength=.75,AHlength=0.7](p,r1){}
\drawedge[ELpos=50, ELside=l, linegray=.6, curvedepth=0,AHdist=0.7,AHangle=20,AHLength=.75,AHlength=0.7](p,r2){}
\drawedge[ELpos=50, ELside=l, linegray=.6, curvedepth=0,AHdist=0.7,AHangle=20,AHLength=.75,AHlength=0.7](p,r3){}

\drawcurve[AHnb=0,linegray=.6](0,15)(10,15.5)(20,14.3)(30,13.7)(36,14)
\node[Nmarks=n, Nframe=n](label)(20,23){{\small $X_2$}}
\node[Nmarks=n, Nframe=n](label)(6,19){{\small $W_2$}}

\drawcurve[AHnb=0,linegray=.6](0,14)(12,10)(20,8.5)(40,8)
\node[Nmarks=n, Nframe=y, Nh=5, Nw=4, linegray=.6](U1)(38,2.5){{\small $U_1$}}
\node[Nmarks=n, Nframe=y, Nh=6, Nw=4, linegray=.6](R1)(18,4){{\small $R_1$}}
\node[Nmarks=n, Nframe=n](label)(20,11){{\small $X_1$}}
\node[Nmarks=n, Nframe=n](label)(6,5){{\small $W_1$}}

\node[Nmarks=n, Nh=60, Nw=40](game)(20,30){}  





\end{picture}
  \captionof{figure}{Refined view of the sets $L_i$ of losing states, and $X_j$ of winning 
states in $\G \times [r]$ (from \figurename~\ref{fig:winning-losing-decomposition}). 
\label{fig:winning-losing-refined-decomposition}}
\end{minipage}
\Description{View of some blocks of state space forming a partition.}
\end{figure}

Note that $S_{i_{x}+1} = \emptyset$ since the sets $\{L_i\}_{1 \leq i \leq i_{\ell}}$ and 
$\{X_j\}_{1 \leq j \leq i_{x}}$ form a partition of the state 
space $Q \times [p]$.
The values $U_i,R_i,W_i$ correspond to the variables $U,R,W$ (up 
to $p$-expansion) at the iteration where $X_i$ is computed 
(lines~\ref{alg:solve-as-ws-stoch-main-test}-\ref{alg:solve-as-ws-stoch-posAttr2}).
The set $S_i$ is the value of variable $K$ at the beginning of that iteration.
Note that in the subgame $\H\upharpoonright [S_i]$ player~$2$ cannot play the
actions for which positive probability 
leaves $S_i$ (\figurename~\ref{fig:winning-losing-refined-decomposition}).

We now show that in the game $\H = \G \times [p]$, every state in $\bigcup_i L_i$ 
is losing, and every state in $\bigcup_j X_j$ is winning, for almost-sure weakly 
synchronizing in $T \times \{0\}$. 

For states in $\bigcup_i L_i$, the claim
follows from Lemma~\ref{lem:SCC-somewhere-gen-equiv} and the fact that
$K_i$ is a trap for player~$1$ in $\H \upharpoonright [F_i]$. 
In $L_i \setminus K_i$, player~$2$ uses the attractor strategy, 
and in $K_i$, for every strategy of player~$1$ there is a strategy of player~$2$
to avoid almost-sure weakly synchronizing in $T \times \{0\}$.
By the attractor strategy, the probability mass that reaches $K_i$ is bounded
(at least $\eta^n$) regardless of the strategy of player~$1$, and    
thus there is also a lower bound on the probability mass outside $T \times \{0\}$
from some point on. 

For states in $X:= S_1 = \bigcup_j X_j$, we present 
an almost-sure winning strategy $\straa_{\as}$ for player~$1$
that successively plays according to the strategies $\straa_{1}$, $\straa_{2}, \dots$
where $\straa_{N}$ is defined as follows, for all $N \geq 1$.
The strategy $\straa_{N}$ plays in two phases: 
in the first phase, whenever a state in $R = \bigcup_j R_j \times \{k_j \mod p\}$
is reached, it plays to reach again $R$ after $p$ rounds (which is possible
since $p = r_j$); if no state in $R$ was reached, 
it plays according to the (memoryless) attractor strategy
in $X_j \setminus W_j$, and according to the (memoryless) almost-sure winning 
strategy to reach $R_j \times \{k_j \mod p\}$ in $W_j$.  
This first phase is played for $N$ rounds.
Let $c$ be the value of the tracking counter at the end of the first 
phase. The second phase is played for $N'$ rounds such that $N' > 2^n$ 
and $c+N' = 0 \mod p$, thus the tracking counter will be $0$ at the end
of the second phase. 
In the second phase, the strategy
$\straa_{N}$ plays like in the first phase, except
if a state in $R_j \times \{k_j \mod p\}$ is reached at step $N+N' - k_j$ (note 
that $N' > 2^n \geq k_j$),
where the strategy then plays according to the sure-winning strategy 
for eventually synchronizing in $U_j \times \{0\}$.

For $j=1,\dots,i_{x}$, consider the event 
\begin{linenomath*}
$$A_j = \{q_0 \, a_0b_0 \,  q_1 \dots \in (QAA)^{\omega} \mid \exists I \geq 0 \cdot \forall i \geq I: q_i \in W_j \}$$
\end{linenomath*}
where from some point on the play remains in the set $W_j$. 
We show that from every state in $X$, the strategy $\straa_{\as}$ is almost-sure 
winning for weakly synchronizing in $T \times \{0\}$. The argument has two parts:

\begin{itemize}
\item \emph{(Correctness under event $A_j$).}
For an arbitrary $\epsilon > 0$
let $N_{\epsilon}$ given by Lemma~\ref{lem:almost-sure-reach-game-pl1}.
Under the event $A_j$, 
since the strategy $\straa_{N_{\epsilon}}$
plays in $W_j$ according to an almost-sure winning strategy for the reachability objective
$\Diamond (R_j \times \{k_j \mod p\})$ for $N_{\epsilon}$ steps,
it follows that for all states $\tuple{q,t} \in X$
and strategies $\strab$ of player~$2$ in $\H$,
\begin{linenomath*}
$$\Prb_{\tuple{q,t}}^{\straa_{N_{\epsilon}},\strab}(\Diamond^{\leq N_{\epsilon}} (R_j \times \{k_j\!\!\mod p\}) \mid A_j) \geq 1-\epsilon,$$
\end{linenomath*}
and since $U_j \times \{0\} \subseteq W_j$ we can repeat the argument
after the second phase of the strategy $\straa_{N_{\epsilon}}$,
and get
$\Prb_{\tuple{q,t}}^{\straa_{\as},\strab}(\Diamond (R_j \times \{k_j \mod p\}) \mid A_j) = 1$
and $\Prb_{\tuple{q,t}}^{\straa_{\as},\strab}(\Box \Diamond R_j \times \{k_j \mod p\} \mid A_j) = 1$,
which implies that $\straa_{\as}$ is almost-sure 
winning for weakly synchronizing in $T \times \{0\}$ under event $A_j$.

\item \emph{(The event $\bigcup_{j} A_j$ has probability~$1$).}
Now we show that the strategy $\straa_{\as}$ is almost-sure winning for 
the event $\bigcup_{1 \leq j \leq i_{x}} A_j$, that is with probability~$1$ the play
will remain forever in some $W_j$. 
Intuitively, this is because if a play visits infinitely often the positive 
attractor of $W_1$ (namely, $X_1$), then the set $W_1$ is reached with probability $1$
and never left since it is a trap for player~$2$ in the subgame $\H[S_1]$ (where $S_1$ is also
a trap for player~$2$); on the other hand, if a play eventually
remains outside $X_1$, then from some point on the play remains always in the subgame 
$\H[S_2]$ (recall that $S_2 = S_1 \setminus X_1$) and then visiting $X_2$
infinitely often implies reaching and staying forever in $W_2$ with probability $1$.
Repeating this argument $i_{x}$ times shows that in all cases, the play has to remain 
forever in some $R_j$ with probability~$1$. Formally, fix an arbitrary state $\tuple{q,t} \in S_1$,
and a strategy $\strab$ of player~$2$ in $\H$, and we show that 
$\Prb_{\tuple{q,t}}^{\straa_{as},\strab}(\bigcup_{1 \leq j \leq i_{x}} A_j) = 1$. 
Let $B_1 = \{ q_0 \, a_0b_0 \,  q_1 \cdots \in (QAA)^\omega \mid \exists^\infty i \geq 0: q_i \in X_1 \}$
and for $j=2,\dots, i_{x}$, let
\begin{linenomath*}
$$B_j = \{ q_0 \, a_0b_0 \, q_1 \cdots \in (QAA)^\omega \mid  
\exists^\infty i \geq 0: q_i \in X_i \} \setminus \bigcup_{l < i} B_l$$
\end{linenomath*}
\noindent be the event that $X_i$ is visited infinitely often, and the states
  in $X_1 \cup \dots \cup X_{i-1}$ are visited only finitely often. 

  Under event $B_1$, the positive attractor $X_1$ of $W_1$ is visited infinitely often,
  and therefore the set $W_1$ is reached with probability~$1$ (by an argument similar
  to the proof of Lemma~\ref{lem:almost-sure-reach-game-pl1}, 
  under the positive-attractor strategy, there is a bounded
  probability $\eta_1 > 0$ to reach $W_1$ within a fixed number of steps, which entails that
  the probability to never reach $W_1$ is $\lim_{k \to \infty} (1-\eta_1)^k = 0$).
  Moreover, once the play is in $W_1$, it remains there forever (by definition 
  of the strategy $\sigma_{\as}$,
  and because $W_1$ is a trap for player~$2$ in $\H$).
  Thus, we have $\Prb_{\tuple{q,t}}^{\straa_{\as},\strab}(A_1 \mid B_1) = 1$ 
  (if $\Prb_{\tuple{q,t}}^{\straa_{\as},\strab}(B_1) \neq 0$).
  By a similar argument for $i=2,\dots,k$, under event $B_i$ the play eventually
  remains in the subgame $\H[S_i]$ since $S_i = S_1 \setminus \bigcup_{l < i} X_l$,
  and it follows that $\Prb_{\tuple{q,t}}^{\straa_{\as},\strab}(A_i \mid B_i) = 1$ 
  (if $\Prb_{\tuple{q,t}}^{\straa_{\as},\strab}(B_i) \neq 0$).
  Finally, since $\{X_i\}_{1 \leq i \leq k}$ is a partition of $S_1$ we have
  $\Prb_{\tuple{q,t}}^{\straa_{\as},\strab}(\bigcup_{j} B_j) = 1$, 
  and thus $\Prb_{\tuple{q,t}}^{\straa_{\as},\strab}(\bigcup_{1 \leq j \leq i_{x}} A_j) = 
  \Prb_{\tuple{q,t}}^{\straa_{\as},\strab}(\bigcup_{1 \leq j \leq i_{x}} A_j \mid 
  \bigcup_{1 \leq j \leq i_{x}} B_j) = 1$.
\end{itemize}

In the game $\G$, if $1_{q_1}$ and $1_{q_2}$ are almost-sure winning
for weakly synchronizing in $T$, it does not necessarily imply that the 
distribution with support $\{q_1,q_2\}$ is almost-sure winning
for weakly synchronizing in $T$. However, we show that in the game $\H$,
if the counter value is the same in two almost-sure winning states
$\tuple{q_1,t_1}$ and $\tuple{q_2,t_2}$ (i.e., $t_1 = t_2 = t$),
then the distribution with support $\{\tuple{q_1,t},\tuple{q_2,t}\}$ 
is also almost-sure winning. This is because under the strategy $\straa_{as}$,
the end of the second phase of the strategies $\straa_{N}$
occurs at the same time from all states with the same counter value,  
Therefore, from all states in $X$ with a given counter value, 
and for all strategies $\strab$ of player~$2$, the probability mass is 
at least $1-\epsilon$ in $T \times \{0\}$ at the last round played
by each $\straa_{N}$ for all $N$ sufficiently large.

Given a distribution $d_0 \in \dist(Q)$ over the states of $\G$
if $\Supp(d_0)$ is a slice in $X$,
that is there exists a counter value $t$ such that 
$\{\tuple{q,t} \mid q \in \Supp(d_0)\} \subseteq X$, then 
$d_0$ is an almost-sure winning distribution in $\G$.
On the other hand, if for every $0 \leq t < p$, 
there is a state $q \in \Supp(d_0)$ such that $\tuple{q,t} \in L$,
then from $d_0$, player~$2$ can spoil all strategies of player~$1$,
using a superposition of spoiling strategies for each $\tuple{q,t} \in L$.

In conclusion, the sets $s \subseteq Q$ such that $s \times \{t\} \subseteq X = S_1$
for some $t$ are the supports of the almost-sure winning distributions
(line~\ref{alg:solve-as-ws-stoch-return}),
which establishes the correctness of Algorithm~\ref{alg:solve-as-ws-stoch}.

\begin{lemma}\label{lem:ssgvalalg}
  Given a stochastic game and a set $T$ of target states,
  Algorithm~\ref{alg:solve-as-ws-stoch} computes the supports of the distributions
  from which player~$1$ is almost-sure winning for weakly synchronizing
  in $T$. This algorithm can be implemented in PSPACE.
\end{lemma}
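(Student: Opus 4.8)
**The plan is to establish Lemma~\ref{lem:ssgvalalg} in two parts: correctness (already argued in the running text) and the PSPACE complexity bound (which is what remains to be proven).** The correctness claim follows directly from the detailed \emph{Correctness} paragraph preceding the lemma, which I would simply invoke by reference, pointing to the decomposition of the state space $Q \times [p]$ into the winning sets $X_j = \PosAttr_1(W_j,\H\upharpoonright[S_j])$ and the losing sets $L_i = \PosAttr_2(K_i,\H\upharpoonright[F_i])$, together with Lemma~\ref{lem:SCC-somewhere-gen-equiv} to certify that each $L_i$ is genuinely losing and each $W_j$ (hence $X_j$) is genuinely winning. The interesting work is entirely in the complexity analysis.

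For the PSPACE bound, the first thing I would do is bound the size of the games that arise. As in the proof of Lemma~\ref{lem:alg-det-PSPACE}, the key invariant is that in every iteration the game $\H = \G \times [r]$ can be viewed as a subgame of $\G_0 \times [p]$ where $p$ accumulates as a product of periods, each of which is at most $2^n$ (periods of SCCs/periodic schemes in the subset construction). Since the \emph{size} of $S$ (measured as $\max_i |\{q \mid \tuple{q,i}\in S\}|$) strictly decreases or $S$ is unchanged while $K$ strictly decreases, the number of loop iterations is at most $n^2$ (as established in the \emph{Termination} paragraph), and the depth of expansions is at most $n$, giving $p \leq (2^n)^n = 2^{n^2}$. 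Thus a counter value $i \in [p]$ fits in $O(n^2)$ bits, and a single state $\tuple{q,i}$ of $\H$ has polynomial-size description even though $\H$ itself is exponentially large.

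The main obstacle is that $\H$ and the working sets $K,S,W,X,L$ cannot be stored explicitly — they are exponentially large. The plan is therefore to show that each primitive operation used in the loop can be carried out by a PSPACE (equivalently, by Savitch's theorem, NPSPACE) subroutine that manipulates only polynomial-size state descriptions, \emph{never materializing the full game}. Concretely, I would argue: (i) testing whether $\tuple{q,i} \in \H\upharpoonright[K]$ and enumerating its transitions reduces to checking membership in the $L_i$'s and $X_j$'s computed so far, each of which is a positive attractor computable by a reachability search in the subset-construction graph $\P(\G)$, and reachability in an exponential-size implicitly-presented graph is in PSPACE by guessing the path (a sequence of selectors) step by step; (ii) detecting a self-recurrent set $U \subseteq T$ and its periodic scheme $\tuple{R,r,k}$ amounts to computing the $\CPre$-iteration $U_i = \CPre^i(U)$ until it cycles, which again is a polynomial-space computation over exponentially many subsets since each $U_i$ is a subset of $Q$ (polynomial to store) and the period $r \leq 2^n$ is a polynomial-bit counter; and (iii) the almost-sure reachability region $W$ to $R\times\{k \bmod r\}$ in $\H\upharpoonright[K]$ is computed by the $\mu$-calculus fixpoint $\varphi_{\sf AS}$ of Lemma~\ref{lem:almost-sure-reach-game}, whose membership test is again a PSPACE-bounded search guessing witnessing selectors along paths of length at most $p$.

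**Finally I would assemble these subroutines into a single PSPACE algorithm.** The outer loop runs at most $n^2$ times, and across iterations I need only store: the current period $p$ (as $O(n^2)$ bits), a polynomial amount of bookkeeping identifying which slice-sets $U, R$ have been selected at each level of recursion/iteration, and the subroutine stacks, each of polynomial depth. Since every quantity carried between iterations is polynomial in $n$, and every membership query to the implicitly-represented sets $K,S,W,X,L$ is answered by a nondeterministic polynomial-space search (closed under complement and composition), the whole procedure runs in polynomial space. Invoking $\mathrm{PSPACE} = \mathrm{NPSPACE}$ to discharge the nondeterministic guessing completes the bound, and together with the correctness argument this proves the lemma. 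I expect the delicate point to be item (iii): one must verify that the almost-sure reachability region in the \emph{subgame} $\H\upharpoonright[K]$ (not in all of $\H$) can be decided in PSPACE despite $K$ itself being defined only implicitly through previously-removed attractors — this requires interleaving the attractor-membership subroutines with the fixpoint search, which is where the care lies.
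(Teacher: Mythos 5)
Your overall skeleton matches the paper's proof: correctness is discharged by reference to the preceding \emph{Correctness} analysis, and the PSPACE bound rests on the $n^2$ bound on loop iterations, polynomial-bit counter values, and PSPACE subroutines answering membership and transition queries on the implicitly represented game $\H = \G\times[r]$, in the style of Lemma~\ref{lem:alg-det-PSPACE}. The reduction of transition queries for $\H\upharpoonright[K]$ to membership queries for the previously computed sets $L_i$ and $X_j$ is also the right move. However, there is a genuine gap in the mechanism you give for those membership queries. You claim that the positive attractors and the almost-sure reachability region $W$ can be decided ``by guessing the path (a sequence of selectors) step by step'' in the subset-construction graph $\P(\G)$. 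That device is specific to deterministic games: $\P(\G)$ and selectors are only defined there, and only there do $\Attr$ and $\PosAttr_1$ coincide so that path-style certificates suffice (this is exactly what the proof of Lemma~\ref{lem:alg-det-PSPACE} exploits). In a stochastic game the relevant operators alternate quantifiers --- $\PosPre_2$ is $\forall a\,\exists b$, and $\APre(Y,X)$ is $\exists a\,\forall b$ with both a containment and an intersection requirement --- so membership in their least fixed points is certified by a strategy (a ranked tree), not by a single guessed path; worse, the almost-sure region $\varphi_{{\sf AS}} = \opnu Y.\mu X.~T \cup \APre(Y,X)$ of Lemma~\ref{lem:almost-sure-reach-game} involves a \emph{greatest} fixpoint, and membership in a greatest fixpoint can never be certified by guessing a finite path. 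So your items (i) and (iii), as stated, would fail.

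The repair --- and what the paper's terse ``can then be done in PSPACE as well'' implicitly relies on --- is the cyclic product structure of $\G\times[r]$: every transition decrements the tracking counter, so the value of any of the operators $\CPre$, $\PosPre_1$, $\PosPre_2$, $\APre$ on the slice at counter value $i$ depends only on the slice at counter value $i-1 \bmod r$. Consequently all fixed points in play can be evaluated slice by slice, storing at any moment only a subset of $Q$ ($n$ bits) together with iteration counters bounded by $n\cdot r$, which fit in polynomially many bits; the interleaving with the implicit definition of $K$ (which you correctly flag as delicate) is then handled by composing these PSPACE oracles. Two further inaccuracies are worth noting, though they are harmless for the conclusion: the number of $\expand$ operations is bounded by the number of then-branch iterations, which is up to $n^2$, not $n$ (the depth-$n$ bound is the recursion depth of the deterministic algorithm), so the final period is at most $(2^n)^{n^2} = 2^{n^3}$ as in the paper rather than your $2^{n^2}$; and the periodic-scheme iterates $U_i = \CPre^i(U)$ live in $Q\times[r]$ rather than $Q$, although each is concentrated on a single counter value, so your polynomial-space bookkeeping for them is right in spirit.
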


\begin{proof}
  The correctness of the algorithm follows from the arguments given above. 
  We show that this algorithm can be implemented in PSPACE by a similar
  proof as for deterministic games (Lemma~\ref{lem:alg-det-PSPACE}).
  We already showed that the number of executions of the main loop 
  (line~\ref{alg:solve-as-ws-stoch-loop}) is at most $n^2$ (see \emph{Termination}),
  and since the period of the set $U$ (line~\ref{alg:solve-as-ws-stoch-main-test})
  is at most $2^n$ times the period of the game $\H$ (slices are subsets of $Q$),
  the final value of $r$ (upon termination of the algorithm) is at most 
  $(2^n)^{n^2} = 2^{n^3}$.

  A PSPACE implementation of Algorithm~\ref{alg:solve-as-ws-stoch} can be obtained
  by using a PSPACE procedure to determine the transitions of $\H\upharpoonright [K]$
  and $\H\upharpoonright [S]$, as in the proof of Lemma~\ref{lem:alg-det-PSPACE}.
  The computation of the positive attractors, and almost-sure winning regions
  can then be done in PSPACE as well.
\end{proof}

We obtain the following theorem, where the PSPACE upper bound is given by 
Lemma~\ref{lem:ssgvalalg}, and the lower bound and memory requirement hold 
in the special case of MDPs~\cite[Theorem~6]{DMS19}.

\begin{theorem}\label{theo:as-weakly-winning}
The membership problem for almost-sure weakly synchronizing in stochastic games 
is PSPACE-complete, and pure counting strategies are sufficient for player~$1$. 
Infinite memory is necessary in general.
\end{theorem}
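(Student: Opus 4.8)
The plan is to assemble the theorem from the technical results already established, since the substantive work lies in the preceding lemmas and the statement is essentially a packaging step. I would treat the three assertions separately: the complexity bound, the strategy complexity for player~1, and the memory lower bound.

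For the complexity bound, the PSPACE upper bound follows immediately from Lemma~\ref{lem:ssgvalalg}: Algorithm~\ref{alg:solve-as-ws-stoch} computes, in PSPACE, the supports of the almost-sure winning distributions, so to decide membership $1_q \in \winas{weakly}(\G,T)$ one checks in PSPACE whether the singleton $\{q\}$ occurs among the returned slices. The matching lower bound I would obtain by observing that MDPs are exactly the special case of stochastic games in which all states are player-1 states, and that almost-sure weakly synchronizing is already PSPACE-hard for MDPs by \cite[Theorem~6]{DMS19}; hence the problem is PSPACE-complete. The deterministic-game hardness of Theorem~\ref{theo:alg-det-PSPACE-complete} yields the same bound, so either route works.

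For the sufficiency of pure counting strategies, I would extract the winning strategy $\straa_{\as}$ built in the correctness argument for Algorithm~\ref{alg:solve-as-ws-stoch} (the part showing that every state of $X = S_1$ is winning). The key point is that $\straa_{\as}$ switches between the strategies $\straa_1, \straa_2, \dots$ at window boundaries fixed in advance, and within each $\straa_N$ it plays only pure memoryless or pure counting components: the positive-attractor strategy, the memoryless almost-sure reachability strategy of Lemma~\ref{lem:almost-sure-reach-game-pl1}, and the pure counting sure-eventually-synchronizing strategy from Lemma~\ref{lem:SCC-somewhere-gen}. Since the schedule of phases is a function of the round number alone and every component depends on history only through the current state, the composite depends on a history $\rho$ only through the pair $(\abs{\rho}, \Last(\rho))$; this is precisely the counting property, and purity is preserved by the composition.

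For the infinite-memory lower bound I would again invoke the MDP special case: \cite[Theorem~6]{DMS19} exhibits an MDP on which no finite-memory strategy is almost-sure weakly synchronizing, and since that MDP is a stochastic game, infinite memory is necessary in general. The main obstacle is not a new argument but making the pure-counting claim airtight: one must verify that $\straa_{\as}$, defined piecewise across infinitely many scheduling windows and across the recursively nested regions $W_j$, genuinely depends only on the current state and the elapsed number of rounds. This requires confirming that each component strategy is memoryless or counting, that the phase boundaries are a deterministic function of the round index, and that the recursion through the sets $S_i$ introduces no history dependence beyond the current state.
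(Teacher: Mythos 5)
Your proposal is correct and coincides with the paper's own proof, which assembles the theorem in exactly the same way: the PSPACE upper bound from Lemma~\ref{lem:ssgvalalg}, the PSPACE lower bound and the infinite-memory requirement from the MDP special case~\cite[Theorem~6]{DMS19}, and the sufficiency of pure counting strategies from the strategy $\straa_{\as}$ constructed in the correctness argument for Algorithm~\ref{alg:solve-as-ws-stoch} (and already noted after Lemma~\ref{lem:SCC-somewhere-gen}). Your additional check that the phase schedule is a function of the round number alone, with memoryless or counting components, is just a more explicit spelling-out of the same point.
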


\subsection{Other synchronizing objectives}\label{sec:other}

The almost-sure winning region for the other synchronizing
objectives can be computed relatively easily.

\begin{lemma}\label{lem:always-sure-as}
For always synchronizing, the sure and almost-sure winning modes coincide:
$\winsure{always}(T) = \winas{always}(T)$.
\end{lemma}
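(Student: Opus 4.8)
The plan is to prove the non-trivial inclusion $\winas{always}(T) \subseteq \winsure{always}(T)$, since the reverse inclusion $\winsure{always}(T) \subseteq \winas{always}(T)$ is already among the general inclusions $\winsure{\lambda}(T) \subseteq \winas{\lambda}(T)$ recorded after the definition of the winning modes. The whole argument rests on a single observation about the quantifier order in the definition of almost-sure winning: the existential quantifier over player-$1$ strategies stands \emph{outside} the universal quantifier over $\epsilon$, so one and the same strategy $\straa$ must witness $(1-\epsilon)$-synchronizing for every $\epsilon > 0$ simultaneously.

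Concretely, I would first unfold the definitions. For the \emph{always} mode, $\G_{d_0}^{\straa,\strab}$ being $(1-\epsilon)$-synchronizing in $T$ means that $\G_i^{\straa,\strab}(T) \geq 1-\epsilon$ for all $i \geq 0$. Hence $d_0 \in \winas{always}(T)$ expands to $\exists \straa \cdot \forall \strab \cdot \forall \epsilon > 0 \cdot \forall i \geq 0 : \G_i^{\straa,\strab}(T) \geq 1-\epsilon$. The two universal quantifiers over $\epsilon$ and over $i$ commute, so I may fix the witnessing strategy $\straa$ together with an arbitrary player-$2$ strategy $\strab$ and an arbitrary round $i$, and read the remaining statement as: $\G_i^{\straa,\strab}(T) \geq 1-\epsilon$ holds for \emph{every} $\epsilon > 0$. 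The key step is then to take the limit $\epsilon \to 0$ pointwise: for fixed $\straa,\strab,i$ the quantity $\G_i^{\straa,\strab}(T)$ is a single real number in $[0,1]$, and $\G_i^{\straa,\strab}(T) \geq 1-\epsilon$ for all $\epsilon > 0$ forces $\G_i^{\straa,\strab}(T) \geq 1$, i.e.\ $\G_i^{\straa,\strab}(T) = 1$. Since this holds for the \emph{one} fixed strategy $\straa$ and for all $\strab$ and all $i$, the strategy $\straa$ ensures $1$-synchronizing in $T$ in the always mode, so $d_0 \in \winsure{always}(T)$; this gives the inclusion and hence the equality.

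I do not expect any genuine obstacle here: the proof is a pure quantifier manipulation and requires neither constructing nor modifying strategies. The only point deserving care is to explain \emph{why} this collapse is particular to the always mode and fails for the others, so that the lemma does not look like it should hold everywhere. Always synchronizing is a $\forall i$ condition, which commutes with $\forall \epsilon$, so the limit $\epsilon \to 0$ can be taken separately at each position and the equality $\G_i^{\straa,\strab}(T) = 1$ survives; for eventually, weakly, and strongly synchronizing the relevant condition instead involves $\exists i$ (respectively infinitely or cofinitely many $i$), and there the round at which the probability mass reaches $1-\epsilon$ may genuinely depend on $\epsilon$, so no single round need survive the limit. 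This is precisely the phenomenon behind $\winsure{\lambda}(T) \subsetneq \winas{\lambda}(T)$ for those modes, as witnessed by $\G_{{\sf win}}$ in \figurename~\ref{fig:gwin}.
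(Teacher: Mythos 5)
Your proof is correct, and it takes a genuinely different (and more elementary) route than the paper's. You never construct a new strategy: you exploit the fact that, in the paper's definition of almost-sure winning, the existential over $\straa$ stands outside $\forall \epsilon > 0$, and that the always mode is itself a $\forall i$ condition on the fixed outcome sequence $\G_{d_0}^{\straa,\strab}$; hence $\forall \epsilon$ and $\forall i$ commute and the pointwise limit forces $\G_i^{\straa,\strab}(T) = 1$ at every round, so the almost-sure witness $\straa$ is already a sure witness. The paper instead argues through the state-based safety objective $\Box T$: if a distribution $d_0$ is almost-sure winning but player~$1$ had no sure-winning strategy for $\Box T$, then player~$2$ could force, against \emph{every} player-$1$ strategy, probability at least $\eta_0 \cdot \eta^n$ of leaving $T$ within $n$ rounds (where $\eta_0$ is the least positive probability in $d_0$), a contradiction; the resulting safety strategy $\straa_{\rm safe}$ is then observed to be sure winning for always synchronizing. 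What each approach buys: yours is shorter, needs no game-theoretic input, and yields the stronger conclusion that the very same strategy witnesses both modes; the paper's is robust in a way yours is not, because its spoiling bound $\eta_0 \cdot \eta^n$ is uniform over player-$1$ strategies, so the same argument proves the collapse even under the weaker ``limit-sure'' reading $\forall \epsilon \cdot \exists \straa$ (strategy allowed to depend on $\epsilon$), which is exactly where your quantifier manipulation breaks down; it also makes explicit the reduction from always synchronizing to an ordinary safety game, the structure behind the polynomial-time bound and pure memoryless strategies of Theorem~\ref{theo:as-winning}. Your closing explanation of why the collapse is specific to the always mode (a $\forall i$ condition, versus $\exists i$ or tail quantifiers for the other modes, where the witnessing round may depend on $\epsilon$) is accurate and consistent with the paper's remark that the inclusion $\winsure{\lambda}(T) \subseteq \winas{\lambda}(T)$ is strict in general.
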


\begin{proof}
The inclusion $\winsure{always}(T) \subseteq \winas{always}(T)$ follows 
from the definitions (Section~\ref{sec:def}). For the converse inclusion, consider an initial
distribution $d_0$ from which player~$1$ has an almost-sure winning
strategy for always synchronizing in $T$. 
We claim that player~$1$ has a strategy $\straa_{{\rm safe}}$ to ensure,
for all plays $\rho = q_0 \, a_0b_0 \, q_1 \ldots q_k$ compatible with $\straa$,
that $q_i \in T$ for all $0 \leq i \leq k$ (i.e., player~$1$ is sure-winning
for the safety objective $\Box T$~\cite{ConcOmRegGames}).
By contradiction, if that is not the case, then player~$2$ has a strategy
to ensure reaching a state in $Q \setminus T$ within at most $n$ steps
with positive probability (at least $\eta_0 \cdot \eta^n$) against all
strategies of player~$1$, in contradiction with player~$1$ being 
almost-sure winning for always synchronizing in $T$. 
Hence such a strategy $\straa_{{\rm safe}}$ exists and we conclude the proof 
by observing that $\straa_{{\rm safe}}$ is sure winning for always synchronizing in $T$.
\end{proof}

The following lemma generalizes to games a result that holds 
for MDPs~\cite[Section~5.1.2]{Shi14}.
\begin{lemma}\cite{V18}\label{lem:event-sync}
In stochastic games, we have 
$\winas{event}(T) = \winsure{event}(T) \cup \winas{weakly}(T)$.
\end{lemma}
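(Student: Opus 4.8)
The plan is to prove the two inclusions separately. The inclusion $\winsure{event}(T) \cup \winas{weakly}(T) \subseteq \winas{event}(T)$ is immediate from the monotonicity inclusions recalled in Section~\ref{sec:def}: $\winsure{event}(T) \subseteq \winas{event}(T)$ and $\winas{weakly}(T) \subseteq \winas{event}(T)$. So the entire content lies in the reverse inclusion, which I would prove in the form $\winas{event}(T) \setminus \winsure{event}(T) \subseteq \winas{weakly}(T)$: fix an initial distribution $d_0$ together with an almost-sure eventually synchronizing strategy $\straa$, assume $d_0 \notin \winsure{event}(T)$, and construct a strategy witnessing almost-sure weakly synchronizing in $T$.

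The first step extracts a sure/weakly dichotomy from the behaviour of $\straa$ against the uniform player-$2$ strategy $\strab_{\u}$, exploiting that $\strab_{\u}$ maximizes the support (the domination $\Supp(\G_i^{\straa,\strab}) \subseteq \Supp(\G_i^{\straa,\strab_{\u}})$ used for sure winning in Section~\ref{sec:sure}). Writing $m_i = \G_i^{\straa,\strab_{\u}}(T)$, the eventually synchronizing property instantiated at $\strab_{\u}$ gives $\sup_i m_i = 1$. If this supremum were \emph{attained}, say $m_{i^*}=1$, then $\Supp(\G_{i^*}^{\straa,\strab_{\u}}) \subseteq T$, and by the support domination $\Supp(\G_{i^*}^{\straa,\strab}) \subseteq T$ for \emph{every} player-$2$ strategy $\strab$; hence $\straa$ achieves full synchronization in $T$ at round $i^*$ against all $\strab$, so $d_0 \in \winsure{event}(T)$, contradicting our assumption. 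Therefore the supremum is not attained, and a short argument then upgrades ``exists'' to ``infinitely often'': if only finitely many rounds ever reached $1-\epsilon_0$ for some $\epsilon_0$, the maximum over those finitely many rounds would be attained and forced to equal $1$, a contradiction. Thus for every $\epsilon>0$ there are infinitely many $i$ with $m_i \geq 1-\epsilon$, i.e.\ $\straa$ is already almost-sure weakly synchronizing \emph{against} $\strab_{\u}$.

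The remaining and genuinely delicate step is to upgrade ``weakly synchronizing against $\strab_{\u}$'' to ``weakly synchronizing against every $\strab$''. This does \emph{not} follow from a domination, since $\strab_{\u}$ is not the worst opponent for the quantitative $T$-mass (player~$2$ may deterministically push mass out of $T$ rather than scatter it). I would instead use a restart construction: iterate witnesses of the eventually synchronizing objective for $\epsilon = \tfrac12,\tfrac14,\dots$, playing the $k$-th witness until the $T$-mass first exceeds $1-\epsilon_k$ (which happens against every opponent by the eventually synchronizing property), then switch to the $(k{+}1)$-th witness started from the current distribution. Under the resulting player-$1$ strategy the $T$-mass would exceed $1-\epsilon_k$ at an unbounded increasing sequence of rounds against every $\strab$, yielding almost-sure weakly synchronizing.

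The main obstacle is the legitimacy of each restart: it requires the intermediate high-mass distributions to again be almost-sure eventually synchronizing, and this preservation of membership in $\winas{event}(T)$ along the constructed plays is exactly where the game setting departs from the MDP argument this lemma generalizes. I would secure it by taking the witnesses with a uniform reaching horizon (the bounded-reachability phenomenon of Lemma~\ref{lem:almost-sure-reach-game-pl1}, transported to the space of distributions, so that $\geq 1-\epsilon_k$ mass is reached at one fixed round against all opponents) and by arguing that the family of reachable high-mass distributions stays inside $\winas{event}(T)$; this invariance, rather than any of the routine synchronization bookkeeping, is the crux of the proof.
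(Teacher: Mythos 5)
Your easy inclusions are correct, and so is your dichotomy against the uniform strategy: if the supremum of the $T$-mass under $(\straa,\strab_{\u})$ is attained, support domination indeed forces $d_0 \in \winsure{event}(T)$, and otherwise $\straa$ is weakly synchronizing \emph{against $\strab_{\u}$}. The gap is the upgrade to all opponents, which you defer rather than prove, and neither ingredient of your proposed fix survives scrutiny. First, ``play the $k$-th witness until the $T$-mass first exceeds $1-\epsilon_k$'' is not a strategy: player~$1$ only observes his play history, not the global distribution, and the round at which the mass exceeds $1-\epsilon_k$ depends on $\strab$. Your uniform-horizon repair is refuted by the paper's own example $\G_{{\sf win}}$ (\figurename~\ref{fig:gwin}): from $q_1$ we have $1_{q_1} \in \winas{event}(T) \setminus \winsure{event}(T)$, yet for every player-$1$ strategy and every \emph{fixed} round $i$, player~$2$ can hold all the mass in $q_1$ until round $i-1$ and then drop it into $q_2$, making the $T$-mass at round $i$ equal to $0$; Lemma~\ref{lem:almost-sure-reach-game-pl1} concerns individual plays reaching a state set, not probability mass accumulating at a common round, and does not transport. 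Second, the invariance you call the crux --- that the opponent-dependent high-mass distributions reached at a restart lie again in $\winas{event}(T)$ --- is exactly the content of the lemma and does not follow from anything you set up: the hypothesis $d_0 \notin \winsure{event}(T)$ constrains only the initial distribution, and an arbitrary eventually-synchronizing witness may achieve its one high-mass round by routing the mass into a region from which no further synchronization is possible, so the recursion has nothing to recurse on.

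The paper proves the hard inclusion by a completely different and much shorter route that never constructs a weakly synchronizing strategy. Assume $d_0 \in \winas{event}(T)$ but $d_0 \notin \winsure{event}(T) \cup \winas{weakly}(T)$, and fix the almost-sure eventually synchronizing strategy $\straa$. The two non-membership assumptions yield, \emph{for this fixed $\straa$}, two spoiling strategies: $\strab_{\rm e}$ keeping some positive mass outside $T$ at every round, and $\strab_{\rm w}$ keeping mass at least $\epsilon_w > 0$ outside $T$ at all rounds $i \geq i^*$. Their superposition $\frac{1}{2}\strab_{\rm e} + \frac{1}{2}\strab_{\rm w}$ (legitimate in the infinite-state MDP obtained by fixing $\straa$) keeps the $T$-mass bounded away from $1$ at \emph{every} round --- after $i^*$ because of $\strab_{\rm w}$, and on the finitely many rounds before $i^*$ because of $\strab_{\rm e}$ --- contradicting that $\straa$ is almost-sure eventually synchronizing. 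This superposition of spoiling strategies is precisely the mechanism for coping with the fact that, in these non-determined games, spoiling strategies can only be built after player~$1$'s strategy is fixed; your construction has no substitute for it.
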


\begin{proof}
The inclusions $\winsure{event}(T) \subseteq \winas{event}(T)$ 
and $\winas{weakly}(T) \subseteq \winas{event}(T)$ follow 
from the definitions (Section~\ref{sec:def}).
For the converse, consider an initial distribution $d_0$ 
from which player~$1$ has an almost-sure winning
strategy for eventually synchronizing in $T$. 
Towards contradiction, assume that $d_0 \not\in \winsure{event}(T) \cup \winas{weakly}(T)$.
Then for all player-$1$ strategies $\straa$, 
there exist player-$2$ strategies $\strab_{\rm e}$ and $\strab_{\rm w}$
such that $\G^{\straa,\strab_{\rm e}}_{d_0}$ is not sure eventually synchronizing in $T$,
and $\G^{\straa,\strab_{\rm w}}_{d_0}$ is not almost-sure weakly synchronizing in $T$.
Let $\strab$ be the strategy playing the superposition of $\frac{1}{2}\strab_{\rm e}$ 
and $\frac{1}{2}\strab_{\rm w}$. Consider $\strab_{\rm w}$ and following the definitions, 
there exists $\epsilon_{w} > 0$ and $i^* \geq 0$ such that $\G^{\straa,\strab}_i(T) < 1-\epsilon_w$
for all $i \geq i^*$. Moreover, considering $\strab_{\rm e}$, 
we have $\G^{\straa,\strab}_i(T) < 1- \frac{\eta_0}{2} \cdot \eta^{i^*}$
for all $i \leq i^*$. For $\epsilon = \min\{\epsilon_w, \frac{\eta_0}{2} \cdot \eta^{i^*}\} > 0$,
we get $\G^{\straa,\strab}_i(T) < 1- \epsilon$ for all $i \geq 0$, in contradiction
to player~$1$ being almost-sure winning for eventually synchronizing in~$T$.
\end{proof}

The reduction presented in the proof of Theorem~\ref{theo:alg-det-PSPACE-complete}
also shows PSPACE-hardness for almost-sure eventually synchronizing 
in deterministic games.

\begin{lemma}\label{lem:strong-sync}
Let $\G$ be a stochastic game.
Given a target set~$T$, an initial distribution $d_0$ almost-sure winning 
for strongly synchronizing in $T$
if and only if $d_0$ is almost-sure winning 
for the coB\"uchi objective~$\Diamond \Box T$.
\end{lemma}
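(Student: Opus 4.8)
I would prove the equivalence by two separate implications, relating the distribution-level condition ``$\liminf_i d_i(T)=1$'' to the state-level coB\"uchi event, and I expect the delicate direction to be the one from strong synchronization to coB\"uchi. For the easy direction (coB\"uchi $\Rightarrow$ strongly synchronizing), suppose $d_0$ is almost-sure winning for $\Diamond\Box T$ and fix a witnessing strategy $\straa$. For any player-$2$ strategy $\strab$ and any $\epsilon>0$, write $\Diamond\Box T=\bigcup_{k\geq 0} E_k$ where $E_k=\{\pi\mid \pi_i\in T \text{ for all } i\geq k\}$ is an increasing family of measurable sets. Since $\Prb_{d_0}^{\straa,\strab}(\Diamond\Box T)=1$, continuity of measure gives $k_\epsilon$ with $\Prb_{d_0}^{\straa,\strab}(E_{k_\epsilon})\geq 1-\epsilon$, and for every $i\geq k_\epsilon$ the inclusion $E_{k_\epsilon}\subseteq\{\pi\mid\pi_i\in T\}$ yields $d_i(T)=\Prb_{d_0}^{\straa,\strab}(\Diamond^{=i}T)\geq 1-\epsilon$. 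Hence the outcome is strongly $(1-\epsilon)$-synchronizing in $T$ for every $\epsilon$ and every $\strab$, so $d_0\in\winas{strongly}(T)$.

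For the converse I would argue the contrapositive using the structure of the \emph{player-$2$} almost-sure B\"uchi region for the complementary set $Q\setminus T$. Let $D$ be the set of states from which player~$2$ can ensure $\Box\Diamond(Q\setminus T)$ with probability~$1$. By standard qualitative analysis of turn-based stochastic (co)B\"uchi games~\cite{ConcOmRegGames}, $D$ is a trap for player~$1$ (so once entered it is never left), player~$2$ has a pure memoryless strategy $\hat\strab_{in}$ that, from every state of $D$, reaches $Q\setminus T$ within $n=\abs Q$ steps with probability at least $\eta^n$ (the positive-attractor property inside the closed region $D$), and the almost-sure coB\"uchi winning region of player~$1$ is exactly $W=Q\setminus\PosAttr_2(D)$; moreover a distribution is almost-sure coB\"uchi winning iff its support lies in $W$, since player~$1$ has a uniform pure memoryless almost-sure winning strategy on $W$. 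The plan is then: assuming $d_0$ is \emph{not} almost-sure winning for $\Diamond\Box T$, we get $\Supp(d_0)\not\subseteq W$, i.e.\ there is a state $q^\ast\in\Supp(d_0)\cap\PosAttr_2(D)$ with $d_0(q^\ast)=\eta_0>0$.

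Now I would fix an arbitrary player-$1$ strategy $\straa$ and construct one player-$2$ strategy $\hat\strab$ spoiling strong synchronization: play the (pure memoryless) player-$2$ positive-attractor strategy towards $D$, and switch to $\hat\strab_{in}$ upon entering $D$. Since $q^\ast\in\PosAttr_2(D)$, within $n$ steps a probability mass of at least $p_1:=\eta_0\cdot\eta^n$ enters $D$, and because $D$ is a trap for player~$1$ this mass remains in $D$ forever. By the positive-attractor property of $\hat\strab_{in}$, in every time window $[t,t+n]$ at least $p_1\cdot\eta^n$ of this mass first hits $Q\setminus T$; distributing these first-hitting masses over the $n+1$ time points of the window, some index $i_t\in[t,t+n]$ satisfies $d_{i_t}(Q\setminus T)\geq \frac{p_1\cdot\eta^n}{n+1}=:c>0$. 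Letting $t\to\infty$ produces infinitely many indices $i$ with $d_i(T)\leq 1-c$, so $\limsup_i d_i(Q\setminus T)\geq c$ against the arbitrary $\straa$; taking $\epsilon=c/2$ shows $d_0\notin\winas{strongly}(T)$, completing the contrapositive.

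The main obstacle is precisely this last conversion of a \emph{temporal} statement (plays visit $Q\setminus T$ infinitely often with positive probability) into a \emph{distributional} one (a uniformly bounded mass sits in $Q\setminus T$ at infinitely many common time instants). A naive attempt that superposes, for each $m$, a strategy playing a ``reservoir'' strategy up to time $m$ and then flushing mass into $Q\setminus T$ fails, because summable superposition weights dilute the flushed mass to zero and because flushed mass landing in $Q\setminus T\cap W$ could be recovered by player~$1$ into the safe core. Routing the argument through the \emph{closed} almost-sure B\"uchi region $D$ removes both issues at once: a constant mass $p_1$ is trapped in $D$ for all time, and a single player-$2$ strategy repeatedly attracts it to $Q\setminus T$ in bounded windows, after which a pigeonhole gives the required common time with bounded $Q\setminus T$-mass --- no superposition and hence no dilution is needed.
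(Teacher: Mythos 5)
Your first direction (coB\"uchi implies strongly synchronizing) is correct, and it is actually more elementary and more general than the paper's: by writing $\Diamond\Box T=\bigcup_k E_k$ with $E_k$ increasing and using continuity of measure, you get the implication for \emph{any} almost-sure winning strategy, with no need for memoryless strategies or end-component analysis (which is what the paper uses, via~\cite{CY95}). The quantifier order is fine, since the threshold $k_\epsilon$ is allowed to depend on $\strab$ and $\epsilon$.

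The converse direction, however, has a genuine gap: it rests on the claim that player~$1$'s almost-sure coB\"uchi region is $W=Q\setminus\PosAttr_2(D)$ where $D$ is player~$2$'s almost-sure B\"uchi region --- equivalently, that player~$2$'s positive-B\"uchi region equals $\PosAttr_2(D)$. This is an MDP fact that does \emph{not} transfer to games, and~\cite{ConcOmRegGames} does not assert it; in games the positive-B\"uchi region is a strictly larger, nested-fixpoint set. Counterexample inside the paper's model: states $q,f,t,z,d$ with $T=\{q,z,t\}$, transitions $\delta(q,a_1,b)=1_f$ for all $b$, $\delta(q,a_2,b_1)=1_z$, $\delta(q,a_2,b_2)=1_t$, $\delta(f,a,b)=1_q$, $\delta(z,a,b)=\tfrac12 1_d+\tfrac12 1_t$, and $t,d$ absorbing. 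Here $D=\{d\}$ and $\PosAttr_2(D)=\{z,d\}$, so your $W=\{q,f,t\}$; but player~$1$ is \emph{not} almost-sure coB\"uchi winning from $q$: any strategy that eventually always plays $a_1$ produces the play $(q\,f)^\omega$, which visits $f\notin T$ infinitely often surely, while any strategy that plays $a_2$ with positive probability lets player~$2$ answer $b_1$, after which the play reaches $d$ with probability $\tfrac12$; so the true region is $\{t\}\subsetneq W$. Consequently your contrapositive never gets started from $d_0=1_q$: we have $\Supp(d_0)\cap\PosAttr_2(D)=\emptyset$ although $d_0\notin\winas{strongly}(T)$. Worse, the spoiling mechanism here cannot be ``trap a constant mass in $D$ and attract it'': against the player-$1$ strategy that always plays $a_1$, no probability mass \emph{ever} enters $D$; player~$2$ spoils strong synchronization through the oscillation of the mass between $q\in T$ and $f\notin T$, i.e., through a threat-based strategy (either player~$1$ stays in the cycle and oscillates, or he plays $a_2$ and leaks mass to $d$). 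This is exactly the phenomenon your closing paragraph claims to have eliminated. The paper avoids the issue by a different route: it first fixes a \emph{memoryless} spoiling strategy of player~$2$ for the state-based coB\"uchi objective (which exists by determinacy results for stochastic coB\"uchi games~\cite{CJH04}), thereby reducing to an MDP, and then invokes the known MDP equivalence between almost-sure strong synchronization and almost-sure coB\"uchi~\cite[Lemma~27]{DMS19}. To repair your argument you would have to replace $\PosAttr_2(D)$ by player~$2$'s actual positive-B\"uchi region and then deal with spoiling from states, like $q$ above, from which no bounded mass ever reaches $D$ --- which is essentially the content of that cited MDP lemma.
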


\begin{proof}
First, if player~$1$ is almost-sure winning 
for the coB\"uchi objective~$\Diamond \Box T$ in $\G$, 
then there exists a memoryless winning strategy $\straa_{\as}$
for player~$1$, which is such that all states in the (reachable) end-components 
of the MDP obtained from the game $\G$ after fixing the strategy $\straa_{\as}$
are in $T$~\cite{CY95}.
It is then easy to show that the strategy $\straa_{\as}$
is almost-sure winning for strongly synchronizing in~$T$. 

For the converse direction, if player~$1$ is not almost-sure winning
for the coB\"uchi objective~$\Diamond \Box T$ in $\G$, 
then there exists a strategy $\strab$ for player~$2$, 
which we can assume to be memoryless~\cite{CJH04},
such that for all player-$1$ strategies $\straa$ we have
$\Prb_{d_0}^{\straa,\strab}(\Diamond \Box T) < 1$. 
Hence in the MDP $\G_{\strab}$ obtained from $\G$ by fixing the strategy $\strab$,
player~$1$ is not almost-sure winning for the coB\"uchi 
objective~$\Diamond \Box T$, which is equivalent 
to say that in $\G_{\strab}$ no player-$1$ strategy is
almost-sure winning for strongly synchronizing in $T$~\cite[Lemma~27]{DMS19},
and concludes the proof.
\end{proof}

We note that in deterministic games, the sure and almost-sure winning modes
coincide for state-based objectives, thus it follows from Lemma~\ref{lem:strong-sync}
that for strongly synchronizing the sure and almost-sure winning modes
coincide in deterministic games.

We summarize the results of Section~\ref{sec:as} for almost-sure synchronizing.

\begin{theorem}\label{theo:as-winning}
The membership problem for almost-sure always and strongly synchronizing can be solved
in polynomial time, and pure memoryless strategies are sufficient for player~$1$.

The membership problem for almost-sure eventually and weakly synchronizing is PSPACE-complete, 
and pure counting strategies are sufficient for player~$1$. Infinite memory is necessary
in general.
\end{theorem}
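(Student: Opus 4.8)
The plan is to obtain each of the four almost-sure modes by reduction to a result already established in Section~\ref{sec:as}, handling the two sentences of the statement separately.

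For \emph{always} synchronizing I would first apply Lemma~\ref{lem:always-sure-as}, which gives $\winas{always}(T) = \winsure{always}(T)$; the proof of that lemma identifies this set with the sure winning region for the safety objective $\Box T$, computable as the greatest fixpoint of $s \mapsto T \cap \CPre(s)$, so membership is decided in polynomial time and the resulting safety strategy is pure and memoryless. For \emph{strongly} synchronizing I would use Lemma~\ref{lem:strong-sync} to replace $\winas{strongly}(T)$ by the almost-sure winning region for the coB\"uchi objective $\Diamond \Box T$ in $\G$; almost-sure coB\"uchi stochastic games are solved in polynomial time and admit pure memoryless almost-sure winning strategies for player~$1$, and the strategy $\straa_{\as}$ produced in the proof of Lemma~\ref{lem:strong-sync} is already memoryless. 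Together these settle the first sentence.

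For the second sentence, the \emph{weakly} synchronizing case is exactly Theorem~\ref{theo:as-weakly-winning} (PSPACE-completeness, pure counting strategies, infinite memory necessary), so nothing further is required. For \emph{eventually} synchronizing I would start from Lemma~\ref{lem:event-sync}, namely $\winas{event}(T) = \winsure{event}(T) \cup \winas{weakly}(T)$. To decide whether $1_q \in \winas{event}(T)$ it then suffices to test $1_q \in \winsure{event}(T)$ and $1_q \in \winas{weakly}(T)$ and take the disjunction; the first test is in PSPACE by Theorem~\ref{theo:sure-winning} and the second by Theorem~\ref{theo:as-weakly-winning}, and PSPACE is closed under disjunction, giving the upper bound. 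The matching PSPACE lower bound is the reduction from the proof of Theorem~\ref{theo:alg-det-PSPACE-complete}, which (as observed just after Lemma~\ref{lem:strong-sync}) already applies to almost-sure eventually synchronizing in deterministic games.

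It remains to justify the strategy complexity for eventually synchronizing. Since for a Dirac distribution $1_q$ the membership $q \in \winsure{event}(T)$ versus $q \in \winas{weakly}(T)$ is determined before play starts, I would exhibit a pure counting winning strategy for each disjunct and commit to the appropriate one. On $\winas{weakly}(T)$ such a strategy is given by Theorem~\ref{theo:as-weakly-winning}; on $\winsure{event}(T)$ a sure eventually synchronizing strategy is a sequence of selectors, which as a function $\nat \times Q \to \Act$ is pure counting, the exponential memory of Theorem~\ref{theo:sure-winning} being absorbed into the round index. The necessity of infinite memory for both eventually and weakly synchronizing holds already for MDPs~\cite{DMS19}, hence a fortiori for stochastic games. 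The main obstacle I anticipate is precisely this last bookkeeping step: checking that the selector-indexed realization of a sure eventually strategy is genuinely a pure counting strategy and that committing to one of the two disjuncts preserves the counting property; the remaining steps are direct invocations of the cited lemmas and theorems.
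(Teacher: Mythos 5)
Your proposal is correct and follows essentially the same route as the paper: the theorem is stated there as a summary of Section~\ref{sec:as}, combining Lemma~\ref{lem:always-sure-as} (always), Lemma~\ref{lem:strong-sync} (strongly, via coB\"uchi games), Theorem~\ref{theo:as-weakly-winning} (weakly), and Lemma~\ref{lem:event-sync} together with the hardness reduction of Theorem~\ref{theo:alg-det-PSPACE-complete} (eventually), exactly as you do. Your extra bookkeeping step --- observing that a sure eventually synchronizing strategy, being a sequence of selectors arising from the subset construction, is itself a pure counting strategy --- is a correct resolution of a detail the paper leaves implicit.
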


\section{Conclusion}

Stochastic games with synchronizing objectives combine stochasticity with
the presence of an adversary and a flavour of imperfect information,
which together tend to bring undecidability in a continuous 
setting~\cite{PAZBook,MHC03}. The form of imperfect information in these games
differs from the traditional setting where the strategy of player~$1$
is uniform (the same action is played in all states)~\cite{BGB12,CFO20}.
Here, player~$1$ can see the local state of the game, but needs to enforce 
a global objective defined on state distributions, which are not visible 
to player~$1$. 
Beyond decidability, it is perhaps surprising that the membership 
problem for games is no harder than for MDPs (PSPACE-complete), although
the proof techniques are significantly more involved, mainly due
to the presence of an adversary, and the lack of determinacy. 

The main question raised by this model is whether it is possible to extend it
with a form of communication, while remaining decidable. This would bring
us closer to a wide range of applications in synthetic biology~\cite{NDS16,VLN19} and chemical 
reaction networks~\cite{CKL18}.
In another direction, considering other classes of objectives, such as combination
of multiple synchronizing objectives, or more quantitative conditions (where the
probability threshold is not $1$) are completely open problems.

\paragraph{{\bf Acknowledgment}}
We are grateful to Mahsa Shirmohammadi and Marie van den Bogaard for 
preliminary discussions about this problem 
and for inspiring the example of \figurename~\ref{fig:gamelose}.

\bibliographystyle{alpha} 
\bibliography{biblio} 

\newcommand{\etalchar}[1]{$^{#1}$}
\begin{thebibliography}{KVAK10}

\bibitem[AAGT12]{AAGT12}
M.~Agrawal, S.~Akshay, B.~Genest, and P.~S. Thiagarajan.
\newblock Approximate verification of the symbolic dynamics of {M}arkov chains.
\newblock In {\em Proc. of LICS: Logic in Computer Science}, pages 55--64.
  IEEE, 2012.

\bibitem[AGV18]{AGV18}
S.~Akshay, B.~Genest, and N.~Vyas.
\newblock Distribution-based objectives for {M}arkov decision processes.
\newblock In {\em Proc. of LICS: Logic in Computer Science}, pages 36--45.
  {ACM}, 2018.

\bibitem[BDGG17]{BDGG17}
N.~Bertrand, M.~Dewaskar, B.~Genest, and H.~Gimbert.
\newblock Controlling a population.
\newblock In {\em Proc. of {CONCUR}: Concurrency Theory}, volume~85 of {\em
  LIPIcs}, pages 12:1--12:16. Schloss Dagstuhl - Leibniz-Zentrum fuer
  Informatik, 2017.

\bibitem[BGB12]{BGB12}
C.~Baier, M.~Gr{\"{o}}{\ss}er, and N.~Bertrand.
\newblock Probabilistic {\(\omega\)}-automata.
\newblock {\em Journal of the {ACM}}, 59(1):1:1--1:52, 2012.

\bibitem[BGG17]{BGG17}
N.~Bertrand, B.~Genest, and H.~Gimbert.
\newblock Qualitative determinacy and decidability of stochastic games with
  signals.
\newblock {\em Journal of the {ACM}}, 64(5):33:1--33:48, 2017.

\bibitem[BK08]{BK08}
C.~Baier and J.-P. Katoen.
\newblock {\em Principles of Model Checking}.
\newblock MIT, 2008.

\bibitem[BS07]{BS07}
J.~C. Bradfield and C.~Stirling.
\newblock Modal mu-calculi.
\newblock In {\em Handbook of Modal Logic}, volume~3 of {\em Studies in logic
  and practical reasoning}, pages 721--756. North-Holland, 2007.

\bibitem[B{\"u}c62]{Buc62}
J.~R. B{\"u}chi.
\newblock On a decision method in restricted second order arithmetic.
\newblock In {\em Proc. of International Congress of Logic, Methodology and
  Philisophical Science 1960}, pages 1--11. Stanford University Press, 1962.

\bibitem[BW18]{BW18}
J.~C. Bradfield and I.~Walukiewicz.
\newblock The mu-calculus and model checking.
\newblock In {\em Handbook of Model Checking}, pages 871--919. Springer, 2018.

\bibitem[CDHR07]{CDHR07}
K.~Chatterjee, L.~Doyen, T.~A. Henzinger, and J.-F. Raskin.
\newblock Algorithms for omega-regular games of incomplete information.
\newblock {\em Logical Methods in Computer Science}, 3(3:4), 2007.

\bibitem[CFO20]{CFO20}
T.~Colcombet, N.~Fijalkow, and P.~Ohlmann.
\newblock Controlling a random population.
\newblock In {\em Proc. of FoSSaCS: Foundations of Software Science and
  Computation Structures}, LNCS 12077, pages 119--135. Springer, 2020.

\bibitem[CH12]{CH12}
K.~Chatterjee and T.~A. Henzinger.
\newblock A survey of stochastic $\omega$-regular games.
\newblock {\em Journal of Computer and System Sciences}, 78(2):394--413, 2012.

\bibitem[Cha07]{Chatterjee07}
K.~Chatterjee.
\newblock {\em Stochastic $\omega$-regular Games}.
\newblock PhD thesis, University of California, Berkeley, 2007.

\bibitem[Chu63]{Chu63}
A.~Church.
\newblock Logic, arithmetics, and automata.
\newblock In {\em Proc. of International Congress of Mathematicians, 1962},
  pages 23--35. Institut Mittag-Leffler, 1963.

\bibitem[CJH04]{CJH04}
K.~Chatterjee, M.~Jurdzinski, and T.~A. Henzinger.
\newblock Quantitative stochastic parity games.
\newblock In {\em Proc. of SODA: Symposium on Discrete Algorithms}, pages
  121--130. {SIAM}, 2004.

\bibitem[CKFL05]{CKFL05}
I.~D. Couzin, J.~Krause, N.~R. Franks, and S.~A. Levin.
\newblock Effective leadership and decision-making in animal groups on the
  move.
\newblock {\em Nature}, 433:513--516, 2005.

\bibitem[CKL18]{CKL18}
L.~Cardelli, M.~Kwiatkowska, and L.~Laurenti.
\newblock Programming discrete distributions with chemical reaction networks.
\newblock {\em Nat. Comput.}, 17(1):131--145, 2018.

\bibitem[Cou09]{Cou09}
I.~D. Couzin.
\newblock Collective cognition in animal groups.
\newblock {\em Trends in cognitive sciences}, 13(1):36--43, 2009.

\bibitem[CY95]{CY95}
C.~Courcoubetis and M.~Yannakakis.
\newblock The complexity of probabilistic verification.
\newblock {\em Journal of the ACM}, 42(4):857--907, 1995.

\bibitem[dAH00]{ConcOmRegGames}
L.~de~Alfaro and T.~A. Henzinger.
\newblock Concurrent omega-regular games.
\newblock In {\em Proc. of LICS: Logic in Computer Science}, pages 141--154.
  IEEE, 2000.

\bibitem[dAHK07]{AHK07}
L.~de~Alfaro, T.~A. Henzinger, and O.~Kupferman.
\newblock Concurrent reachability games.
\newblock {\em Theoretical Computer Science}, 386(3):188--217, 2007.

\bibitem[DMS19]{DMS19}
L.~Doyen, T.~Massart, and M.~Shirmohammadi.
\newblock The complexity of synchronizing {M}arkov decision processes.
\newblock {\em Journal of Computer and System Sciences}, 100:96--129, 2019.

\bibitem[EL00]{EKVY00}
M.~Elowitz and S.~Leibler.
\newblock A synthetic oscillatory network of transcriptional regulators.
\newblock {\em Nature}, 403(335-338), 2000.

\bibitem[Esp14]{Esp14}
J.~Esparza.
\newblock Keeping a crowd safe: On the complexity of parameterized verification
  (invited talk).
\newblock In {\em Proc. of STACS: Symposium on Theoretical Aspects of Computer
  Science}, volume~25 of {\em LIPIcs}, pages 1--10. Schloss Dagstuhl -
  Leibniz-Zentrum f{\"{u}}r Informatik, 2014.

\bibitem[Isa12]{Isa12}
V.~V. Isaeva.
\newblock Self-organization in biological systems.
\newblock {\em Biology Bulletin of the Russian Academy of Sciences},
  39:110--118, 2012.

\bibitem[JS07]{AFA1}
P.~Jancar and Z.~Sawa.
\newblock A note on emptiness for alternating finite automata with a one-letter
  alphabet.
\newblock {\em Inf. Process. Lett.}, 104(5):164--167, 2007.

\bibitem[Kan92]{Kan92}
R.~Kannan.
\newblock Lattice translates of a polytope and the frobenius problem.
\newblock {\em Combinatorica}, 12:161--177, 1992.

\bibitem[KVAK10]{KVAK10}
V.~A. Korthikanti, M.~Viswanathan, G.~Agha, and Y.~Kwon.
\newblock Reasoning about {MDP}s as transformers of probability distributions.
\newblock In {\em Proc. of QEST: Quantitative Evaluation of Systems}, pages
  199--208. IEEE Computer Society, 2010.

\bibitem[Mar98]{Mar98}
D.~A. Martin.
\newblock The determinacy of {B}lackwell games.
\newblock {\em The Journal of Symbolic Logic}, 63(4):1565--1581, 1998.

\bibitem[MHC03]{MHC03}
O.~Madani, S.~Hanks, and A.~Condon.
\newblock On the undecidability of probabilistic planning and related
  stochastic optimization problems.
\newblock {\em Artif. Intell.}, 147(1-2):5--34, 2003.

\bibitem[Mye16]{Mye16}
C.~J. Myers.
\newblock {\em Engineering genetic circuits}.
\newblock CRC Press, 2016.

\bibitem[NDS{\etalchar{+}}16]{NDS16}
A.~A.~K. Nielsen, B.~S. Der, J.~Singh, P.~Vaidyanathan, V.~Paralanov, E.~A.
  Strychalski, D.~Ross, D.~Densmore, and C.~A. Voigt.
\newblock Genetic circuit design automation.
\newblock {\em Science}, 352(6281)(aac7341), 2016.

\bibitem[Paz71]{PAZBook}
A.~Paz.
\newblock {\em Introduction to probabilistic automata}.
\newblock Academic Press, 1971.

\bibitem[PR90]{PR90}
A.~Pnueli and R.~Rosner.
\newblock Distributed reactive systems are hard to synthesize.
\newblock In {\em Proc. of FOCS: Foundation of Computer Science}, pages
  746--757, 1990.

\bibitem[Rei84]{Reif84}
John~H. Reif.
\newblock The complexity of two-player games of incomplete information.
\newblock {\em Journal of Computer and System Sciences}, 29(2):274--301, 1984.

\bibitem[Sch14]{Schewe14}
S.~Schewe.
\newblock Distributed synthesis is simply undecidable.
\newblock {\em Inf. Process. Lett.}, 114(4):203--207, 2014.

\bibitem[Shi14]{Shi14}
M.~Shirmohammadi.
\newblock {\em Qualitative analysis of probabilistic synchronizing systems}.
\newblock PhD thesis, Universit\'e Libre de Bruxelles, 2014.

\bibitem[SvdB18]{V18}
M.~Shirmohammadi and M.~van~den Bogaard.
\newblock Personal communication, 2018.

\bibitem[Tho97]{Thomas97}
W.~Thomas.
\newblock Languages, automata, and logic.
\newblock In {\em Handbook of Formal Languages}, volume 3, Beyond Words,
  chapter~7, pages 389--455. Springer, 1997.

\bibitem[Var85]{Vardi-focs85}
M.~Y. Vardi.
\newblock Automatic verification of probabilistic concurrent finite-state
  programs.
\newblock In {\em Proc. of FOCS: Foundations of Computer Science}, pages
  327--338. IEEE Computer Society, 1985.

\bibitem[VLN19]{VLN19}
O.~Vo, H.-M. Lee, and D.~Na.
\newblock Synthetic bacteria for therapeutics.
\newblock {\em Journal of Microbiology and Biotechnology}, 29(6):845--855,
  2019.

\end{thebibliography}
\end{document}